\newcommand{\keywords}[1]{{\bf Keywords:} {#1}}
\newcommand{\JEL}[1]{{\bf JEL:} {#1}}
\newtheorem{thm}{theorem}[section]
\newtheorem{proposition}[thm]{Proposition}
\newtheorem{lemma}[thm]{Lemma}
\newenvironment{proof}{{\bf Proof. }}{\hfill$\Box$}
\newenvironment{MyProof}[1]{{\bf Proof {#1}}}{\hfill$\Box$}
\newcommand{\MakeTitle}{\maketitle\newcommand{\and}{$\cdot$ }}
\title{Seasonal Stochastic Volatility and the Samuelson Effect in Agricultural Futures Markets
    \thanks{
		We would like to thank
        Carol Alexander, Roy Cerqueti, David Criens, Ennio Fedrizzi, Ren\'{e} Flacke, Yevhen Havrylenko,
        Kristofer J\"{u}rgensen, R\"{u}diger Kiesel, Fran\c{c}ois Le Grand, Cassio Neri,
        and participants at the
        TU Munich 2017 Innovations in Insurance, Risk- and Asset Management Conference,
        the Nantes 2018 Commodity Markets Winter Workshop,
        the Duisburg-Essen Energy and Finance Seminar,
        the 35th Annual Conference of the French Finance Association in Paris,
        and the Commodity and Energy Markets Association Annual Meeting 2018 at La Sapienza in Rome,
        for helpful and stimulating comments, discussions and suggestions.
        Part of this paper was written while Lorenz Schneider was KPMG Visiting Professor at the Technische Universit\"{a}t M\"{u}nchen.
        A first version of this paper was initially circulated in June 2015 under the title
        ``Seasonal Stochastic Volatility and Correlation Together with the Samuelson Effect in Commodity Futures Markets''.	
	}
}
\author{
	Lorenz Schneider
	\thanks{EMLYON Business School, \texttt{schneider@em-lyon.com}.}
    \quad \quad
	Bertrand Tavin
	\thanks{EMLYON Business School, \texttt{tavin@em-lyon.com}.}
}
\date{\today}
\begin{document}

\MakeTitle

\begin{abstract} %
We introduce a multi-factor stochastic volatility model for commodities that incorporates seasonality and the Samuelson effect.
Conditions on the seasonal term under which the corresponding volatility factor is well-defined are given,
and five different specifications of the seasonality pattern are proposed.
We calculate the joint characteristic function of two futures prices for different maturities in the risk-neutral measure.
The model is then presented under the physical measure, and its state-space representation is derived, in order to estimate the parameters
with the Kalman filter for time series of corn, cotton, soybean, sugar and wheat futures from 2007 to 2017.
The seasonal model significantly outperforms the nested non-seasonal model in all five markets,
and we show which seasonality patterns are particularly well-suited in each case.
We also confirm the importance of correctly modelling the Samuelson effect in order to account for futures with different maturities.
Our results are clearly confirmed in a robustness check carried out with an alternative dataset of constant maturity futures for the same agricultural markets.

\bigskip

\keywords{Agricultural Commodities \and Seasonal Commodities \and Seasonal Volatility \and Stochastic Volatility
\and Samuelson Effect \and Time-Series Estimation \and Kalman Filter}

\bigskip





\JEL{C63 \and C51 \and C52 \and G13}
\end{abstract}


\section{Introduction}
\label{s:Introduction}

Seasonality is a well-known empirical feature of many commodities markets.
In the energy sector, among fossil fuels, natural gas futures curves,
and among refined products, gasoline, heating oil and fuel oil futures curves all typically display seasonality.
In the agricultural sector, almost all futures curves show seasonality due to harvest times and the seasons of the year.

It is important to distinguish from the outset between two types of seasonality:
seasonality of futures prices and seasonality of volatility of futures prices.

Regarding seasonality of prices, consider agricultural commodities such as corn, soybeans and wheat.
These tend to be in low supply in the months preceding the harvest, and in high supply after the harvest in summer.
This typically leads to relatively high futures prices of contracts with delivery months in late winter or spring,
and low prices of futures contracts with delivery months in the summer or early fall.
Therefore, when the prices of these contracts are plotted as a function of their maturity, they tend to rise and fall with the maturity in some seasonal way.
In other words, the futures curve shows seasonality.
However, the price of an individual futures contract with a given maturity should not rise and fall over time in any kind of seasonal way:
indeed, such a behaviour would lead to easy arbitrage opportunities.

Regarding seasonality of volatility, the situation is different in the sense that now an individual futures contract, with fixed maturity,
tends to go through phases of relatively high or low volatility according to a seasonal pattern.
To take again the example of agricultural commodities, the weather in the months leading up to the harvest has a direct impact on
its quality and quantity, and futures prices can fluctuate strongly as forecasts for the new crop change.
In contrast to this, weather patterns in winter tend to be of minor consequence for the harvest, and futures prices tend to fluctuate less strongly.

It follows from these empirical observations that for commodity models, seasonality is usually only an issue for the volatility, but not for the futures price itself.
Mathematically, individual futures prices are modelled as martingales in the risk-neutral measure, and martingales do not have a tendency to rise or fall in a pre-determined way.
\citet{Clark2014} and \citet{RoncoroniFusaiCummins2015} give general discussions and numerous examples of seasonality in various commodities markets.

Traditionally, there are two approaches to modelling the prices of futures contacts: futures-based models and spot-based models.
An advantage of futures-based models is that since the futures price curve is an input of the model,
any arbitrage-free shape of the initial futures curve can be accommodated, including any type of seasonality.
In contrast, a first step for spot-based models is to make them fit the initial futures curve, which uses up model parameters and doesn't necessarily
lead to satisfactory results.

\citet{Sorensen2002} studies the modelling of seasonality in corn, soybean and wheat futures markets.
Analysis of a large data set of CBOT futures prices data from 1972 to 1997 confirms clearly that futures prices exhibit a seasonality.
Another feature that is suggested by the data is seasonal behaviour of the futures price volatilities.
In this vein, \citet{RichterSorensen2002} propose a model for the spot price of soybeans based on seasonal stochastic volatility.
\citet{GemanNguyen2005} also introduce a spot-based model for soybean prices with seasonality both for the price level and the (possibly stochastic) volatility level.
\citet{BackProkopczukRudolf2013} analyze data from corn, soybean, heating oil and natural gas markets and compare various spot-based models with deterministic seasonal volatility.
They conclude that a volatility with seasonality is an important feature when valuing options on futures in these markets.
\citet{ArismendiBackProkopczukPaschkeRudolf2016} also study a futures-based model with seasonal stochastic volatility, which is based on the \citet{Heston1993} stochastic volatility model
with deterministic, seasonal mean-reversion level in the square-root process followed by the variance.
\citet{SchmitzWangKimn2013} study calendar spread options in agricultural grain markets relying on a joint Heston model for the two underlying futures contracts.
These two contracts share the same variance process, which has a constant mean-reversion level, and therefore does not display seasonality.
In the context of interest rates, the \citet{CoxIngersollRoss1985} (CIR) model has been extended to time-dependent parameters by \citet{Maghsoodi1996},
and the \citet{Heston1993} model with time-dependent parameters, including the correlation between the spot price and its variance, has been studied by \citet{BenhamouGobetMiri2010}.
Let us also note that in the context of electricity markets, \citet{LuciaSchwartz2002} give a detailed justification of the choice of seasonality function,
as do \citet{GemanRoncoroni2006}.

In parallel to his remark about seasonality in futures prices, \citet{Sorensen2002} confirms the \citet{Samuelson1965} hypothesis that
``the variations of distant maturity futures are lower than nearby futures prices.''
We call this pattern the \textit{Samuelson effect.} 
Popular futures-based models that incorporate this effect are those of \citet{ClewlowStrickland1999_1,ClewlowStrickland1999_2}.
The volatility functions used in these models are deterministic.
\citet{SchneiderTavin2018} extend the multi-factor model of \citet{ClewlowStrickland1999_2} to incorporate stochastic volatility.
Not only is stochastic volatility an incontestible empirical feature of prices in futures markets,
its inclusion also allows to calibrate the model to option volatility smiles and skews typically seen in futures option markets.
In agricultural markets, a reflection of stochastic volatility is the introduction in 2011 and 2012 of three volatility indices on the CBOE/CBOT:
the Corn Volatility Index (CIV), Soybean Volatility Index (SIV), and the Wheat Volatility Index (WIV).

In this paper, we extend the model introduced in \citet{SchneiderTavin2018} to incorporate seasonal trends in the stochastic variance processes.
To achieve this, we begin in Section \ref{s:SeasonalStochasticVolatility} by studying the mathematical conditions to impose on the seasonality function
to guarantee that the generalized Cox-Ingersoll-Ross (CIR) process of \citet{CoxIngersollRoss1985} retains important features,
such as existence and uniqueness of a strong solution, and positivity.
These conditions appear to be not only interesting from a theoretical point of view, but also useful in practice:
different markets may need to be modelled with different seasonality patterns for the volatility, and we therefore propose five seasonality patterns.

We then introduce the model with seasonal volatility in Section \ref{s:SeasonalStochasticVolatilityModelForCommodityFutures} in the risk-neutral measure and show how,
by an extension of the results in \citet{SchneiderTavin2018}, the joint characteristic function (c.f.) of the log-returns of two futures prices can be obtained.
It turns out that the Riccati ordinary differential equation (ODE) for the first function $A$ is not affected,
and only the integral ODE for the second function $B$ depends on the time-dependent, deterministic mean-reversion level and is altered.
Therefore, the same closed-form solution for $A$ as in \citet{SchneiderTavin2018} can be used.
With the joint c.f. at hand, European options and calendar spread options on futures can be priced easily and rapidly.
We conclude the section by introducing market prices of risk and presenting the model in the physical measure.

In Section \ref{s:StateSpaceRepresentation}, we give a state-space representation of the model in order for it to be used together with the Kalman filter.
When the observed time-series are futures prices or returns, the model is conditionally Gaussian and fits into the classical Kalman filter setup,
and the log-likelihood function for a given set of model parameters is readily evaluated.

In Section \ref{s:ModelEstimation} we present our data for corn, cotton, soybean, sugar and wheat futures contracts.
Since our focus is also on the Samuelson effect, we include all available liquid maturities in our sample,
which span roughly two years from the nearby contract to the last one.
We then estimate five seasonal versions and the non-seasonal version of our model in these agricultural markets.
We see that the seasonal model significantly outperforms the nested non-seasonal model in all five markets,
and show which seasonality patterns are particularly well-suited for each market.
Also, the importance of correctly modelling the Samuelson effect in order to account for futures with different maturities is clearly confirmed by our results.

In order to test the robustness of our results, we have carried out a second parameter estimation based on constant-maturity returns series calculated from the same futures data.
This approach has been proposed by \citet{Galai1979} for building an index for call options, and by \citet{AlexanderKorovilas2013} in the context of VIX futures.
The results obtained with this alternative dataset clearly confirm that our estimations are robust
with respect to which type of time series - concatenated or constant-maturity - we use.

Section \ref{s:Conclusion} concludes,
Appendices \ref{a:ProofCIR}, \ref{a:ProofCF} and \ref{a:Transforms} contain proofs and auxiliary results,
and Appendix \ref{a:altdata_results} contains our estimation results for the constant-maturity futures series.


\section{Seasonal Stochastic Volatility}
\label{s:SeasonalStochasticVolatility}

\subsection{The CIR Process with Time-Dependent Drift}
\label{ss:TheCIRProcessWithTimeDependentDrift}

To our knowledge, \citet{HullWhite1990} were the first to consider extending the \citet{CoxIngersollRoss1985} (CIR) interest rate model to time-dependent coefficients.
They conclude that in this general case, it is no longer possible to obtain European bond option prices analytically.
\citet{Maghsoodi1996} also studies the ``extended'' CIR process in which the parameters $\kappa, \theta$ and $\sigma$ are time-dependent and finds, under certain conditions,
the unique strong solution to the SDE describing the evolution of the process.

In the context of the \citet{Heston1993} stochastic volatility model, the CIR process represents the variance process of a stock price or foreign-exchange rate.
\citet{BenhamouGobetMiri2010} study the ``time dependent Heston model'' and derive analytical formulas approximating European option prices.
In their setup, the mean-reversion parameter $\kappa$ is constant, but the parameters $\theta, \sigma$ and $\rho$ (giving the correlation between the stock price, or foreign-exchange rate, and
its variance) are all allowed to vary with time $t$.

In the model introduced here, we only let the mean-reversion level given by $\theta$ depend on time,
while the other parameters $\kappa > 0$ and $\sigma > 0$ (and later also $\rho$) remain constant.

Let $(\Omega, \mathcal{A}, {\mathbb P}, \mathcal{F})$ be a filtered probability space, and let $B = (B_t)_{t \geq 0}$ be a Brownian motion on this space.
Let $\mathcal{T} = \{ t_i, i = 1, ... \}$ be a set of times having only finitely many points in every bounded interval,
and let $\mathcal{Z} = \{ 0 \leq t_1 < t_2 < ... < t_i < ... \}$ be the partition of $\mathbb{R}_0^+$ defined by $\mathcal{T}$.
Finally, let the seasonality function $\theta: \mathbb{R}_0^+ \to \mathbb{R}^+$ be piecewise continuous with respect to $\mathcal{Z}$,
and assume that it is bounded from below and above by positive constants $\theta_{min}$ and $\theta_{max}$.

We will compare two processes $v$ (seasonal) and $\tilde{v}$ (non-seasonal), which are given, respectively, by the SDEs
\begin{align}
\label{CIR_SDE_TimeDependentTheta}
dv(t) &= \kappa \left( \theta(t) - v(t) \right) dt + \sigma \sqrt{v(t)} dB(t),
\\
\label{CIR_SDE_ConstantTheta}
d \tilde{v}(t) &= \kappa \left( \theta_{min} - \tilde{v}(t) \right) dt + \sigma \sqrt{\tilde{v}(t)} dB(t),
\end{align}
with identical parameters $\kappa > 0, \sigma > 0$ and initial conditions $0 < \tilde{v}(0) = \tilde{v}_0 \leq v(0) = v_0$.

It is well known that \eqref{CIR_SDE_ConstantTheta} has a unique strong solution.
The following result describes the solution to \eqref{CIR_SDE_TimeDependentTheta}.
\begin{proposition}
\label{Prop:CIR_SDE_Solution}
Assume that the seasonality function $\theta$ is piecewise continuous w.r.t. the partition $\mathcal{Z}$ of $\mathbb{R}_0^+$,
and bounded by positive constants $\theta_{min}$ and $\theta_{max}$, i.e. for all $t \geq 0, 0 < \theta_{min} \leq \theta(t) \leq \theta_{max}$.
Let the processes $v$ and $\tilde{v}$ be given by \eqref{CIR_SDE_TimeDependentTheta} and \eqref{CIR_SDE_ConstantTheta}, respectively.
Then:
\begin{enumerate}
\item
The process \eqref{CIR_SDE_TimeDependentTheta} has a unique strong solution with continuous sample paths.
\item
${\mathbb P} \left[ \tilde{v}_t \leq v_t, \forall t \geq 0 \right] = 1$.
\item
If the Feller condition $\sigma^2 < 2 \kappa \theta_{min}$ is satisfied for $\theta_{min}$, then
the process $v$ is strictly positive.
\end{enumerate}
\end{proposition}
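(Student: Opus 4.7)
The plan is to reduce each claim to a classical result, handling the time-dependence of $\theta$ by exploiting the piecewise structure. For part (i), I would work interval by interval on the partition $\mathcal{Z}$. On any $[t_i, t_{i+1})$ the function $\theta$ is continuous and bounded, so the SDE \eqref{CIR_SDE_TimeDependentTheta} has Lipschitz drift in $v$ and $\frac{1}{2}$-H\"older diffusion coefficient $\sigma\sqrt{v}$. Pathwise uniqueness therefore follows from the Yamada--Watanabe criterion, and weak existence together with pathwise uniqueness yields a unique strong solution with continuous paths on $[t_i, t_{i+1}]$; alternatively, one can directly invoke Maghsoodi's construction for the extended CIR process. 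Since $\mathcal{Z}$ has only finitely many points in every bounded interval, one concatenates the pieces, using the terminal value on $[t_i, t_{i+1}]$ as the initial condition on $[t_{i+1}, t_{i+2}]$, to obtain a strong solution with continuous sample paths on all of $\mathbb{R}_0^+$.

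For part (ii), I would invoke a one-dimensional comparison theorem (for instance, Ikeda--Watanabe, Chapter VI, or Revuz--Yor IX.3.7) applied to $v$ and $\tilde v$ driven by the same Brownian motion $B$. The hypotheses are:
\begin{enumerate}
\item the two SDEs share the same diffusion coefficient $x \mapsto \sigma\sqrt{x}$, which is $\frac{1}{2}$-H\"older and hence satisfies the Yamada condition required for the comparison principle;
\item for every $t \geq 0$ and every $x \geq 0$, the drift of $v$ dominates that of $\tilde v$: indeed $\kappa(\theta(t)-x) \geq \kappa(\theta_{\min}-x)$ because $\theta(t) \geq \theta_{\min}$;
\item the initial conditions satisfy $v_0 \geq \tilde v_0$.
\end{enumerate}
The comparison theorem then delivers $\mathbb{P}[\tilde v_t \leq v_t \ \forall t \geq 0]=1$.

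For part (iii), strict positivity follows immediately from (ii) combined with the classical Feller boundary-classification result: under $\sigma^2 < 2\kappa\theta_{\min}$, the process $\tilde v$ solving \eqref{CIR_SDE_ConstantTheta} started from $\tilde v_0 > 0$ satisfies $\tilde v_t > 0$ for all $t \geq 0$ almost surely, since $0$ is an inaccessible boundary. Hence $v_t \geq \tilde v_t > 0$ almost surely.

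The main obstacle I anticipate is the verification in part (i) that the concatenation across the (possibly infinite) partition $\mathcal{Z}$ yields a bona fide global strong solution, rather than just a patchwork; the local-boundedness assumption on $\theta$ rules out explosion, since moment bounds on each finite interval $[0,T]$ are uniform in the pieces. The other subtle point is that the comparison theorem in part (ii) genuinely requires the shared $\frac{1}{2}$-H\"older diffusion coefficient, which is exactly why the statement couples $v$ and $\tilde v$ to the \emph{same} Brownian motion $B$; I would want to emphasize this in the write-up.
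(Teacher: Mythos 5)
Your proof is correct and follows the same overall strategy as the paper: Yamada--Watanabe for part (i), a comparison with the constant-level process \eqref{CIR_SDE_ConstantTheta} for part (ii), and Feller's boundary classification plus (ii) for part (iii). The only real difference is where you deploy the piecewise-in-time argument. The paper handles (i) globally, with no splitting at the discontinuities of $\theta$, by invoking Proposition 5.2.13 of \citet{KaratzasShreve1988}: the Yamada--Watanabe conditions constrain regularity only in the space variable, and measurable, bounded time-dependence of the drift is already allowed, so your interval-by-interval concatenation (and the non-explosion check) is harmless but unnecessary. Conversely, for (ii) the paper's cited comparison result (Proposition 5.2.18 of \citet{KaratzasShreve1988}) requires a drift that is continuous in time, which is exactly why the paper does the piecewise argument there: compare on each $[t_i,t_{i+1}[$, pass through the discontinuity using continuity of the sample paths, and use that $\mathcal{T}$ has no accumulation points. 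Your route instead quotes a comparison theorem robust to rough time-dependence; note that the Ikeda--Watanabe version assumes coefficients continuous in $(t,x)$ and so would not directly cover the sawtooth or spiked specifications, whereas the Revuz--Yor form (diffusion coefficient satisfying the Yamada modulus condition, drifts Borel with at least one Lipschitz in $x$, here both are, with constant $\kappa$) does apply, modulo a routine localization since the drifts are unbounded in $x$. Either way the conclusion ${\mathbb P}[\tilde v_t\leq v_t,\ \forall t\geq 0]=1$ follows, and your part (iii) is identical to the paper's. If you keep your version of (ii), state explicitly which comparison theorem you use and that its hypotheses tolerate the discontinuities of $\theta$; that is the one point where a careless citation would leave a gap.
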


We prove this result in Appendix \ref{a:ProofCIR}.

Note that if the Feller condition is violated, then $\tilde{v}$ can possibly reach $0$, but it still cannot become negative.
The piecewise continuity condition on $\theta$ means that discontinuous specifications of the mean-reversion level pose no problems.


\subsection{Seasonality Functions}
\label{ss:SeasonalityFunctions}

We present five types of seasonality functions $\theta$ that can be used as parametric forms to model seasonal variations of the volatility.
These functions are parametric and work with three parameters: $a, b$ and $t_0$.
The parameter $a$ determines the basic volatility level, $b$ the magnitude of the seasonality pattern,
and $t_0$ the time of the year when the volatility reaches its maximum.

The various seasonality patterns will allow for greater flexibility in fitting given futures markets,
since the reasons underpinning the seasonality phenomena in volatility may vary from one market to another.
The first two patterns considered below are smooth and are based on the sine-function.
The three others have points of non-differentiability or discontinuity, and may be used to represent a less regular evolution of the volatility.

\begin{enumerate}
\item
The \underline{sinusoidal pattern} is given, with $a \geq b > 0$ and $t_0 \in [0,1[$, by
\begin{equation}
\theta(t) = a + b \cos{ \left( 2 \pi \left( t - t_0 \right) \right)}.
\label{pattern:sinusoidal}
\end{equation}
\item
The \underline{exponential-sinusoidal pattern} is given, with $a, b > 0$ and $t_0 \in [0,1[$, by
\begin{equation}
\theta(t) = a e^{b \cos {\left( 2 \pi \left( t - t_0 \right) \right) }}.
\label{pattern:exponential-sinusoidal}
\end{equation}
This parametric form for $\theta$ is used in \citet{ArismendiBackProkopczukPaschkeRudolf2016}.
\item
The \underline{sawtooth pattern} is given, with $a, b > 0$ and $t_0 \in [0,1[$, by
\begin{equation}
\theta(t) = a + b \left( t - t_0 - \left \lfloor t - t_0 \right \rfloor \right),
\label{pattern:sawtooth}
\end{equation}
where $\left\lfloor . \right\rfloor$ denotes the floor function.
\item
The \underline{triangle pattern} is given, with $a,b > 0$ and $t_0 \in [0,1[$, by
\begin{equation}
\theta(t) = a + b \left| \frac{1}{2} - \left(t - t_0 - \left \lfloor t - t_0 \right \rfloor \right) \right|.
\label{pattern:triangle}
\end{equation}
\item
The \underline{spiked pattern} is given, with $a,b > 0$ and $t_0 \in [0,1[$, by
\begin{equation}
\theta(t) = a + b\left(\frac{2}{1+\left|\sin(\pi(t-t_0)) \right|}-1 \right)^2.
\label{pattern:spiked}
\end{equation}
This parametric form for $\theta$ can be found in \citet{GemanRoncoroni2006}, where it is used to model the time-varying intensity of a jump process.
\end{enumerate}

Figure \ref{Fig:theta_plots} presents the plots of these seasonal patterns with $t_0 = \frac{7}{12}$.

\begin{figure}[H]
\centering
	\includegraphics[height=5.0cm]{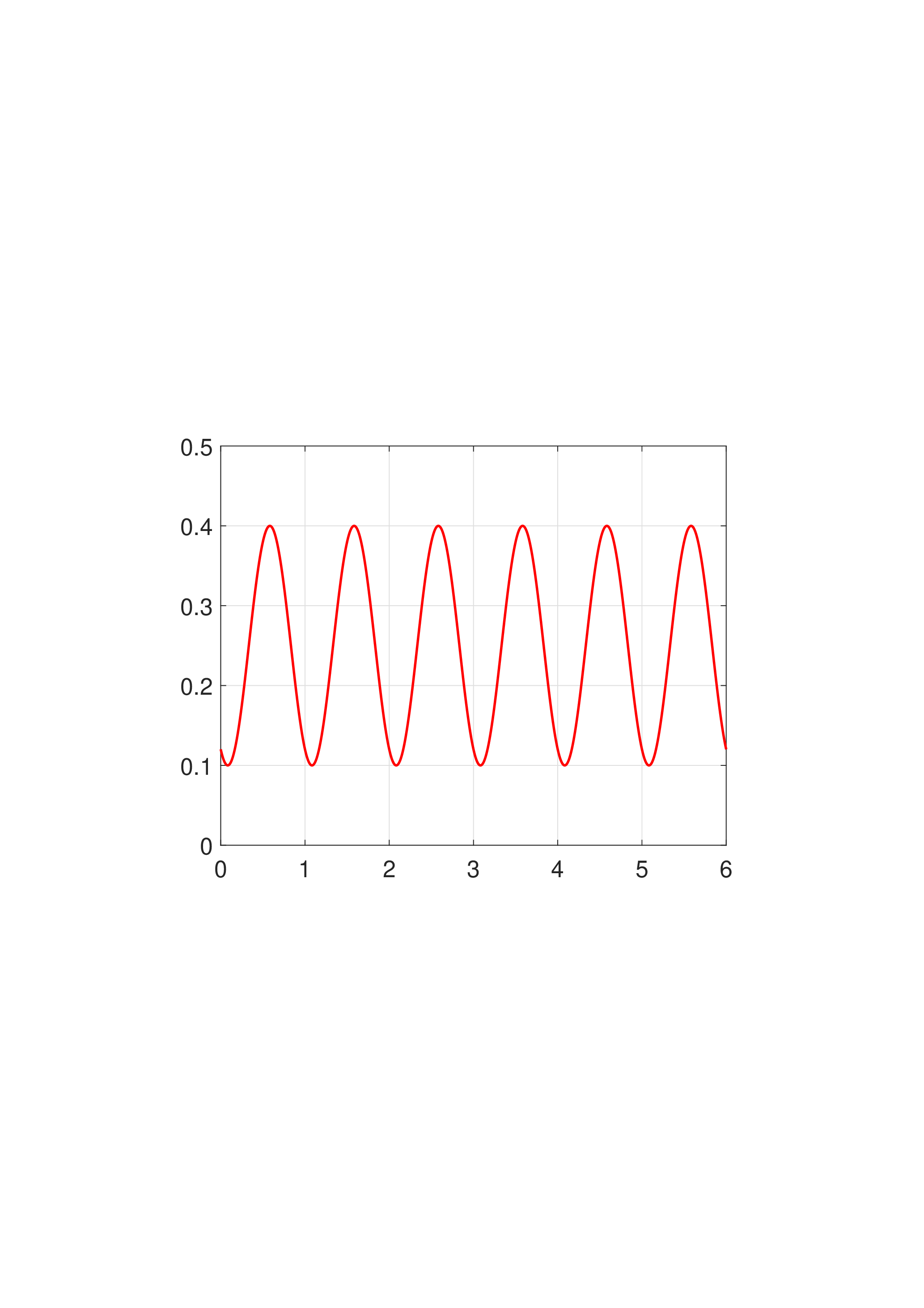} \qquad
	\includegraphics[height=5.0cm]{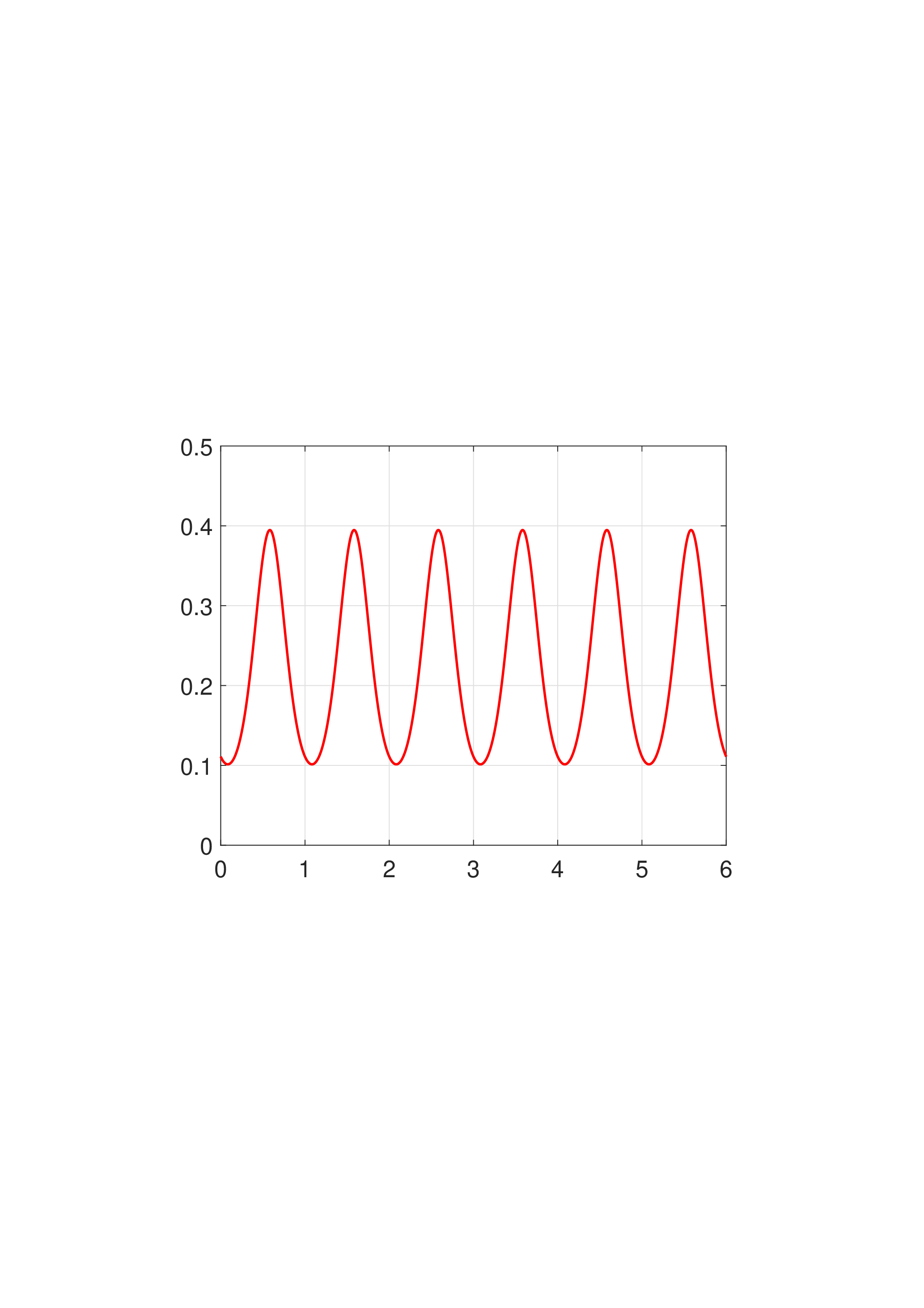}
	\vspace{0.8pc}
	
	\includegraphics[height=5.0cm]{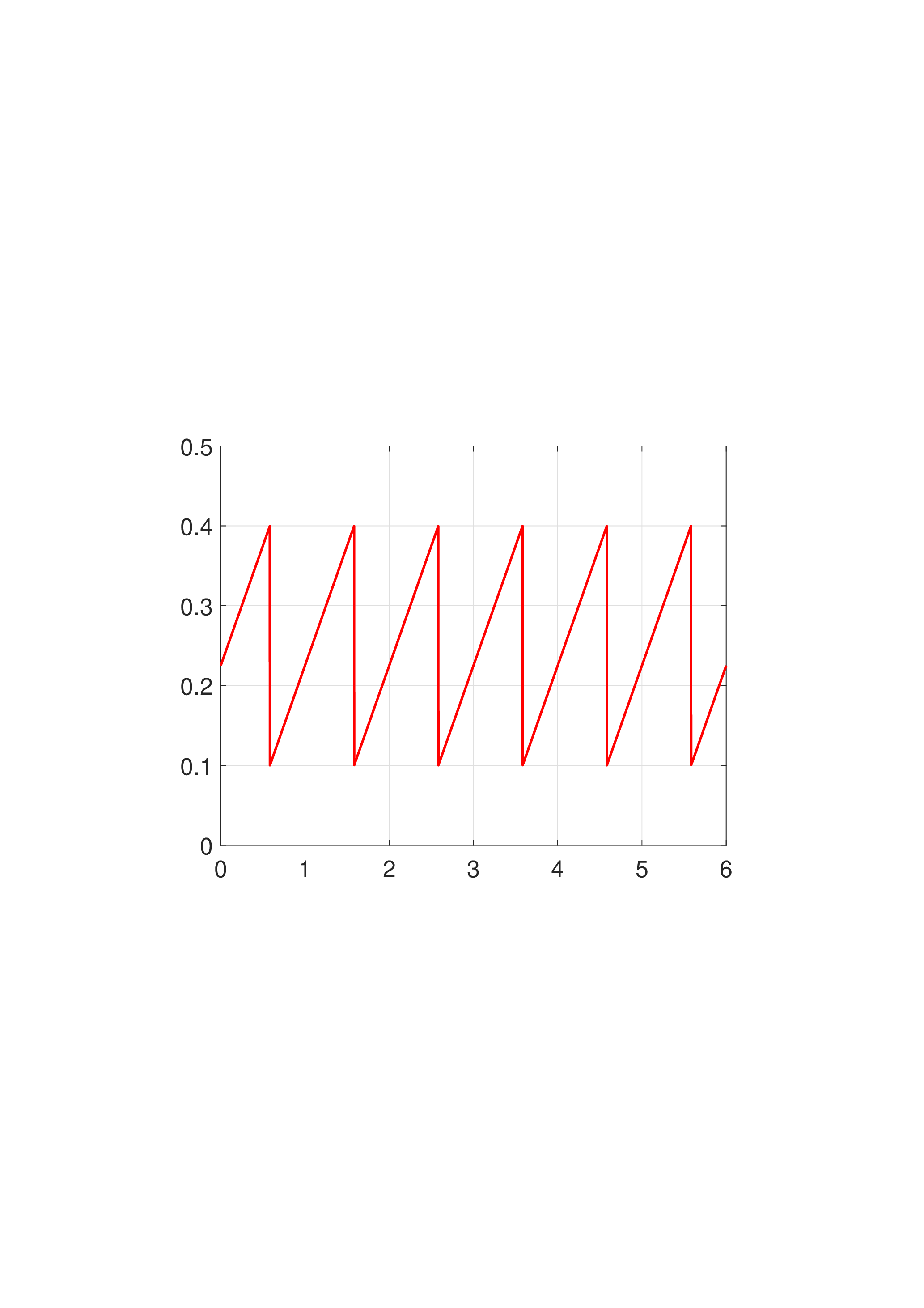} \qquad
	\includegraphics[height=5.0cm]{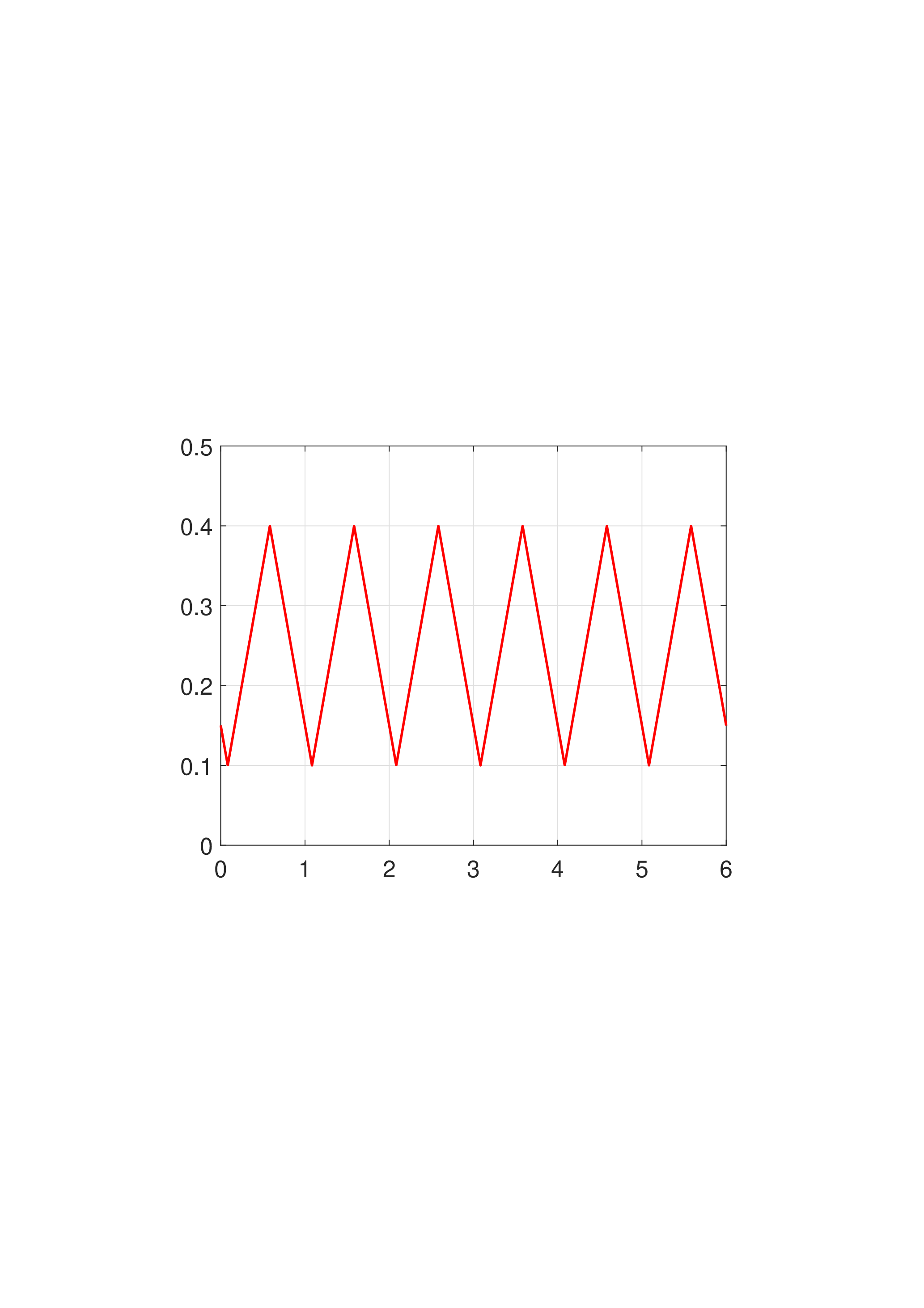}
	\vspace{0.8pc}
	
	\includegraphics[height=5.0cm]{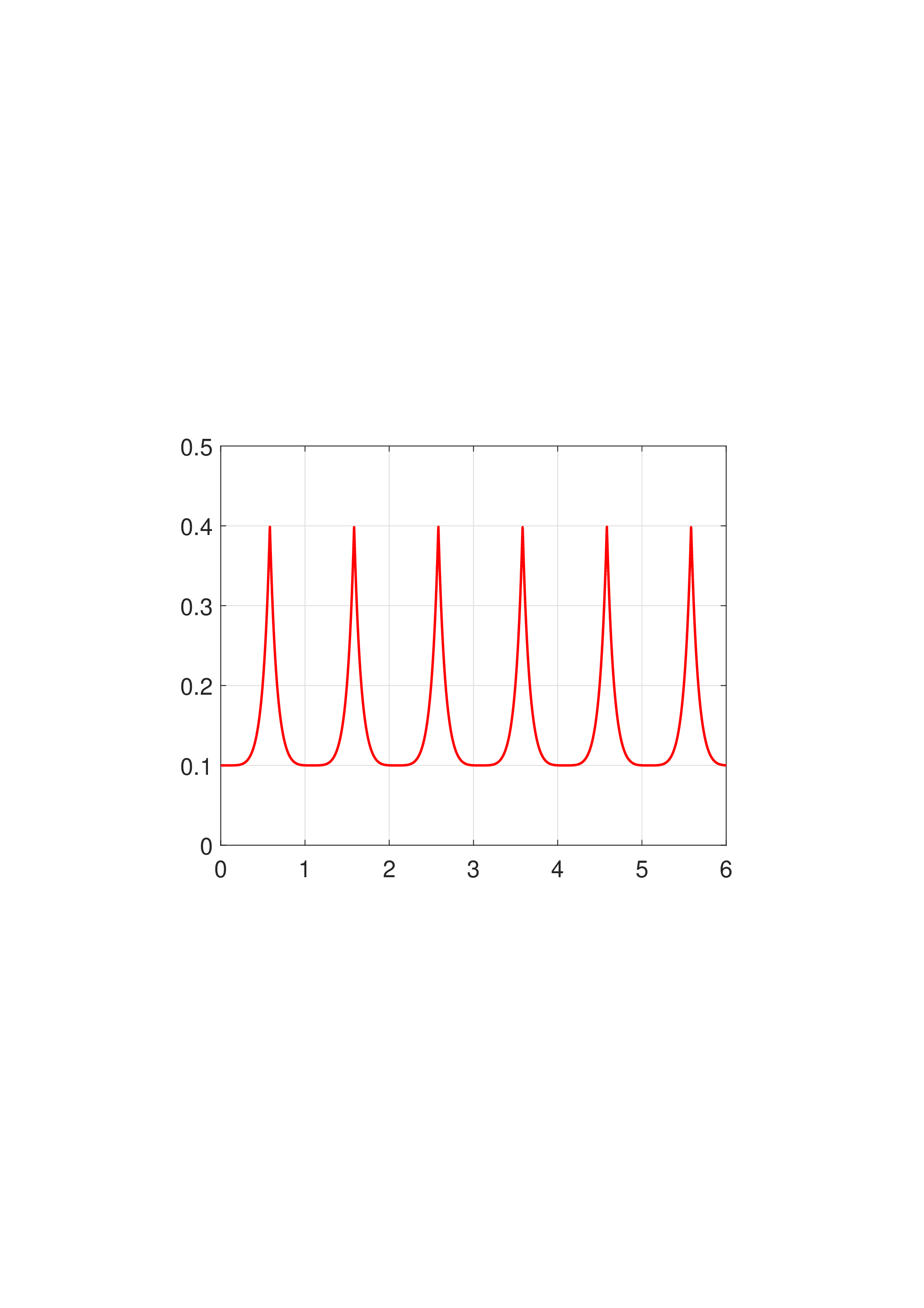}
	\caption{\label{Fig:theta_plots} Examples of seasonal patterns of $\theta(t)$ as defined in equations \eqref{pattern:sinusoidal} - \eqref{pattern:spiked}, with $t_0 = \frac{7}{12}$.
\textit{Upper left}: sinusoidal pattern ($a = 0.25, b = 0.15$). \textit{Upper right}: exponential-sinusoidal pattern ($a = 0.20, b = 0.68$).
\textit{Center left}: sawtooth pattern ($a = 0.10, b = 0.30$). \textit{Center right}: triangle pattern ($a = 0.10, b = 0.60$).
\textit{Lower}: spiked pattern ($a = 0.10, b = 0.30$).}
\end{figure}

The exponential-sinusoidal pattern \eqref{pattern:exponential-sinusoidal} is a popular choice in commodity modelling,
as it is smooth and highly tractable from a numerical point of view.
In contrast to the sinusoidal pattern \eqref{pattern:sinusoidal}, it is always strictly positive.
A more subtle advantage over \eqref{pattern:sinusoidal} could be that relatively speaking, due to the convexity of the exponential function,
it ``spends more time'' at low levels and ``less time'' at high levels, and that this property reflects the behaviour of the volatility more realistically.
The extreme case of such behaviour is the spiked pattern; however, this seasonality function is not everywhere differentiable.
The sawtooth pattern \eqref{pattern:sawtooth} is an example of the volatility gradually increasing before the harvest,
and then dropping on the harvest date before the new crop is sown or planted.
In Section \ref{s:ModelEstimation}, we carry out a statistical comparison of these seasonal volatility specifications
and see which ones are best suited for a given agricultural futures market.

In the following section, we will encounter integral transforms $\hat{\theta}$ of $\theta$ in several expressions such as the characteristic function.
For $T>0$ and $\lambda \in \mathbb{R}$, $\hat{\theta}$ is given by
\begin{equation}
\label{thetaTransform}
\hat{\theta}_T(\lambda) = \int^{T}_{0}{\theta(t)e^{\lambda t}dt}.
\end{equation}
In Appendix \ref{a:Transforms} we give closed-form expressions of $\hat{\theta}_T$ for three of the above seasonality functions.


\section{Modelling Commodity Futures with Seasonal Stochastic \\ Volatility and the Samuelson Effect}
\label{s:SeasonalStochasticVolatilityModelForCommodityFutures}

\subsection{The Model in the Risk-Neutral Measure ${\mathbb Q}$}
\label{ss:ModelInTheRiskNeutralMeasure}

We begin by giving a mathematical description of our model under the risk-neutral measure ${\mathbb Q}$.
Let $n \geq 1$ be an integer, and let $B_1^{\mathbb Q}, ..., B_{2n}^{\mathbb Q}$ be Brownian motions under ${\mathbb Q}$.
Let $T_m$ be the maturity of a given futures contract.
The futures price $F(t, T_m)$ at time $t, 0 \leq t \leq T_m$, is assumed to follow the stochastic differential equation (SDE)
\begin{equation}
\label{FuturesSDE_Q}
dF(t, T_m) = F(t, T_m) \sum_{j=1}^n e^{-\lambda_j(T_m - t)} \sqrt{v_j(t)} dB_j^{\mathbb Q}(t), \; F(0, T_m) = F_{m,0} > 0.
\end{equation}
The processes $v_j, j=1, ..., n,$ are stochastic variance processes with time-dependent seasonal mean-reversion level assumed to follow the SDE
\begin{equation}
\label{VarianceSDE_Q}
dv_j(t) = \kappa_j \left( \theta_j(t) - v_j(t) \right) dt + \sigma_j \sqrt{v_j(t)} dB_{n+j}^{\mathbb Q}(t), \; v_j(0) = v_{j,0} > 0.
\end{equation}
Various possibilities of the specification of the seasonal mean-reversion level functions $\theta_j: \mathbb{R}_0^+ \to \mathbb{R}^+$
are presented and discussed in Section \ref{ss:SeasonalityFunctions}.
Note that the initial futures curve $F(0, T_m), m = 1, 2, ...,$ is exogenous in our model and can therefore accommodate any seasonal pattern shown by the futures prices.

For the correlations, we assume
\begin{equation}
\label{FuturesVarianceCorrelations}
\langle dB_j^{\mathbb Q}(t), dB_{n+j}^{\mathbb Q}(t) \rangle = \rho_j dt, -1 < \rho_j < 1, j=1, ..., n,
\end{equation}
and that otherwise the Brownian motions $B_j^{\mathbb Q}, B_k^{\mathbb Q}, k \neq j, j + n,$ are independent of each other.
As we will see, this assumption has as a consequence that the characteristic function factors into $n$ separate expectations,
and thus keeps the model analytically tractable.

For fixed $T_m$, the futures log-price $\ln F(t, T_m)$ under ${\mathbb Q}$ follows the SDE
\begin{equation}
\label{LogFuturesSDE_Q}
d\ln F(t, T_m) = \sum_{j=1}^n \left( e^{-\lambda_j(T_m - t)} \sqrt{v_j(t)} dB_j^{\mathbb Q}(t) - \frac{1}{2} e^{-2\lambda_j(T_m - t)} v_j(t) dt \right),
\; \ln F(0, T_m) = \ln F_{m,0}.
\end{equation}
Integrating \eqref{LogFuturesSDE_Q} from time $0$ up to a time $T, T \leq T_m$, gives
\begin{equation}
\label{LogFuturesIntegratedSDE_Q}
\ln F(T, T_m) -  \ln F(0, T_m) = \sum_{j=1}^n \int_0^T e^{-\lambda_j(T_m - t)} \sqrt{v_j(t)} dB_j^{\mathbb Q}(t) - \frac{1}{2} \sum_{j=1}^n \int_0^T e^{-2\lambda_j(T_m - t)} v_j(t) dt.
\end{equation}

We define the log-return between times $0$ and $T$ of a futures contract with maturity $T_m$ as
$$
X_m(T) := \ln \left( \frac{F(T, T_m)}{F(0, T_m)} \right).
$$

\subsection{The Joint Characteristic Function}
\label{ss:JointCharacteristicFunction}

In financial applications such as option pricing, the joint characteristic function $\phi$ of two log-returns $X_1(T), X_2(T)$ plays an important role.
For $u = (u_1, u_2) \in {\mathbb C}^2$, $\phi$ is given by
\begin{equation}
\label{jointCharacteristicFunction_returns}
\phi(u)
= \phi(u; T, T_1, T_2)
= {\mathbb{E^Q}} \left[ \exp \left( i \sum_{k=1}^2 u_k X_k(T) \right) \right].
\end{equation}
The joint characteristic function $\Phi$ of the futures log-prices $\ln F(T, T_1), \ln F(T, T_2)$ is then given by
\begin{equation}
\label{jointCharacteristicFunction_prices}
\Phi(u) = \exp \left( i \sum_{k=1}^2 u_k \ln F(0, T_k) \right) \cdot \phi(u).
\end{equation}
Note that futures prices in our model are not mean-reverting,
and that the log-price $\ln F(t, T_m)$ at time $t$ and the log-return $\ln F(T, T_m) - \ln F(t, T_m)$ are independent random variables.

In the following proposition, we show how the joint characteristic function $\phi$, and therefore also the single characteristic function $\phi_1$,
is given by a system of two ordinary differential equations (ODE).
\begin{proposition}
\label{Prop:JointCharacteristicFunction}
The joint characteristic function $\phi$ at time $T \leq T_1, T_2$ for the log-returns $X_1(T), X_2(T)$ of two futures contracts with maturities $T_1, T_2$ is given by
\begin{align*}
\phi(u) &= \phi(u; T, T_1, T_2)
\\
&=
\prod_{j=1}^n
\exp \left( -i \frac{\rho_j}{\sigma_j} f_{j,1}(u,0) \left( v_j(0) + \kappa_j \hat{\theta}_{j, T} \right) \right)
\exp \left( A_j(0,T) v_j(0) + B_j(0,T) \right),
\end{align*}
where
\begin{align*}
f_{j,1}(u,t)        &= \sum_{k=1}^2 u_k e^{- \lambda_j(T_k - t)}, \quad f_{j,2}(u,t) = \sum_{k=1}^2 u_k e^{-2\lambda_j(T_k - t)},
\\
q_j(u,t)            &= i \rho_j \frac{\kappa_j - \lambda_j}{\sigma_j} f_{j,1}(u,t) - \frac{1}{2} (1 - \rho_j^2) f_{j,1}^2(u,t) - \frac{1}{2} i f_{j,2}(u,t),
\\
\hat{\theta}_{j, T} &= \int_0^T \theta_j(t) e^{\lambda_j t} dt,
\end{align*}
and the functions $A_j: (t,T) \mapsto A_j(t,T)$ and $B_j: (t,T) \mapsto B_j(t,T)$ satisfy the two differential equations
\begin{align*}
\frac{\partial A_j}{\partial t} - \kappa_j A_j + \frac{1}{2} \sigma_j^2 A_j^2 + q_j &= 0,
\\
\frac{\partial B_j}{\partial t} + \kappa_j \theta_j(t) A_j &= 0,
\end{align*}
with $A_j(T,T) = i \frac{\rho_j}{\sigma_j} f_{j,1}(u,T), \; B_j(T,T) = 0.$

The single characteristic function $\phi_1$ at time $T \leq T_1$ for the log-return $X_1(T)$ of a futures contract with maturity $T_1$ is given by setting $u_2 = 0$
in the joint characteristic function.
\end{proposition}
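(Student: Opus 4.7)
The plan is to exploit the block-independence of the Brownian motions to factor $\phi$ into $n$ single-factor pieces and then, for each index $j$, apply a correlation decomposition followed by an affine/Feynman--Kac argument that produces the claimed Riccati system. I would begin by substituting \eqref{LogFuturesIntegratedSDE_Q} into the exponent $i\sum_k u_k X_k(T)$; with the abbreviations $f_{j,1}$ and $f_{j,2}$ it becomes $\sum_{j=1}^n\bigl(i\int_0^T f_{j,1}(u,t)\sqrt{v_j(t)}\,dB_j^{\mathbb Q}(t) - \tfrac{i}{2}\int_0^T f_{j,2}(u,t) v_j(t)\,dt\bigr)$. Since \eqref{FuturesVarianceCorrelations} makes the pairs $(B_j^{\mathbb Q}, B_{n+j}^{\mathbb Q})$ mutually independent across $j$, and $v_j$ is driven solely by $B_{n+j}^{\mathbb Q}$, the expectation factorizes and it suffices to compute each factor $\phi_j(u)$ separately.

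For each $j$ I would decorrelate by writing $B_j^{\mathbb Q} = \rho_j B_{n+j}^{\mathbb Q} + \sqrt{1-\rho_j^2}\,W_j$ with $W_j$ a Brownian motion independent of $B_{n+j}^{\mathbb Q}$, and use \eqref{VarianceSDE_Q} to rewrite $\sqrt{v_j(t)}\,dB_{n+j}^{\mathbb Q}(t)$ as $\sigma_j^{-1}(dv_j(t) - \kappa_j(\theta_j(t)-v_j(t))\,dt)$. Integration by parts on $\int_0^T f_{j,1}(u,t)\,dv_j(t)$, combined with the identities $\partial_t f_{j,1}(u,t) = \lambda_j f_{j,1}(u,t)$ and $f_{j,1}(u,t) = e^{\lambda_j t} f_{j,1}(u,0)$ (the latter turning $\int_0^T f_{j,1}(u,t)\theta_j(t)\,dt$ into exactly $f_{j,1}(u,0)\hat\theta_{j,T}$), extracts the announced deterministic prefactor $\exp(-i\tfrac{\rho_j}{\sigma_j}f_{j,1}(u,0)(v_j(0)+\kappa_j\hat\theta_{j,T}))$ together with a boundary contribution $i\tfrac{\rho_j}{\sigma_j}f_{j,1}(u,T)v_j(T)$ and a $v_j$-weighted integrand with coefficient $i\rho_j(\kappa_j-\lambda_j)/\sigma_j$. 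Conditioning on the path of $v_j$ and integrating out the Gaussian integral against the independent $W_j$ produces the $-\tfrac12(1-\rho_j^2)f_{j,1}^2$ term, so that the coefficient of $v_j$ in the remaining time integral is exactly $q_j(u,t)$.

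It then remains to evaluate ${\mathbb E^Q}\bigl[\exp(i\tfrac{\rho_j}{\sigma_j}f_{j,1}(u,T)v_j(T) + \int_0^T q_j(u,s)v_j(s)\,ds)\bigr]$. Following the standard affine-process strategy I would posit the ansatz $M_t = \exp(A_j(t,T)v_j(t) + B_j(t,T) + \int_0^t q_j(u,s)v_j(s)\,ds)$ and apply It\^o's formula; the drift of $dM_t/M_t$ vanishes if and only if $A_j$ satisfies $\partial_t A_j - \kappa_j A_j + \tfrac12\sigma_j^2 A_j^2 + q_j = 0$ and $B_j$ satisfies $\partial_t B_j + \kappa_j\theta_j(t)A_j = 0$. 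The terminal conditions $A_j(T,T) = i\rho_j f_{j,1}(u,T)/\sigma_j$ and $B_j(T,T) = 0$ are forced by matching $M_T$ with the target integrand, and the martingale identity ${\mathbb E^Q}[M_T] = M_0$ then yields $\exp(A_j(0,T)v_j(0) + B_j(0,T))$. Taking the product over $j$ delivers the announced formula, and the single-variable statement follows by setting $u_2 = 0$.

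The main obstacle I anticipate is justifying that $M_t$ is a genuine martingale rather than only a local one; this should follow from the uniform bound $\theta_j \leq \theta_{j,\max}$ given by Proposition \ref{Prop:CIR_SDE_Solution}, combined with standard exponential-moment estimates for the CIR process on $[0,T]$ that control the complex-valued exponentials arising here. A secondary technicality is that $\theta_j$ is only piecewise continuous with respect to $\mathcal{Z}$, so the ODE for $B_j$ must be integrated across each continuity interval separately; this does not affect $A_j$, whose Riccati ODE has no $\theta_j$ dependence and therefore admits the same closed-form solution as in the non-seasonal model of \citet{SchneiderTavin2018}, with $B_j$ then obtained by a single quadrature against $\theta_j$.
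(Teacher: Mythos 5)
Your proposal is correct and follows essentially the same route as the paper's proof (itself an extension of Proposition 2.1 of \citet{SchneiderTavin2018}): the factorization over $j$, the decorrelation, and the integration-by-parts step producing the prefactor with $\hat{\theta}_{j,T}$ coincide exactly with the paper's Lemma \ref{Lemma:StochasticIntegral}, and the reduction to the Riccati system is the same affine argument. The only (cosmetic) difference is that you verify the affine form via an exponential-martingale ansatz and It\^{o}'s formula, whereas the paper writes the Feynman--Kac PDE \eqref{hPDE} and invokes \citet{DuffiePanSingleton2000}; your explicit attention to the true-martingale issue is a point the paper leaves implicit.
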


We prove this result in Appendix \ref{a:ProofCF}.

Note that the Riccati ODE for the functions $A$ does not depend on $\theta$.
Therefore, the same closed-form solution for $A$ as in \citet{SchneiderTavin2018} can be used.
Of course, if $\theta$ is a constant function, then the joint characteristic function given above is identical to the one given there.

Note also that the integrals $\hat{\theta}_{j, T}$ only depend on the specification of the seasonality functions $\theta_j$ and the maturity $T$, but not on $u$.
Therefore, their value can be calculated once and then stored, avoiding recalculations during repeated calls to the characteristic function.

\subsection{Option Pricing}
\label{ss:OptionPricing}


European options on futures contracts can be priced using the Fourier inversion technique as described in \citet{Heston1993} and \citet{BakshiMadan2000}.
Let $K$ denote the strike and $T$ the maturity of a European call option on a futures contract $F$ with maturity $T_m \geq T$.
The function needed for this technique is the single characteristic function $\Phi_1$ of the futures log-price $\ln F(T, T_m)$, given by
$\Phi_1(u) = e^{i u \ln F(0, T_m)} \phi_1(u)$,
with $\phi_1(u)$ obtained from Proposition \ref{Prop:JointCharacteristicFunction}.
European put options can be priced via put-call parity $C - P = e^{-rT} \left( F(0, T_1) - K \right)$.

Calendar spread options are also popular in agricultural markets.
Their payoff depends on the price difference of two futures contracts on the same commodity, but with different maturities $T_1$ and $T_2$.
\citet{CaldanaFusai2013} propose a risk-neutral valuation formula in case the joint characteristic function $\Phi$ of the two underlying futures contracts is known.
The formula is given in terms of a one-dimensional Fourier inversion, and can be used easily and efficiently with the multi-factor model presented here.

\subsection{The Model in the Physical Measure ${\mathbb P}$}
\label{ss:ModelInThePhysicalMeasure}

In order to present the model under the physical measure ${\mathbb P}$, we follow the ``completely affine'' specification of \citet{CasassusCollinDufresne2005}.
This setup is also used in \citet{DoranRonn2008}, \citet{TrolleSchwartz2009b}, and \citet{ChiarellaKangNikitopoulosTo2013}.
The market price of futures price risk and the market price of volatility risk are given by
\begin{align}
dB_j^{\mathbb P}(t)     &= dB_j^{\mathbb Q}(t) - \pi_j^F \sqrt{v_j(t)} dt, \\
dB_{n+j}^{\mathbb P}(t) &= dB_{n+j}^{\mathbb Q}(t) - \pi_j^v \sqrt{v_j(t)} dt,
\end{align}
for parameters $\pi_j^F, \pi_j^v, j = 1, \dots, n$.

For fixed $T_m$, the futures price $F(t, T_m)$ under ${\mathbb P}$ follows the SDE
\begin{align}
\label{FuturesSDE_P}
dF(t, T_m)
&= F(t, T_m) \sum_{j=1}^n e^{-\lambda_j(T_m - t)} \sqrt{v_j(t)} dB_j^{\mathbb Q}(t) \\
&= F(t, T_m) \left( \sum_{j=1}^n \pi_j^F e^{-\lambda_j(T_m - t)}v_j(t) dt + e^{-\lambda_j(T_m - t)} \sqrt{v_j(t)} dB_j^{\mathbb P}(t) \right),
\end{align}
and the variance process $v_j$ follows the SDE
\begin{align}
dv_j(t)
&= \kappa_j \left( \theta_j(t) - v_j(t) \right) dt + \sigma_j \sqrt{v_j(t)} dB_{n+j}^{\mathbb Q}(t)
\label{VarianceSDE_Q_again}
\\
&= \left( \kappa_j \left( \theta_j(t) - v_j(t) \right) + \sigma_j \pi_j^v v_j(t) \right) dt + \sigma_j \sqrt{v_j(t)} dB_{n+j}^{\mathbb P}(t).
\label{VarianceSDE_P}
\end{align}

The futures log-price $\ln F(t, T_m)$ under ${\mathbb P}$ follows the SDE
\begin{align}
\label{LogFuturesSDE_P}
d\ln F(t, T_m)
&= \sum_{j=1}^n \left(
e^{-\lambda_j(T_m - t)} \pi_j^F v_j(t) dt +
e^{-\lambda_j(T_m - t)} \sqrt{v_j(t)} dB_j^{\mathbb P}(t)
- \frac{1}{2} e^{-2\lambda_j(T_m - t)} v_j(t) dt \right),
\\
\nonumber
\ln F(0, T_m) &= \ln F_{m,0}.
\end{align}
Integrating \eqref{LogFuturesSDE_P} from time $0$ up to a time $T, T \leq T_m$, gives
\begin{align}
\label{LogFuturesIntegratedSDE_P}
\ln F(T, T_m) -  \ln F(0, T_m)
&= \sum_{j=1}^n \pi_j^F \int_0^T e^{-\lambda_j(T_m - t)} v_j(t) dt
\\
\nonumber
&\quad + \sum_{j=1}^n \int_0^T e^{-\lambda_j(T_m - t)} \sqrt{v_j(t)} dB_j^{\mathbb P}(t)
- \frac{1}{2} \sum_{j=1}^n \int_0^T e^{-2\lambda_j(T_m - t)} v_j(t) dt.
\end{align}


\section{State-Space Representation of the Model}
\label{s:StateSpaceRepresentation}

\subsection{The State Variables}
\label{ss:StateVariable}

In this section, we present our model in state-space form, so that we can run the Kalman filter and estimate the model's parameters by maximising the log-likelihood function.
We express the observed log-futures prices as functions of a set of state variables and a vector of model parameters.
Our approach is similar to that of \citet{ChiarellaKangNikitopoulosTo2013}, and our notation follows \citet{Tsay2010}.
We give the state-space representation here in general for the $n$-factor model;
however, in our empirical study in Section \ref{s:ModelEstimation} we will use the $1$-factor model.

The model with $n$ factors is parameterized by a vector of $8 n$ parameters denoted by
$\Psi = (\Psi_1, \dots, \Psi_n)$,
where
\begin{equation*}
\Psi_j = \left( \lambda_j, \kappa_j, \sigma_j, \rho_j, v_{j,0}, a_j, b_j, t_{j,0} \right), \; j = 1, \dots, n.
\end{equation*}

From equation \eqref{LogFuturesIntegratedSDE_P} we have
\begin{align}
\label{LogFuturesIntegratedSDE_P_t}
\ln F(t, T_m) -  \ln F(0, T_m)
&= \sum_{j=1}^n \pi_j^F \int_0^t e^{-\lambda_j(T_m - s)} v_j(s) ds
\\
\nonumber
&\quad + \sum_{j=1}^n \int_0^t e^{-\lambda_j(T_m - s)} \sqrt{v_j(s)} dB_j^{\mathbb P}(s)
- \frac{1}{2} \sum_{j=1}^n \int_0^t e^{-2\lambda_j(T_m - s)} v_j(s) ds.
\end{align}

We define our state variables, for $j=1,\dots, n$, as
\begin{align*}
s_{1, j}(t)
&:= \int_0^t e^{-\lambda_j(t - s)} \sqrt{v_j(s)} dB_j^{\mathbb Q}(s) \nonumber \\
&= \pi_j^F \int_0^t e^{-\lambda_j(t - s)} v_j(s) ds + \int_0^t e^{-\lambda_j(t - s)} \sqrt{v_j(s)} dB_j^{\mathbb P}(s), \\
s_{2, j}(t)
&:= \int_0^t e^{-2\lambda_j(t- s)} v_j(s) ds, \\
s_{3, j}(t)
&:= v_j(t) = s_{3, j}(0) + \kappa_j \int_0^t \theta_j(s) - s_{3, j}(s) ds + \sigma_j \int_0^t \sqrt{s_{3, j}(s)} dB_{n+j}^{\mathbb Q}(s) \nonumber \\
&= s_{3, j}(0) + \kappa_j \int_0^t \theta_j(s) - s_{3, j}(s) ds + \sigma_j \pi_j^v \int_0^t s_{3, j}(s) ds + \sigma_j \int_0^t \sqrt{s_{3, j}(s)} dB_{n+j}^{\mathbb P}(s)
\end{align*}

The dynamics of the state variables are then given by
\begin{align}
ds_{1, j}(t)
&= -\lambda_j s_{1, j}(t) dt + \sqrt{s_{3, j}(t)} dB_j^{\mathbb Q}(t) \nonumber \\
\label{ds1_P}
&= -\lambda_j s_{1, j}(t) dt + \pi_j^F s_{3, j}(t) dt + \sqrt{s_{3, j}(t)} dB_j^{\mathbb P}(t), \\
\label{ds2_P}
ds_{2, j}(t)
&= -2\lambda_j s_{2, j}(t) dt + s_{3, j}(t) dt, \\
ds_{3, j}(t)
&= \kappa_j \left( \theta_j(t) - s_{3, j}(t) \right) dt + \sigma_j \sqrt{s_{3, j}(t)} dB_{n+j}^{\mathbb Q}(t) \nonumber \\
\label{ds3_P}
&= \left( \kappa_j \left( \theta_j(t) - s_{3, j}(t) \right) + \sigma_j \pi_j^v s_{3, j}(t) \right) dt + \sigma_j \sqrt{s_{3, j}(t)} dB_{n+j}^{\mathbb P}(t).
\end{align}

We can express the futures log-price \eqref{LogFuturesIntegratedSDE_P_t} as a linear function of the first two state variables
\begin{equation*}
 \ln F(t,T_m) = \ln F(0,T_m) + \sum_{j=1}^n e^{-\lambda_j (T_m - t)} s_{1, j}(t) - \frac{1}{2} \sum_{j=1}^n e^{-2 \lambda_j (T_m - t)} s_{2, j}(t).
\end{equation*}

For notational simplicity, we focus one the $1$-factor model from here on, but it is straightforward to extend these results to the general $n$-factor case.
At time $t$, the vector of state variables is given by
\begin{equation*}
s_t =
\left(
\begin{array}{c}
s_1(t)
\\
s_2(t)
\\
s_3(t)
\end{array}
\right).
\end{equation*}

\subsection{Transition and Measurement Equations}
\label{ss:TransitionAndMeasurementEquations}

The transition (or state) equation in discrete time is given by the Euler discretisation scheme of equations \eqref{ds1_P}, \eqref{ds2_P} and \eqref{ds3_P} as
\begin{equation}
\label{TransitionEquation}
s_{t + \triangle t} = d_t + T_t s_t + R_t \eta_t,
\end{equation}
with $\eta_t \sim \mathcal{N}(0, Q_t)$.
where
\begin{equation*}
d_t =
\left(
\begin{array}{c}
0
\\
0
\\
\kappa \theta(t) \triangle t
\end{array}
\right)
,
\quad
T_t =
\left(
\begin{array}{ccc}
1 - \lambda \triangle t & 0 & \pi^F \triangle t
\\
0 & 1 - 2 \lambda \triangle t & \triangle t
\\
0 & 0 & 1 - (\kappa - \sigma \pi^v) \triangle t
\end{array}
\right)
,
\end{equation*}
\begin{equation*}
R_t =
\sqrt{s_3(t)}
\left(
\begin{array}{cc}
1 & 0
\\
0 & 0
\\
0 & \sigma
\end{array}
\right),
\quad
Q_t =
\left(
\begin{array}{cc}
\triangle t & \rho \triangle t
\\
\rho \triangle t & \triangle t
\end{array}
\right)
.
\end{equation*}
Note that the vector $d_t$ depends on the function $\theta$ and is therefore time-dependent,
and the matrix $R_t$ depends on the current volatility $\sqrt{s_3(t)} = \sqrt{v(t)}$ and is state-dependent.

The measurement (or observation) equation provides the link between the observable quantities and the state variables.
The measurement equation for futures market data of size $k$ is
\begin{equation}
\label{MeasurementEquation}
y_t = c_t + Z_t s_t + e_t,
\end{equation}
with $e_t \sim \mathcal{N}(0, H_t)$, where $H_t$ is the covariance matrix of the measurement error $e_t$,
\begin{equation}
\label{FuturesLogPrices}
y_t =
\left(
\begin{array}{c}
\ln F(t, T_{m_1})
\\
\vdots
\\
\ln F(t, T_{m_k})
\end{array}
\right),
\quad
c_t =
\left(
\begin{array}{c}
\ln F(0, T_{m_1})
\\
\vdots
\\
\ln F(0, T_{m_k})
\end{array}
\right),
\end{equation}
\begin{equation}
\label{MeasurementMatrix}
Z_t =
\left(
\begin{array}{ccc}
e^{-\lambda (T_{m_1} - t)} & -\frac{1}{2} e^{-2 \lambda (T_{m_1} - t)} & 0
\\
\vdots & \vdots & \vdots
\\
e^{-\lambda (T_{m_k} - t)} & -\frac{1}{2} e^{-2 \lambda (T_{m_k} - t)} & 0
\end{array}
\right).
\end{equation}



Alternatively, instead of working with the futures log-prices $y_t$ of equation \eqref{FuturesLogPrices},
we can work with futures log-returns
\begin{equation}
\label{FuturesLogReturns}
\hat{y}_t =
\left(
\begin{array}{c}
\ln F(t, T_{m_1}) - \ln F(t - \triangle t, T_{m_1})
\\
\vdots
\\
\ln F(t, T_{m_k}) - \ln F(t - \triangle t, T_{m_k})
\end{array}
\right).
\end{equation}
In this case, the first two $1$'s on the diagonal of the transition matrix $T_t$ in \eqref{TransitionEquation} need to be removed,
and the vector $c_t$ in \eqref{MeasurementEquation} is set to zero for all $t$.

\subsection{The Log-Likelihood Function}
\label{ss:LogLikelihoodFunction}

The conditional probability density function is used to write the joint density function as
\begin{equation*}
\mathcal{L}(y; \Psi) = \prod_{t=1}^T p(y_t | F_{t-1}),
\end{equation*}
where $p(y_t | F_{t-1})$ denotes the distribution of $y_t$ given $F_{t-1} := \{ y_1, ..., y_{t-1} \}$.
The log-likelihood function $\ln \mathcal{L}$ can be written in terms of the filter variables as
\begin{equation*}
\ln \mathcal{L}(y; \Psi) = -\frac{k T}{2} \ln 2 \pi - \frac{1}{2} \sum_{t=1}^T \ln |V_t| - \frac{1}{2} \sum_{t=1}^T v_t' V_t^{-1} v_t,
\end{equation*}
where $T$ is the length of the time series, $k$ the number of observed prices or returns at each time-step,
$v_t := y_t - y_{t|t-1}$ the $1$-step ahead forecast error of $y_t$ given $F_{t-1}$,
and $V_t := {\mathrm{Var}}(v_t | F_{t-1}) = {\mathrm{Var}}(v_t)$ is the covariance matrix of the error $v_t$.


\section{Model Estimation with the Kalman Filter}
\label{s:ModelEstimation}

\subsection{Description of the Datasets}
\label{ss:Data}

In this section we consider five agricultural commodities: corn, cotton, soybeans, sugar and wheat.
For each commodity we work with a data set spanning ten years of daily futures prices, from 1 November 2007 to 13 November 2017.
The contracts are
corn futures traded on CBOT,
cotton No. 2 futures traded on ICE,
soybean futures traded on CBOT,
sugar No. 11 futures traded on ICE,
and Chicago SRW wheat futures traded on the CBOT.
These contracts are all for physical delivery.
For corn, cotton and wheat we have ten contracts with expiries from 2 months to 2 years.
For soybeans we have thirteen contracts with expiries from 2 months to 2 years.
For sugar we have seven contracts with expiries from 2 months to 1.75 years.
These data sets have all been obtained from Thomson Reuters Eikon.
The codes to access the data (RIC) are
$Cc1, \dots, Cc10$ for corn,
$CTc1, \dots, CTc10$ for cotton,
$Sc1, \dots, Sc13$ for soybeans,
$YOc1, \dots, YOc7$ for sugar ,
and $Wc1, \dots, Wc10$ for wheat.
The corresponding traded contracts are for the following calendar months:
corn (C) and wheat (W): MAR, MAY, JUL, SEP, DEC;
cotton (CT): MAR, MAY, JUL, OCT, DEC;
soybean (S): JAN, MAR, MAY, JUL, AUG, SEP, NOV;
sugar (YO): MAR, MAY, JUL, OCT.

Table \ref{tab:data_description} summarizes the characteristics of our data sets, including,
for each commodity, the minimum, maximum and average prices and the average volatility of the contracts.
Figure \ref{Fig:price_return_ts_plots} plots, for each commodity, the time series of prices for the contracts with the shortest and longest expiries
as well as the corresponding log-returns.
Our data sets start with a period of high volatility, corresponding approximately to the first two years,
and each market has moved between contango and backwardation during the whole ten years.
It can be seen that the volatility of the futures with the longest expiry is lower than that of the futures with the shortest expiry,
which is in line with the Samuelson effect.
The average volatilities computed over the ten years of data are rather similar for each commodity and roughly equal to 25\%.

In Table \ref{tab:evidence_samuelson} we report the average volatility of each futures series in our sample.
As the time to expiry increases, the average volatility of the futures decreases.
This behaviour can be clearly observed for all five commodities, and nicely displays the Samuelson effect found in our datasets.
It is also a way to gauge its magnitude:
for corn and wheat, the volatility of the last contract is about eight to ten volatility points below the volatility of the front contract;
for soybeans, this difference is only around six points;
however, for cotton it almost reaches twelve points, and for sugar almost fifteen.
Therefore, from this first analysis, we can expect the values found for the volatility damping parameter $\lambda$
when estimating our models to be lower for soybeans and higher for cotton and sugar.

In Table \ref{tab:evidence_seasonality} we report the average volatility per calendar month for the first futures contract for each commodity.
In addition, we identify the two calendar months with the highest volatilities.
For corn, cotton and soybeans, we can see that there are two consecutive months during which the volatility is higher:
June and July for corn and cotton, and August and September for soybeans.
During these two months, the volatility is approximately five points higher than the average volatility across the other calendar months,
which provides evidence for the existence of a seasonal behaviour of the volatility for the first contract.
For sugar and wheat, the results are less clear, as the two calendar months with the highest volatilities are not consecutive:
March and October for sugar, and March and June for wheat.
We therefore expect the magnitude of the seasonal component in the model to be greater for corn, cotton and soybeans than for sugar and wheat.
In terms of when the volatility peaks, we expect $t_0$ to be slightly sooner for corn and cotton, and later for soybeans.
For sugar and wheat, it is less easy to draw an \textit{a priori} conclusion.

\begin{table}[H]
  \centering
\begin{tabular}{ccccccccc}
\toprule
Name & dates & start date & end date & futures & min price & max price & avg price & avg vol. \\
\midrule
$Corn$ & $2529$ & $01/11/07$ & $13/11/17$ & $10$ & $293.50$ & $838.75$ & $485.16$ & $25.07$\% \\
$Cotton$ & $2528$ & $01/11/07$ & $13/11/17$ & $10$ & $39.14$ & $215.15$ & $77.16$ & $23.24$\% \\
$Soybeans$ & $2529$ & $01/11/07$ & $13/11/17$ & $13$ & $783.50$ & $1771.00$ & $1120.96$ & $21.53$\% \\
$Sugar$ & $2528$ & $01/11/07$ & $13/11/17$ & $7$ & $0.10$ & $0.35$ & $0.18$ & $27.43$\% \\
$Wheat$ & $2529$ & $01/11/07$ & $13/11/17$ & $10$ & $361.00$ & $1282.50$ & $656.46$ & $27.55$\% \\
\bottomrule
\end{tabular}
\caption{Description of the datasets (Name, number of dates, start and end, number of futures, min/max and average prices, average volatility).
Min/max and average prices, as well as average volatility are taken across expiries for each commodity.}
\label{tab:data_description}
\end{table}

\begin{table}[H]
  \centering
\begin{tabular}{cccccc}
\toprule
Contract & Corn & Cotton & Soybeans & Sugar & Wheat \\
\midrule
$c1$ & $29.54$\% & $30.33$\% & $25.51$\% & $35.76$\% & $33.44$\% \\
$c2$ & $28.57$\% & $27.17$\% & $23.72$\% & $31.95$\% & $31.75$\% \\
$c3$ & $27.63$\% & $26.78$\% & $23.42$\% & $28.75$\% & $30.22$\% \\
$c4$ & $26.56$\% & $24.88$\% & $22.97$\% & $26.25$\% & $28.77$\% \\
$c5$ & $25.47$\% & $23.05$\% & $22.43$\% & $24.68$\% & $27.47$\% \\
$c6$ & $24.48$\% & $21.66$\% & $21.65$\% & $23.18$\% & $26.48$\% \\
$c7$ & $23.14$\% & $20.53$\% & $20.96$\% & $21.42$\% & $25.55$\% \\
$c8$ & $22.25$\% & $20.01$\% & $20.51$\% & $-$ & $24.58$\% \\
$c9$ & $21.64$\% & $19.21$\% & $20.19$\% & $-$ & $24.02$\% \\
$c10$ & $21.45$\% & $18.75$\% & $19.85$\% & $-$ & $23.25$\% \\
$c11$ & $-$ & $-$ & $19.73$\% & $-$ & $-$ \\
$c12$ & $-$ & $-$ & $19.51$\% & $-$ & $-$ \\
$c13$ & $-$ & $-$ & $19.32$\% & $-$ & $-$ \\
\bottomrule
\end{tabular}
\caption{Average volatilities for each futures contract of our dataset.}
\label{tab:evidence_samuelson}
\end{table}

\begin{table}[H]
  \centering
\begin{tabular}{cccccc}
\toprule
Calendar month & Corn & Cotton & Soybeans & Sugar & Wheat \\
\midrule
Januay	 & $26.52$ \%	 & $23.96$ \%	 & $22.91$ \%	 & $32.39$ \%	 & $28.97$ \% \\
February	 & $19.99$ \%	 & $24.54$ \%	 & $18.67$ \%	 & $34.14$ \%	 & $31.83$ \% \\
March	 & $27.29$ \%	 & $28.35$ \%	 & $21.99$ \%	 & \fbox{$37.92$ \%}	 & \fbox{$35.87$ \%} \\
April	 & $26.70$ \%	 & $26.02$ \%	 & $19.53$ \%	 & $32.72$ \%	 & $31.92$ \% \\
May	 & $25.83$ \%	 & $28.41$ \%	 & $20.26$ \%	 & $32.57$ \%	 & $29.66$ \% \\
June	 & \fbox{$33.71$ \%}	 & \fbox{$32.82$ \%}	 & $21.04$ \%	 & $34.29$ \%	 & \fbox{$35.33$ \%} \\
July	 & \fbox{$32.62$ \%}	& \fbox{$33.99$ \%} & $27.03$ \%	 & $35.73$ \%	 & $31.31$ \% \\
August	 & $29.07$ \%	 & $25.94$ \%	 & \fbox{$28.01$ \%}	 & $30.41$ \%	 & $34.46$ \% \\
Setpember	 & $28.93$ \%	 & $26.57$ \%	 & \fbox{$28.28$ \%}	 & $36.15$ \%	 & $29.50$ \% \\
October	 & $30.05$ \%	 & $27.82$ \%	 & $25.37$ \%	 & \fbox{$38.25$ \%}	 & $31.32$ \% \\
November	 & $25.78$ \%	 & $29.71$ \%	 & $22.74$ \%	 & $30.65$ \%	 & $27.59$ \% \\
December	 & $24.37$ \%	 & $24.59$ \%	 & $19.92$ \%	 & $30.43$ \%	 & $27.29$ \% \\
\bottomrule
\end{tabular}
\caption{Average volatilities, per calendar month, for the first futures series (\textit{c1}) in our dataset.
For each commodity, the boxes indicate the two calendar months with the highest volatilities.}
\label{tab:evidence_seasonality}
\end{table}

\subsection{Maximum Likelihood Estimation}
\label{ss:MaximumLikelihoodEstimation}

We use these data sets to estimate our model with the five seasonality specifications presented in Section \ref{s:SeasonalStochasticVolatility},
as well as the non-seasonal version of the model.
In the following, we refer to these six specifications of seasonality as \textit{the models}.
The methodology we use is a maximum likelihood estimation using the Kalman filter algorithm to obtain the hidden states
as formalized in Section \ref{s:StateSpaceRepresentation}.
Since we do not include option prices in our sample, we do not attempt to estimate the market price of volatility risk parameter $\pi^v$, and set it equal to zero.
The maximization algorithm we use is simulated annealing as described in \citet{GoffeFerrierRogers1994}.
Our implementation is done in C++ and Matlab.
The estimations are performed with the log-return time series \eqref{FuturesLogReturns}.
On a roll date the return must be calculated w.r.t. the same contract, i.e. ``diagonally'' between two adjacent series.
For the new ``last'' contract rolling into our sample, we cannot do this and instead set this return to zero.

For each commodity, we have estimated six models, one of which being non-seasonal.
Table \ref{tab:summary_results} gathers the obtained results.
For each commodity and model we provide the log-likelihood, the AIC (Akaike Information Criterion), BIC (Bayesian Information Criterion),
the value taken by the statistic $D1$ of the first likelihood ratio test (seasonal model versus non-seasonal, see below)
and its $p$-value (non-seasonal model being the null hypothesis).
For these models, we provide the AIC difference (denoted by $\Delta_{aic}$) and the Akaike weight (denoted by $\omega_i$).

In Table \ref{tab:summary_results}, we also provide the log-likelihood obtained
after estimating the nested versions of the considered models obtained when setting $\lambda=0$ (nested models without the Samuelson effect),
as well as the value taken by the statistic $D2$ of the second likelihood ratio test (model with the Samuelson effect versus model without, see below)
and its $p$-value (model without the Samuelson effect being the null hypothesis).

In order to rank models according to their performance we rely on the approaches presented in \citet{BurnhamAnderson2002}.
The AIC and $\Delta_{aic}$ are used to rank the estimated models for each commodity.
The idea is that the best model provides the smallest AIC and that $\Delta_{aic}$ allows one to gauge how far an alternative model is from the best model.
Another indicator we compute is the Akaike weight $\omega_i$ that can be interpreted as the weight of evidence in favor of model $i$ given the considered data and set of models.
In a Bayesian framework, $\omega_i$ can be interpreted as the probability that model $i$ is the best model (in the relative entropy sense) in the given set of models.
We refer to \citet{BurnhamAnderson2002} for further details and properties about AIC differences and Akaike weights.

Table \ref{tab:models_ranking} presents, for each commodity, the ranking obtained for the estimated models.
It can be noted that the exp-sinusoidal is always ranked first or second.
It suggests that we may choose this seasonality specification if we were to keep only one out of the five considered
(or six if we also count the non-seasonal specification).
It can also be noted that the non-seasonal model is always ranked sixth, i.e. last, in line with the conjectured need for a specific modelling of the seasonality.

\subsection{Testing for Seasonal Volatility}
\label{ss:TestingForSeasonalVolatility}

As the seasonal models have two additional parameters compared to the non-seasonal one,
it is important to check that the obtained increase of likelihood is significant and not due to over-fitting.
We perform a likelihood ratio test in order to investigate further this need for a specific modelling of the seasonality.
This test relies on the statistic $D1$ defined as twice the difference of log-likelihoods between the alternative model and the constrained model (non-seasonal in our case).
It is defined as
\begin{equation}
D1 = 2\left( \ln \mathcal{L}(\text{seasonal model}) - \ln \mathcal{L}(\text{non-seasonal model}) \right).
\end{equation}

$D1$ follows a $\chi^2$ distribution with $2$ degrees of freedom.
Intuitively, it takes values close to zero when the additional parameters are not useful,
and large values when the additional parameters are significantly useful to describe the data.

Figure \ref{Fig:lrtests_plots} reports, for each commodity, the values taken by the likelihood ratio test statistics $D1$ computed for each seasonal model.
We have also reported the quantiles of the $\chi^2$ distribution with $2$ degrees of freedom at $99\%$, $99.9\%$ and $99.99\%$ levels.
In summary, a majority of models pass the test at the $99.99\%$ significance level (21 out of 25), and all of them pass it at the $99\%$ significance level.
These results again confirm the need for a specific modelling of the seasonality in the considered markets.

As an aside, note that wheat is the commodity for which we obtained the smallest $D1$ values.
As conjectured in Section \ref{ss:Data}, this indicates that the seasonality of volatility has a lower magnitude in this market when compared to the others.
Nevertheless, the statistical test still confirms the usefulness of a seasonal model for wheat.

\subsection{Testing for the Samuelson Effect}
\label{ss:TestingForSamuelsonEffect}

We conduct a similar likelihood ratio test for the parameter $\lambda$, which determines the strength of the Samuelson effect.
The idea is to check that this parameter is significant in the seasonal specification.
This test relies on the statistic $D2$ defined as twice the difference of log-likelihoods between the alternative model and the constrained model
obtained by setting $\lambda=0$.
It is defined as
\begin{equation}
D2 = 2\left( \ln \mathcal{L}(\text{model with Samuelson effect}) - \ln \mathcal{L}(\text{model with }\lambda=0) \right).
\end{equation}

$D2$ follows a $\chi^2$ distribution with $1$ degree of freedom.
Again, it takes values close to zero when the damping factor controlling the Samuelson effect is not useful,
and large values when the parameter $\lambda$ is significantly useful to describe the data.

Values taken by $D2$, as well as the corresponding p-values, are reported in Table \ref{tab:summary_results}.
For the considered commodities, the parameter $\lambda$ is always found to be significant in the specification of the model (at the $99.99\%$ level).
These results confirm the importance played by the Samuelson effect in order to properly describe the behavior of volatility in these markets.

\subsection{The Filtered Variance}
\label{ss:FilteredVariance}

Figure \ref{Fig:sv3ts_plots} plots, for each commodity, the time series of the third state variable $v(t)$ as well as the seasonal component $\theta(t)$,
both obtained with the best performing model as identified in Table \ref{tab:models_ranking}.
For corn, cotton and soybeans, the seasonal behavior of $v(t)$ can easily be identified on the corresponding graphs.
For sugar and wheat, the seasonal behavior is slightly more difficult to identify.

\subsection{The Estimated Parameters}
\label{ss:EstimatedParameters}

Tables \ref{tab:estimated_parameters_corn},
\ref{tab:estimated_parameters_cotton},
\ref{tab:estimated_parameters_soybeans},
\ref{tab:estimated_parameters_sugar} and
\ref{tab:estimated_parameters_wheat}
gather, for the six considered models, the estimated parameters for corn, cotton, soybeans, sugar and wheat, respectively.
For each parameter and model we provide the estimated value as well as the corresponding value transformed in $\mathbb{R}$
and the standard error of the estimation (standard deviation with respect to the parameter in $\mathbb{R}$).

For each commodity, the estimation of the parameter $\lambda$ is rather stable across models.
This parameter drives the magnitude of the Samuelson effect in our specification.
We find it to be higher for sugar (approx. $0.28$) and lower for soybeans (approx. $0.13$).
For corn, cotton, and wheat, we find is to be around $0.21$.
These differences in terms of the value of $\lambda$ are in line with the differences in terms of the magnitude of the Samuelson effect
conjectured in Section \ref{ss:Data} based on the results shown in Table \ref{tab:evidence_samuelson}.

The estimations we performed also yield estimates for the market prices of futures price risk.
We find this market price of risk to be clearly positive for corn, soybeans, sugar and wheat, and close to zero for cotton.

For the seasonal component, the estimates of the parameter $t_0$ are found to be higher for soybeans than for corn and cotton.
For corn and cotton these estimates have similar values.
These two remarks are in line with the comments made in Section \ref{ss:Data} based on results in Table \ref{tab:evidence_seasonality}.

\begin{figure}[H]
\centering
	\includegraphics[height=3.7cm]{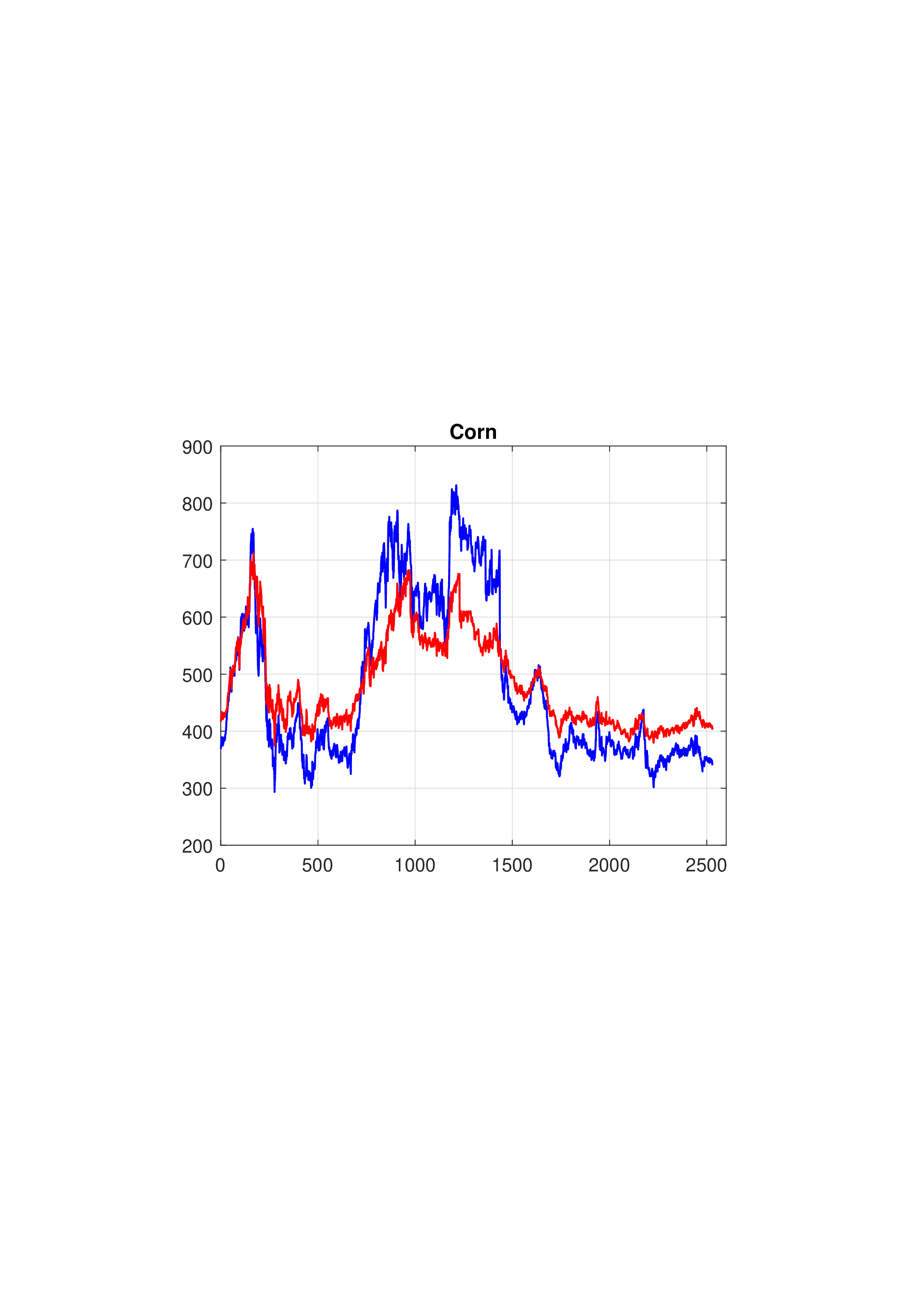} \qquad
	\includegraphics[height=3.5cm]{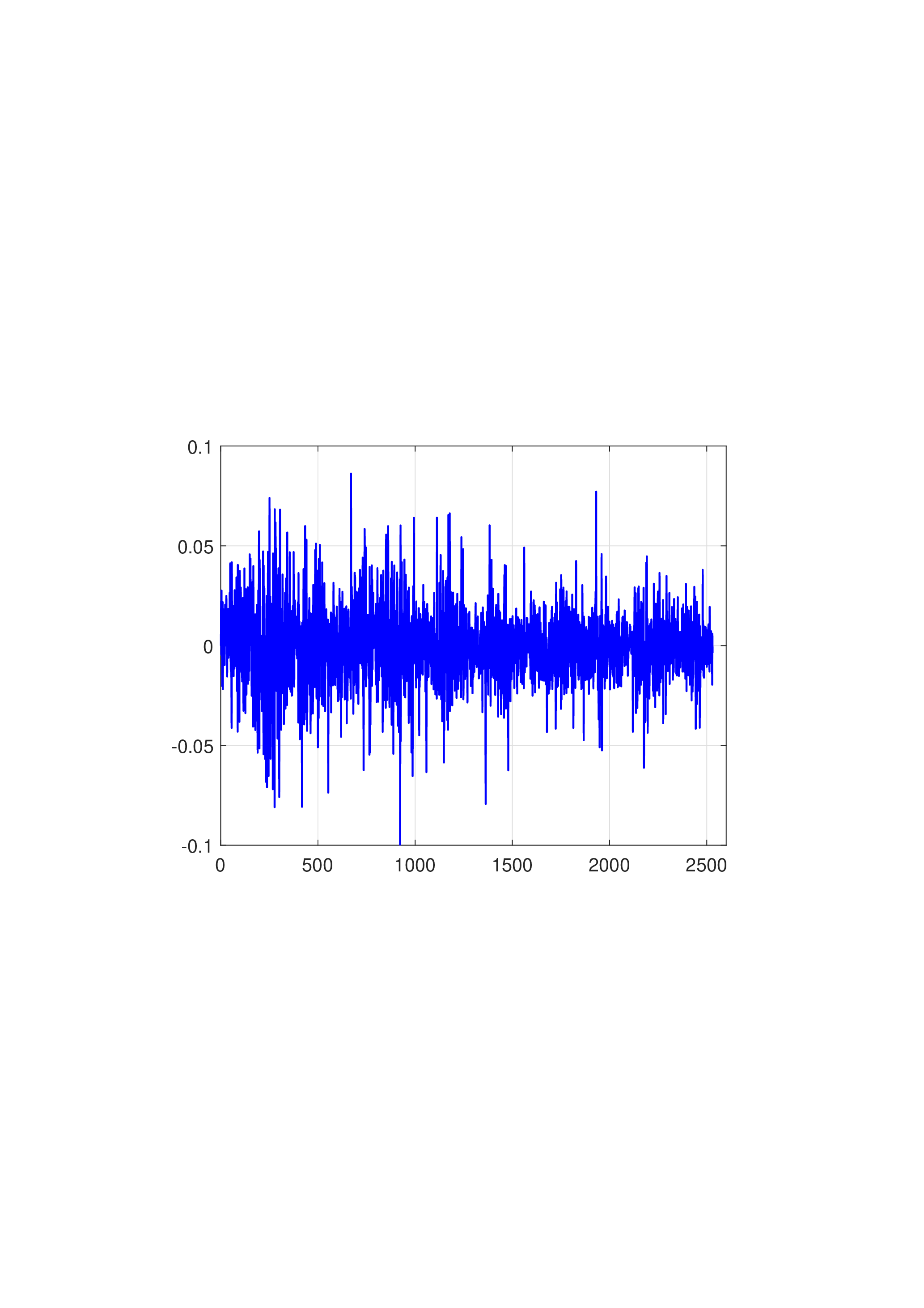} \qquad
	\includegraphics[height=3.5cm]{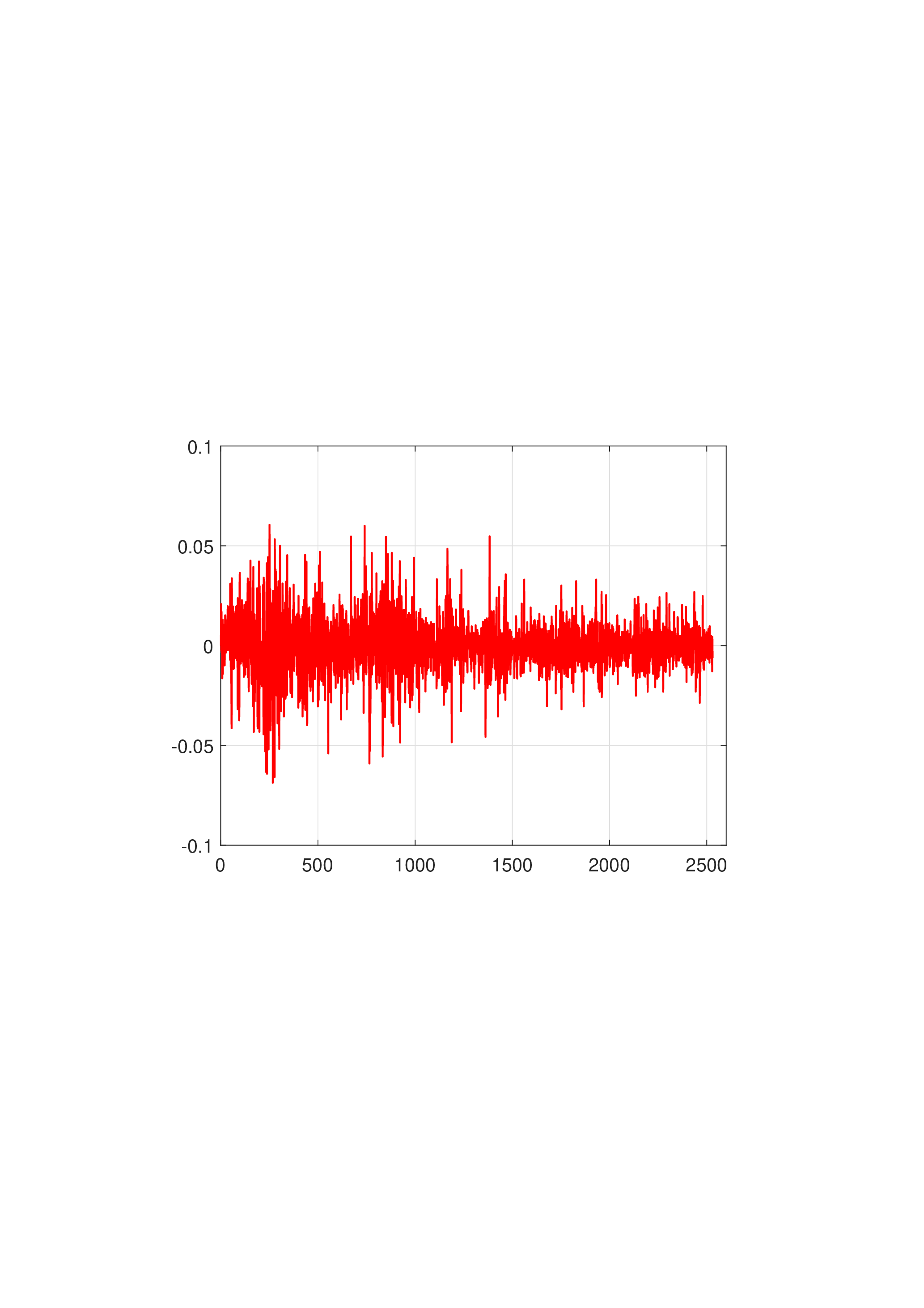}
	\vspace{0.2pc}
	
	\includegraphics[height=3.7cm]{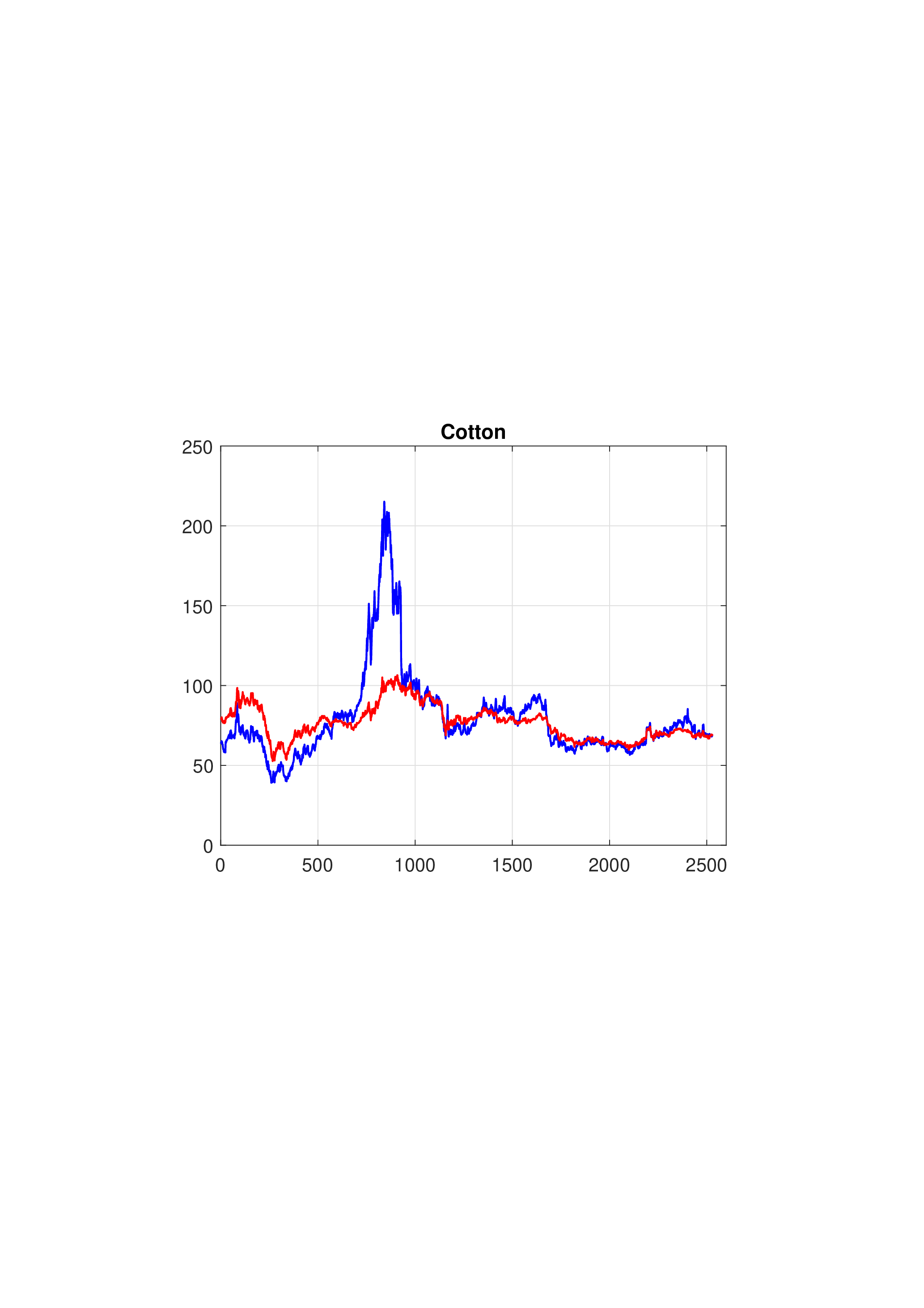} \qquad
	\includegraphics[height=3.5cm]{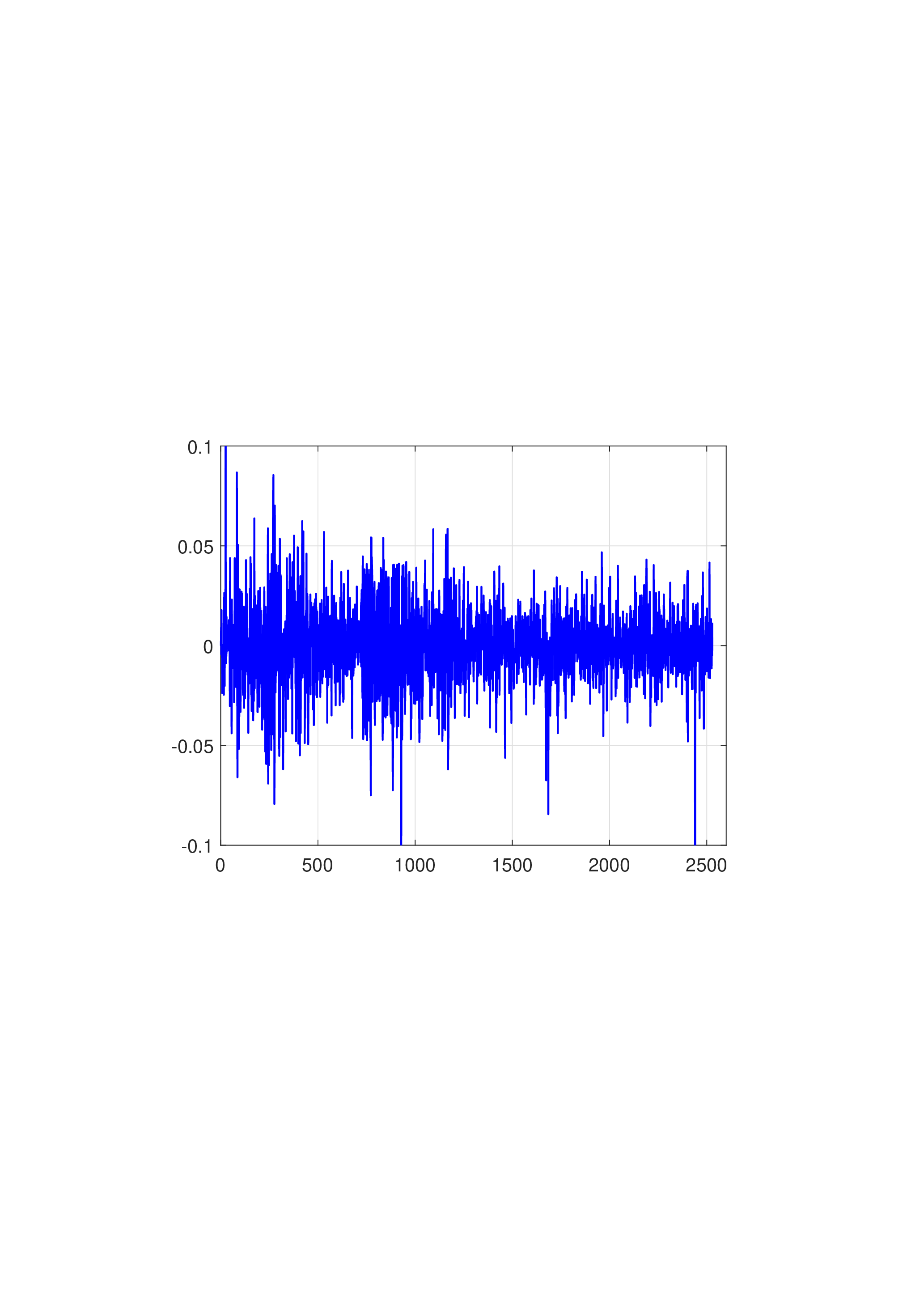} \qquad
	\includegraphics[height=3.5cm]{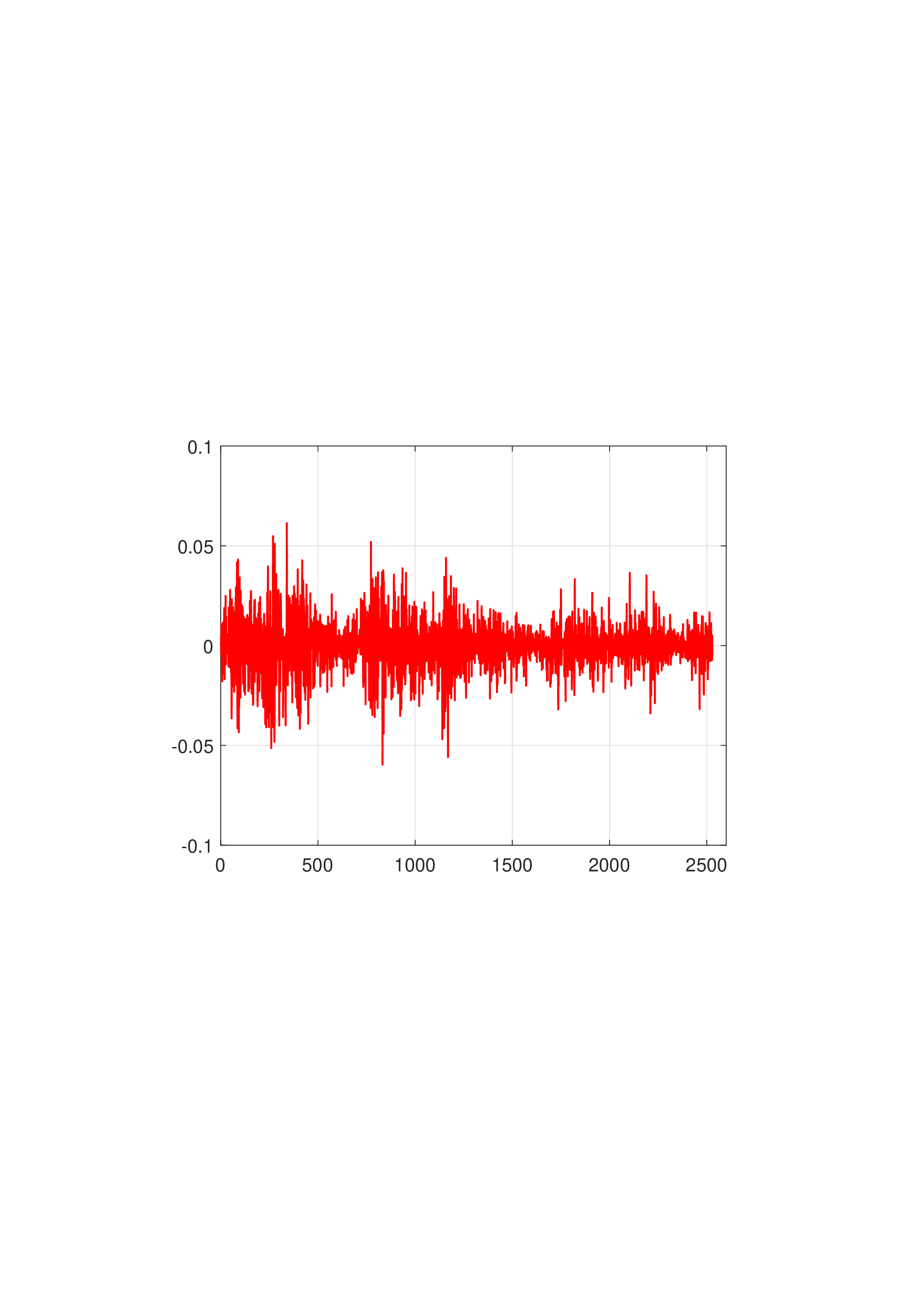}
	\vspace{0.2pc}
	
	\includegraphics[height=3.7cm]{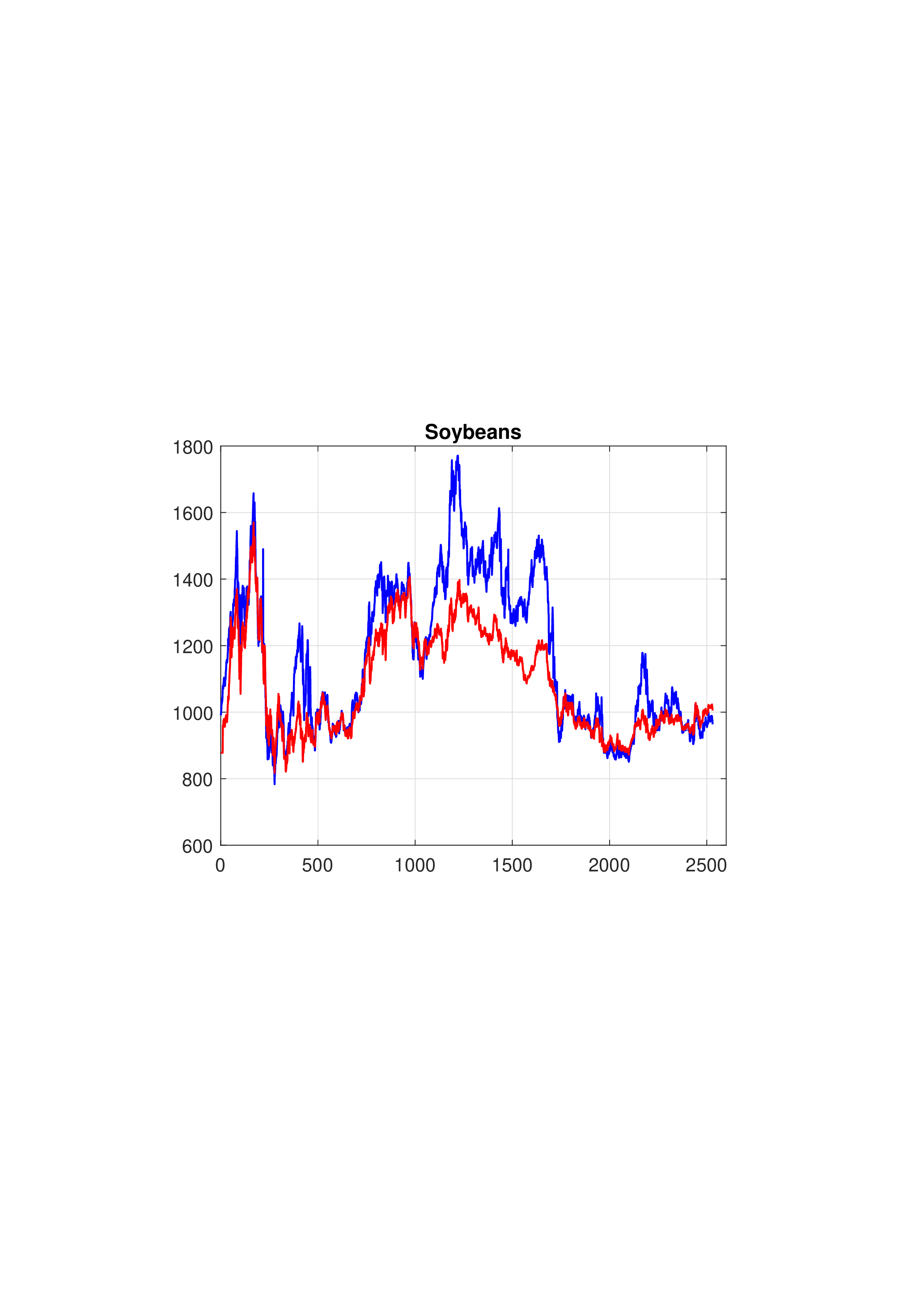} \qquad
	\includegraphics[height=3.5cm]{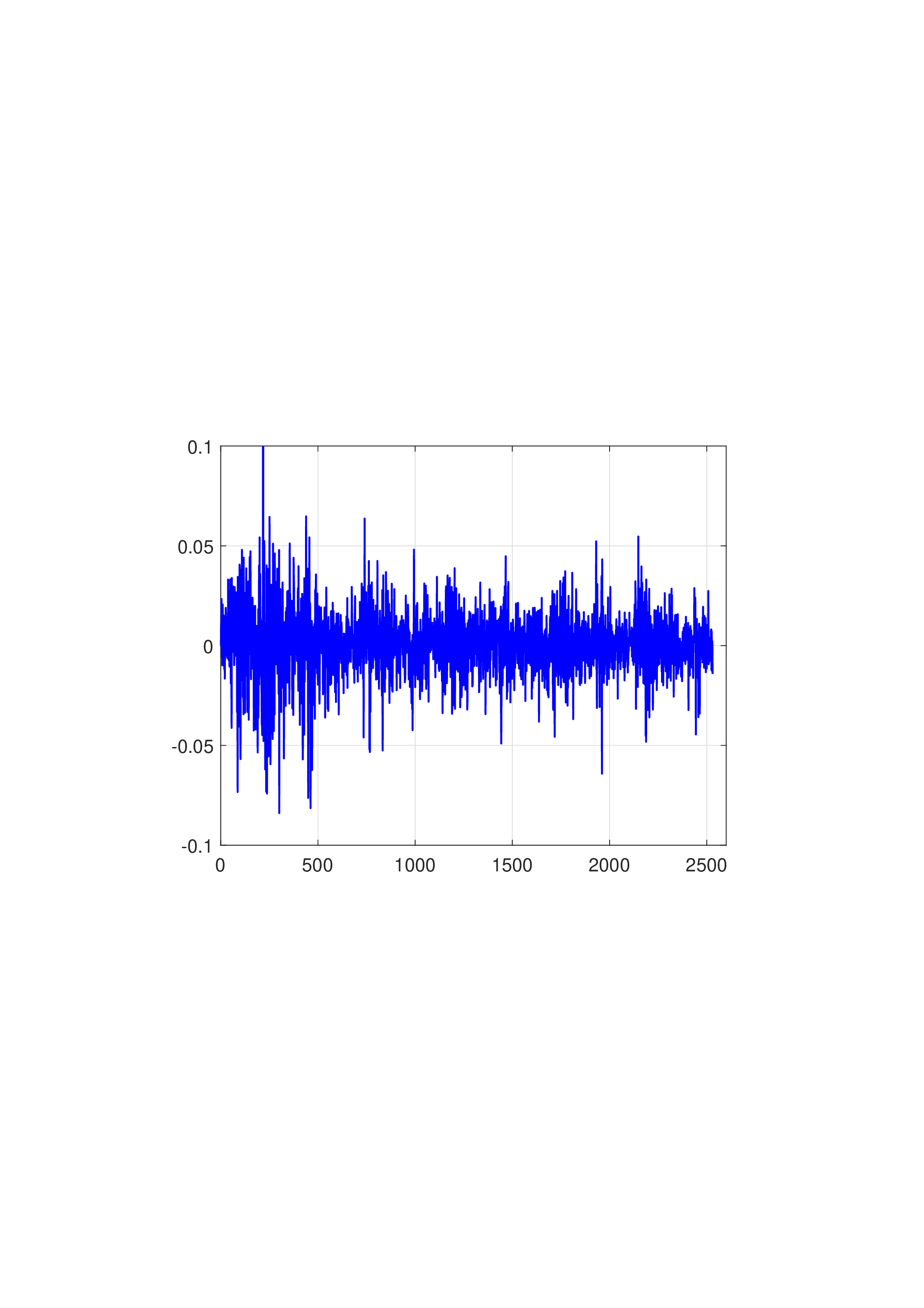} \qquad
	\includegraphics[height=3.5cm]{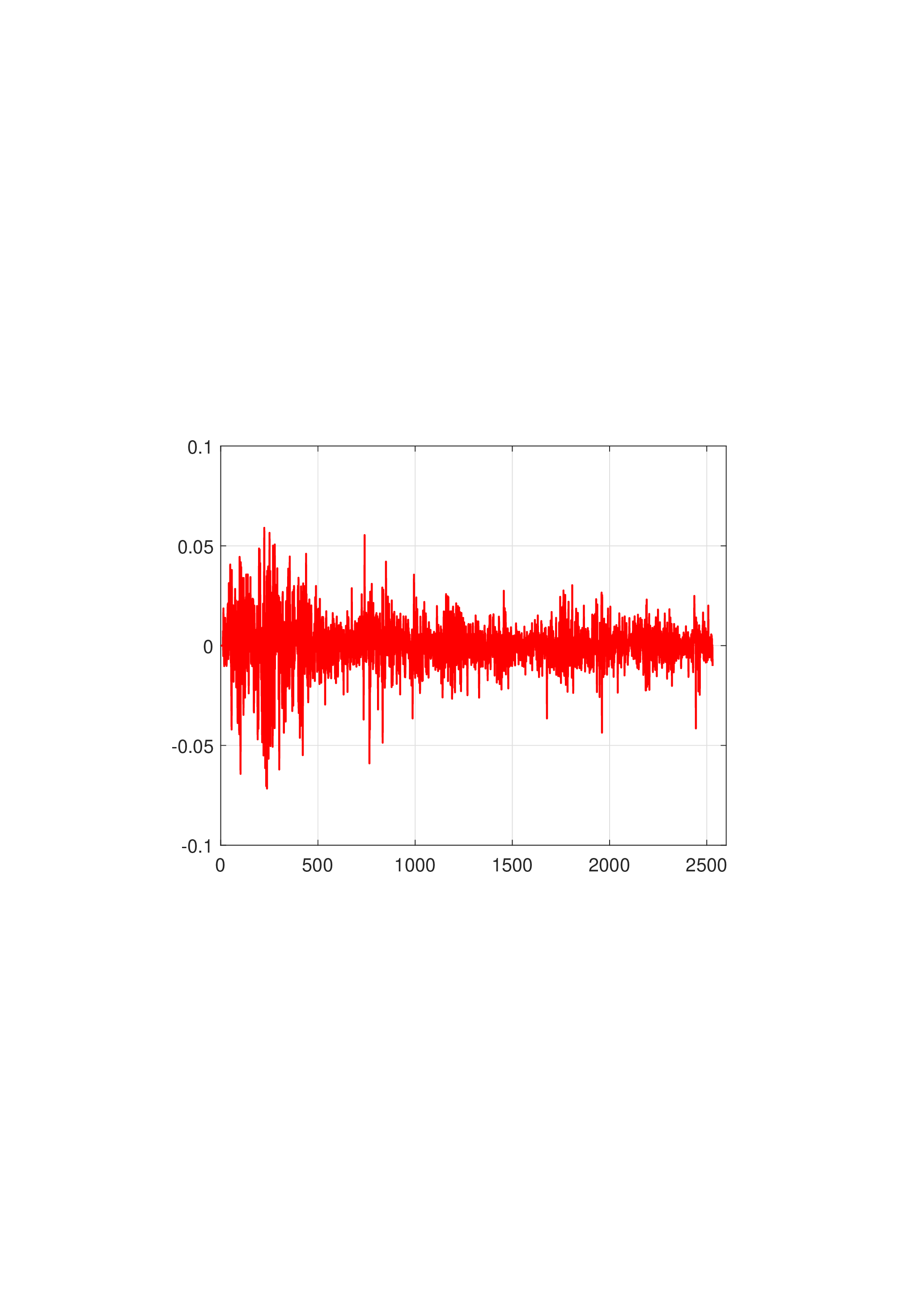}
	\vspace{0.2pc}
	
	\includegraphics[height=3.7cm]{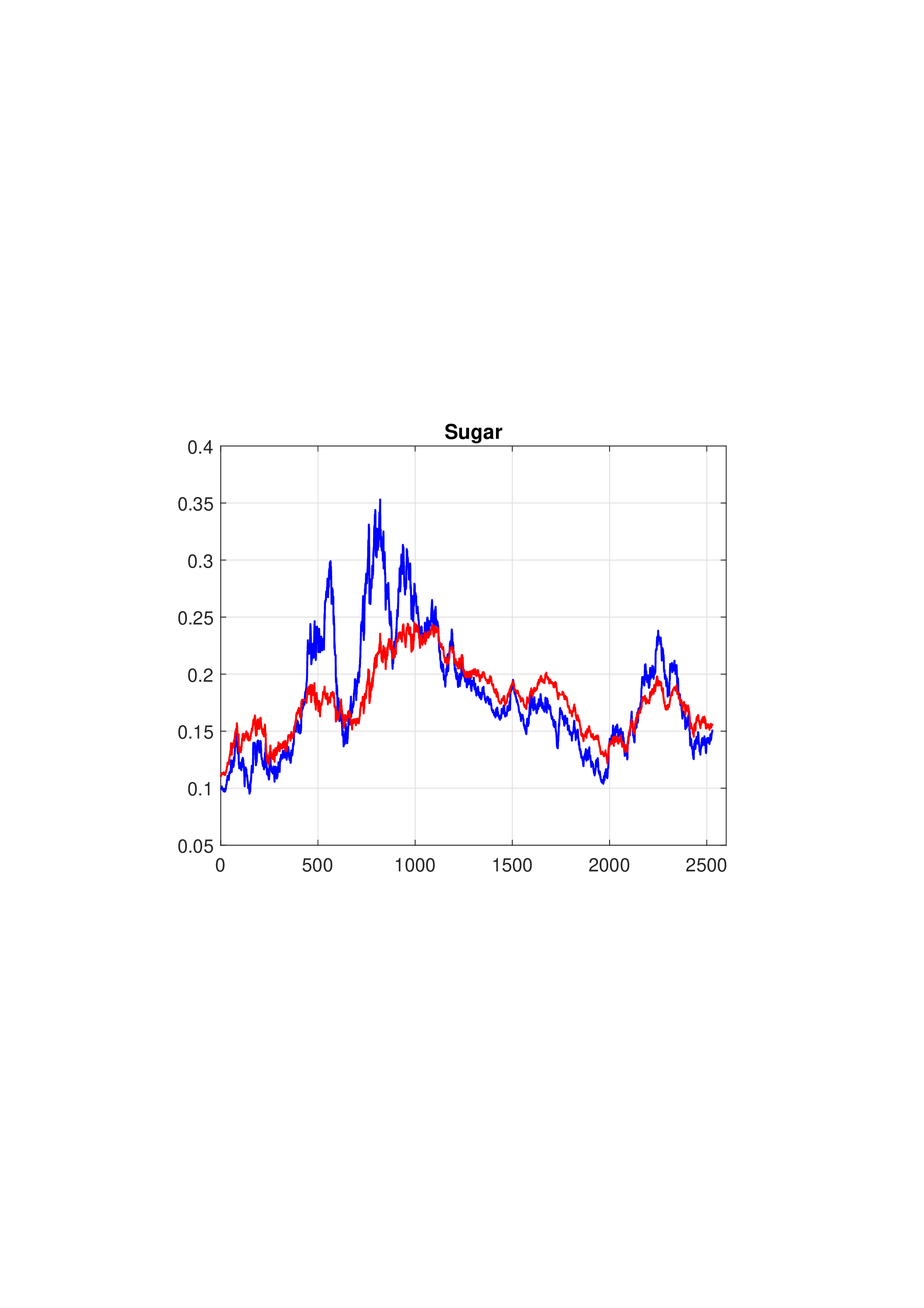} \qquad
	\includegraphics[height=3.5cm]{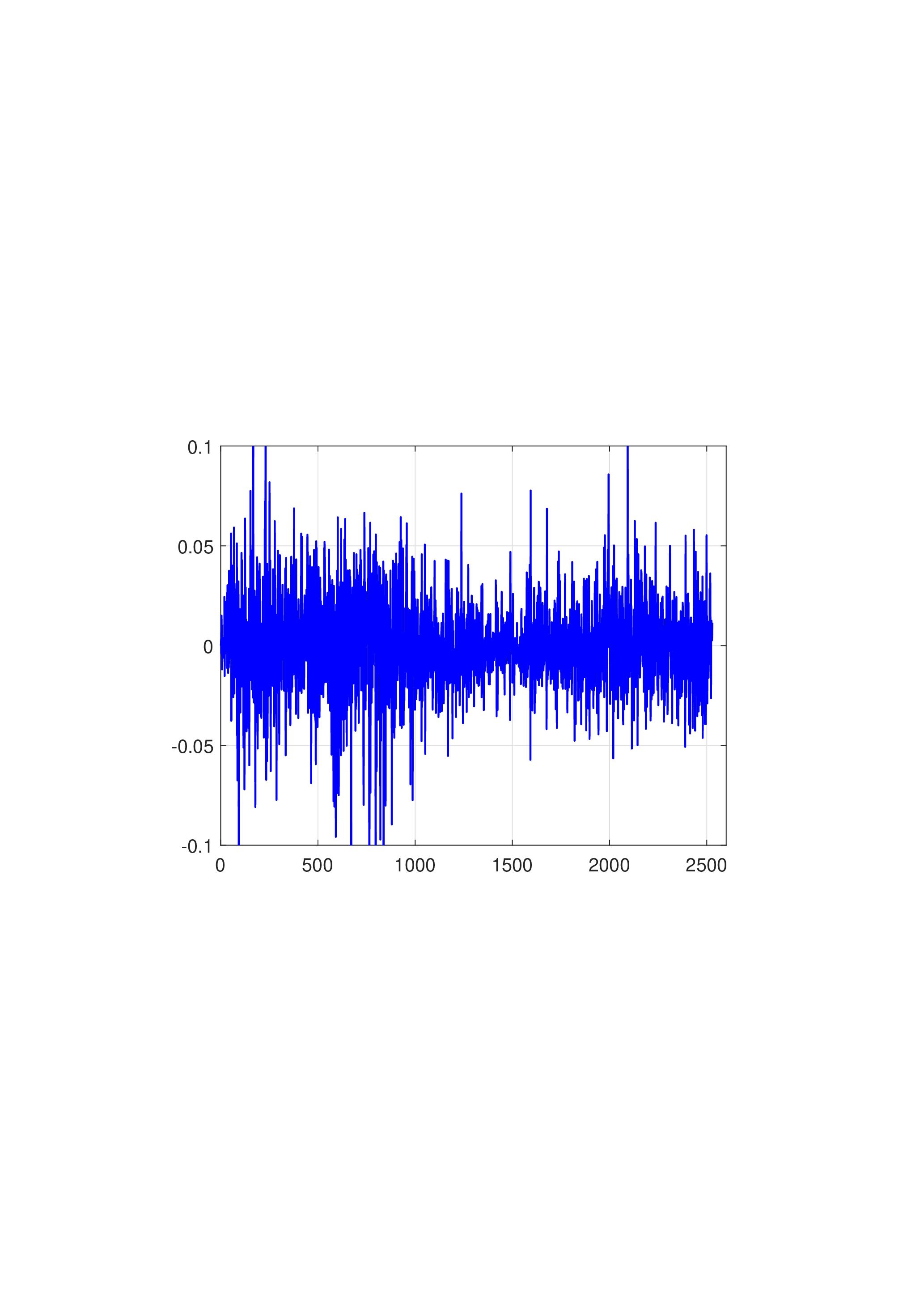} \qquad
	\includegraphics[height=3.5cm]{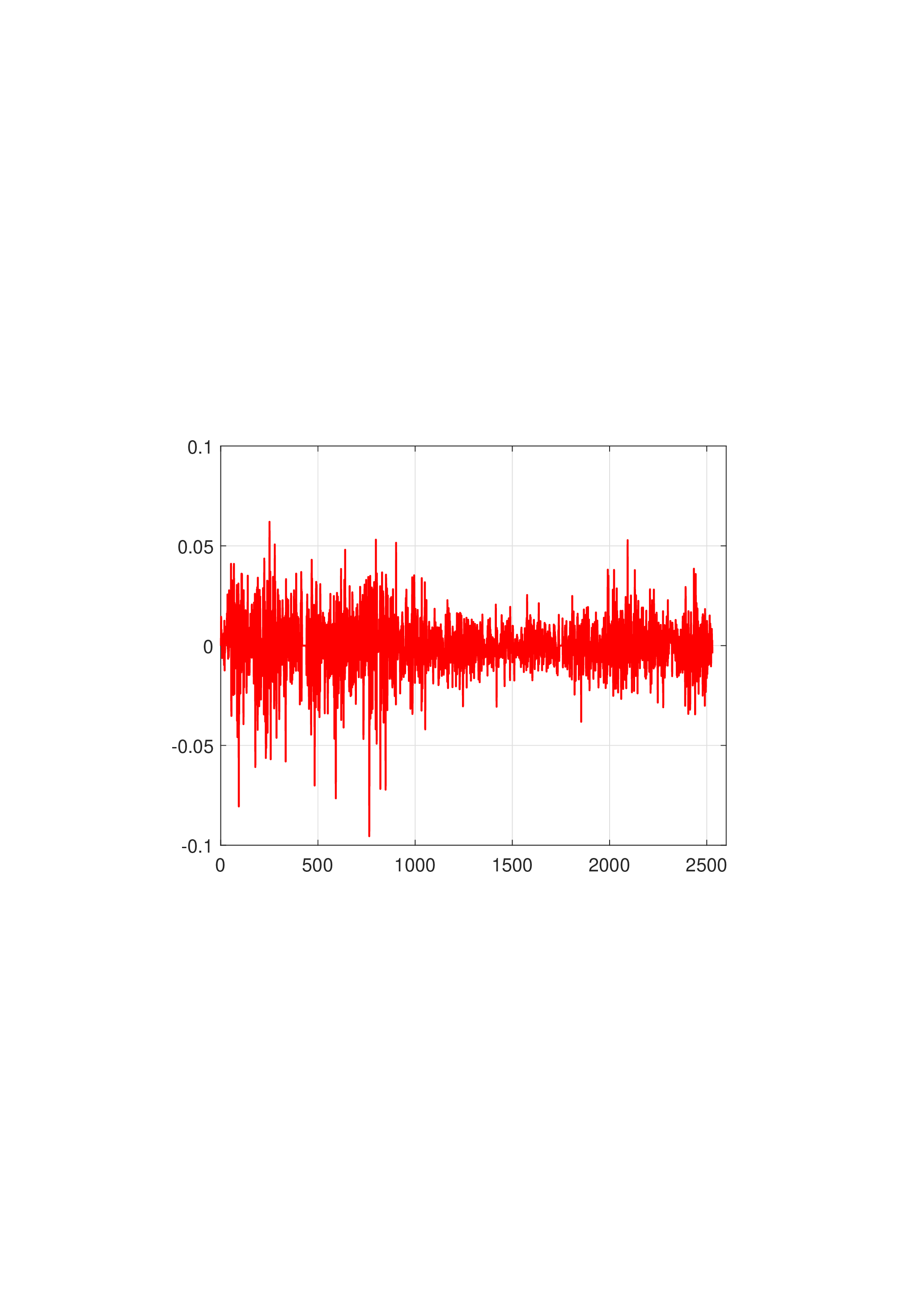}
	\vspace{0.2pc}
	
	\includegraphics[height=3.7cm]{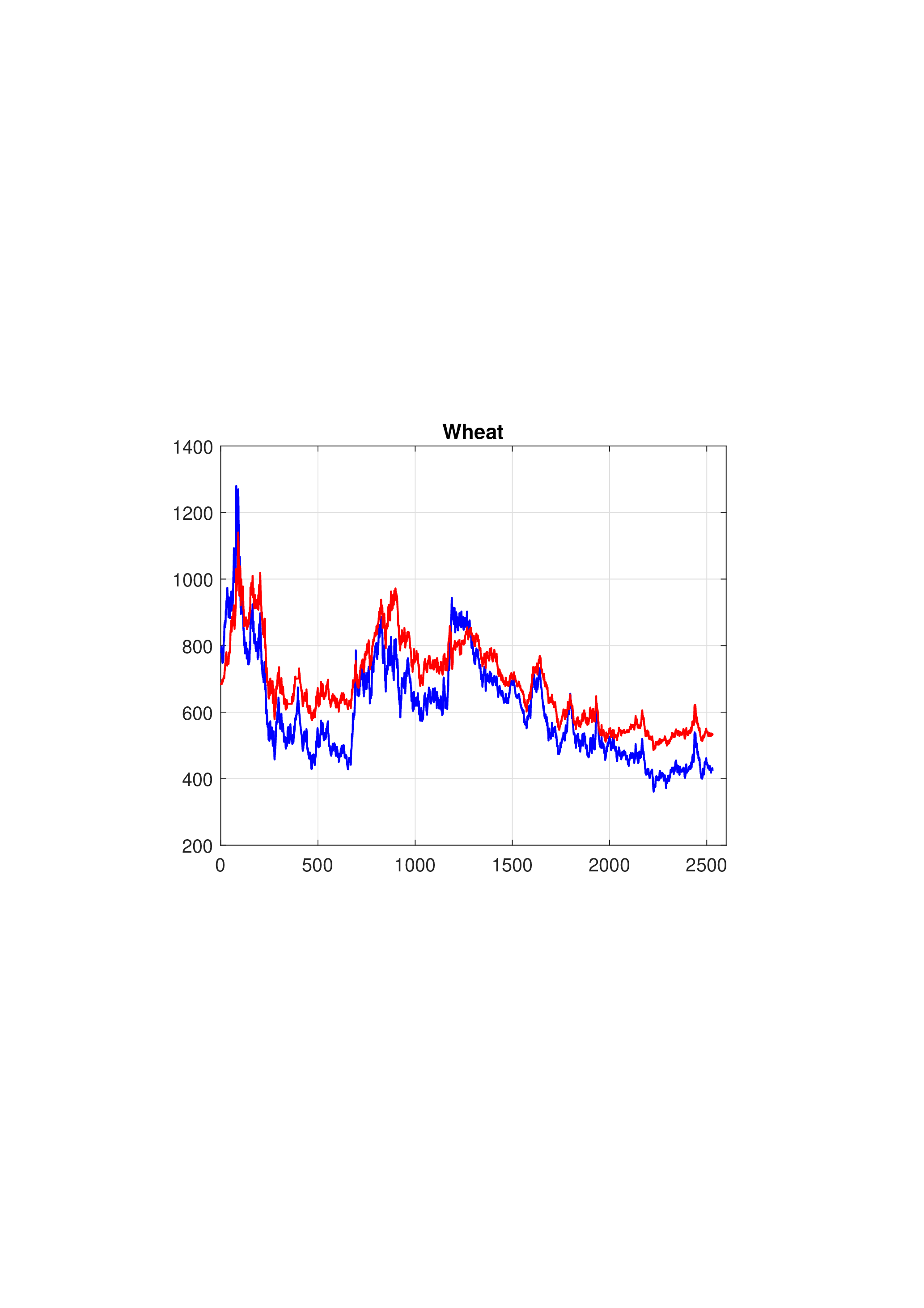} \qquad
	\includegraphics[height=3.5cm]{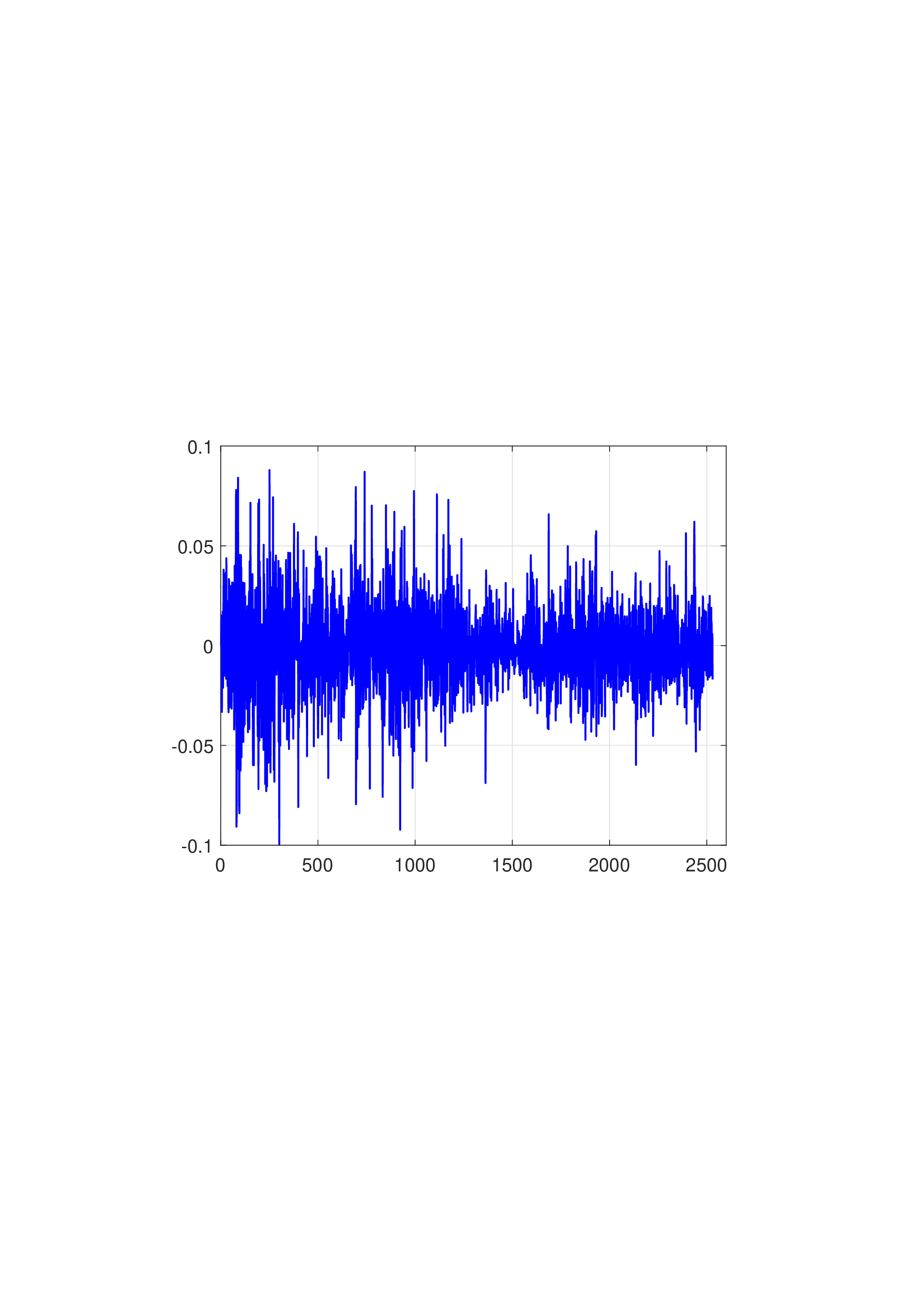} \qquad
	\includegraphics[height=3.5cm]{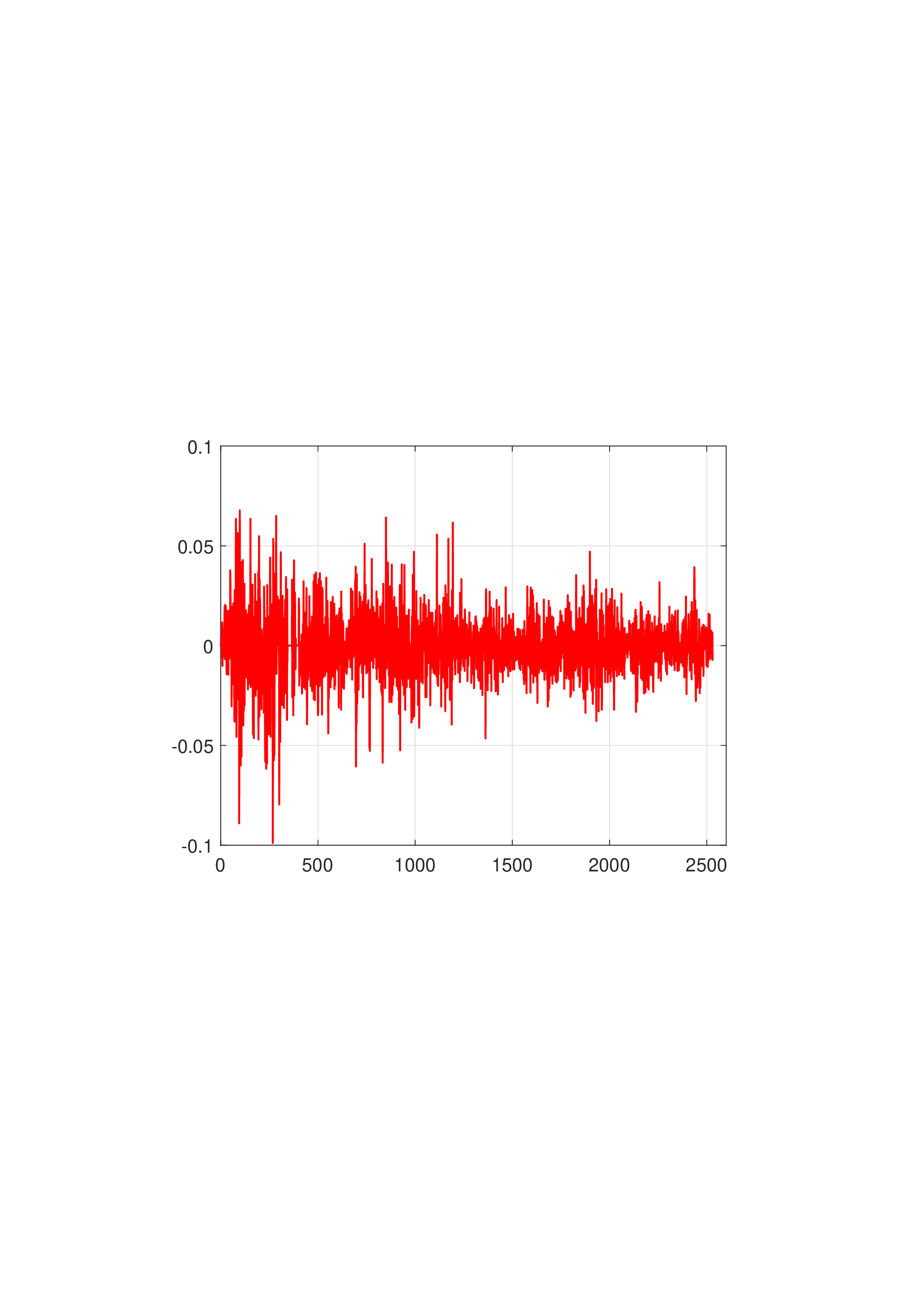}
	\caption{\label{Fig:price_return_ts_plots} Time series of futures prices and log-returns for each commodity.
From \textit{top} to \textit{bottom}: Corn, Cotton, Soybeans, Sugar and Wheat.
\textit{Left column}: prices of the futures with the shortest (\textit{blue}) and longest (\textit{red}) expiries.
\textit{Center column}: log-returns of the contract with the shortest expiry.
\textit{Right column}: log-returns of the futures with the longest expiry.}
\end{figure}

\begin{table}[H]
  \centering
\footnotesize 
     \begin{tabular}{lrrrrrr}
     \toprule
	& Sinusoidal & Exp-sinusoidal & Triangle & Sawtooth & Spiked & Non-seasonal \\
    \midrule
\multicolumn{4}{l}{\textit{Corn:}} \\
LL & $102465.71$ & $102484.74$ & $102472.79$ & $102480.13$ & $102484.19$ & $102453.7$ \\
AIC & $-204893.42$ & $-204931.48$ & $-204907.57$ & $-204922.27$ & $-204930.39$ & $-204873.41$ \\
BIC & $-204782.53$ & $-204820.6$ & $-204796.69$ & $-204811.39$ & $-204819.5$ & $-204774.2$ \\
D1 (LR test 1) & $24.01$ & $62.07$ & $38.16$ & $52.86$ & $60.98$ & $-$ \\
p-value (test 1) & $0.0000$ & $0.0000$ & $0.0000$ & $0.0000$ & $0.0000$ & $-$ \\
$\Delta_{aic}$ & $38.07$ & $0.0000$ & $23.91$ & $9.21$ & $1.1$ & $58.07$ \\
$\omega_i$ & $0.0000$ & $0.6463$ & $0.0000$ & $0.0000$ & $0.3537$ & $0.0000$ \\
LL w/o $\lambda$ & $100161.49$ & $100175.27$ & $100158.01$ & $100144.85$ & $100173.33$ & $100113.78$ \\
D2 (LR test 2) & $4608.43$ & $4618.94$ & $4629.55$ & $4670.56$ & $4621.73$ & $4679.84$ \\
p-value (test 2) & $0.0000$ & $0.0000$ & $0.0000$ & $0.0000$ & $0.0000$ & $0.0000$ \\ \\

\multicolumn{4}{l}{\textit{Cotton:}} \\
LL & $92282.69$ & $92283.76$ & $92280.73$ & $92272.13$ & $92296.98$ & $92261.8$ \\
AIC & $-184527.38$ & $-184529.52$ & $-184523.46$ & $-184506.26$ & $-184555.97$ & $-184489.61$ \\
BIC & $-184416.5$ & $-184418.65$ & $-184412.58$ & $-184395.39$ & $-184445.09$ & $-184390.4$ \\
D1 (LR test 1) & $41.77$ & $43.92$ & $37.85$ & $20.66$ & $70.36$ & $-$ \\
p-value (test 1) & $0.0000$ & $0.0000$ & $0.0000$ & $0.0000$ & $0.0000$ & $-$ \\
$\Delta_{aic}$ & $28.59$ & $26.45$ & $32.51$ & $49.71$ & $0.0000$ & $66.36$ \\
$\omega_i$ & $0.0000$ & $0.0000$ & $0.0000$ & $0.0000$ & $1$ & $0.0000$ \\
LL w/o $\lambda$ & $91488.25$ & $91497.24$ & $91486.01$ & $91484.56$ & $91512.44$ & $91426.22$ \\
D2 (LR test 2) & $1588.87$ & $1573.04$ & $1589.43$ & $1575.14$ & $1569.08$ & $1671.16$ \\
p-value (test 2) & $0.0000$ & $0.0000$ & $0.0000$ & $0.0000$ & $0.0000$ & $0.0000$ \\ \\

\multicolumn{4}{l}{\textit{Soybeans:}} \\
LL & $141142.8$ & $141153.78$ & $141142.27$ & $141140.68$ & $141168.86$ & $141128.88$ \\
AIC & $-282241.6$ & $-282263.57$ & $-282240.54$ & $-282237.36$ & $-282293.73$ & $-282217.75$ \\
BIC & $-282113.21$ & $-282135.18$ & $-282112.14$ & $-282108.97$ & $-282165.33$ & $-282101.03$ \\
D1 (LR test 1) & $27.84$ & $49.82$ & $26.78$ & $23.61$ & $79.97$ & $-$ \\
p-value (test 1) & $0.0000$ & $0.0000$ & $0.0000$ & $0.0000$ & $0.0000$ & $-$ \\
$\Delta_{aic}$ & $52.13$ & $30.16$ & $53.19$ & $56.37$ & $0.0000$ & $75.97$ \\
$\omega_i$ & $0.0000$ & $0.0000$ & $0.0000$ & $0.0000$ & $1$ & $0.0000$ \\
LL w/o $\lambda$ & $139993.99$ & $140003$ & $140002.22$ & $139995.84$ & $140024.2$ & $139974.08$ \\
D2 (LR test 2) & $2297.62$ & $2301.57$ & $2280.1$ & $2289.68$ & $2289.32$ & $2309.6$ \\
p-value (test 2) & $0.0000$ & $0.0000$ & $0.0000$ & $0.0000$ & $0.0000$ & $0.0000$ \\ \\

\multicolumn{4}{l}{\textit{Sugar:}} \\
LL & $64438.15$ & $64438.58$ & $64434.82$ & $64433.59$ & $64437.2$ & $64418.96$ \\
AIC & $-128844.31$ & $-128845.15$ & $-128837.64$ & $-128835.17$ & $-128842.39$ & $-128809.91$ \\
BIC & $-128750.94$ & $-128751.78$ & $-128744.27$ & $-128741.8$ & $-128749.02$ & $-128728.22$ \\
D1 (LR test 1) & $38.39$ & $39.24$ & $31.73$ & $29.26$ & $36.48$ & $-$ \\
p-value (test 1) & $0.0000$ & $0.0000$ & $0.0000$ & $0.0000$ & $0.0000$ & $-$ \\
$\Delta_{aic}$ & $0.85$ & $0.0000$ & $7.51$ & $9.98$ & $2.76$ & $35.24$ \\
$\omega_i$ & $0.4059$ & $0.5812$ & $0.0000$ & $0.0000$ & $0.0129$ & $0.0000$ \\
LL w/o $\lambda$ & $63326.34$ & $63328.12$ & $63326.64$ & $63324.38$ & $63329.57$ & $63313.15$ \\
D2 (LR test 2) & $2223.62$ & $2220.92$ & $2216.36$ & $2218.41$ & $2215.26$ & $2211.6$ \\
p-value (test 2) & $0.0000$ & $0.0000$ & $0.0000$ & $0.0000$ & $0.0000$ & $0.0000$ \\ \\

\multicolumn{4}{l}{\textit{Wheat:}} \\
LL & $101640.24$ & $101638.48$ & $101636.36$ & $101638.01$ & $101635.26$ & $101630.6$ \\
AIC & $-203242.47$ & $-203238.96$ & $-203234.71$ & $-203238.03$ & $-203232.51$ & $-203227.2$ \\
BIC & $-203131.59$ & $-203128.07$ & $-203123.83$ & $-203127.14$ & $-203121.63$ & $-203127.99$ \\
D1 (LR test 1) & $19.27$ & $15.76$ & $11.51$ & $14.83$ & $9.32$ & $-$ \\
p-value (test 1) & $0.0001$ & $0.0004$ & $0.0032$ & $0.0006$ & $0.0095$ & $-$ \\
$\Delta_{aic}$ & $0.0000$ & $3.52$ & $7.76$ & $4.45$ & $9.96$ & $15.27$ \\
$\omega_i$ & $0.9979$ & $0.0021$ & $0.0000$ & $0.0001$ & $0.0000$ & $0.0000$ \\
LL w/o $\lambda$ & $99162.99$ & $99162.51$ & $99159.02$ & $99152.05$ & $99155.02$ & $99146.93$ \\
D2 (LR test 2) & $4954.5$ & $4951.93$ & $4954.68$ & $4971.92$ & $4960.47$ & $4967.34$ \\
p-value (test 2) & $0.0000$ & $0.0000$ & $0.0000$ & $0.0000$ & $0.0000$ & $0.0000$ \\
    \bottomrule
    \end{tabular}
\caption{Summary of the results obtained for each commodity with the considered models: log-likelihood, Akaike Information Criterion,
Bayesian Information Criterion, Statistics D1 and D2 of the likelihood ratio tests and their $p$-values,
AIC differences $\Delta_{aic}$ and Akaike weights $\omega_i$.} 
\label{tab:summary_results}
\end{table}


\begin{table}[H]
  \centering
     \begin{tabular}{ccccccccccc}
     \toprule
\multicolumn{3}{c}{\textit{Corn:}} & & \multicolumn{3}{c}{\textit{Cotton:}} & & \multicolumn{3}{c}{\textit{Soybeans:}} \\
\cmidrule(lr){1-3} \cmidrule(lr){5-7} \cmidrule(lr){9-11}
rank & $\Delta_{aic}$ & model & & rank & $\Delta_{aic}$ & model & & rank & $\Delta_{aic}$ & model \\
\cmidrule(lr){1-3} \cmidrule(lr){5-7} \cmidrule(lr){9-11}
$1$ & $0$ & Exp-sinusoidal & & $1$ & $0$ & Spiked & & $1$ & $0$ & Spiked  \\
$2$ & $1.1$ & Spiked & & $2$ & $26.45$ & Exp-sinusoidal & & $2$ & $30.16$ & Exp-sinusoidal  \\
$3$ & $9.21$ & Sawtooth & & $3$ & $28.59$ & Sinusoidal & & $3$ & $52.13$ & Sinusoidal  \\
$4$ & $23.91$ & Triangle & & $4$ & $32.51$ & Triangle & & $4$ & $53.19$ & Triangle  \\
$5$ & $38.07$ & Sinusoidal & & $5$ & $49.71$ & Sawtooth & & $5$ & $56.37$ & Sawtooth  \\
$6$ & $58.07$ & Non-seasonal & & $6$ & $66.36$ & Non-seasonal & & $6$ & $75.97$ & Non-seasonal  \\ \\
\multicolumn{3}{c}{\textit{Sugar:}} & & \multicolumn{3}{c}{\textit{Wheat:}} & & & & \\
\cmidrule(lr){1-3} \cmidrule(lr){5-7}
rank & $\Delta_{aic}$ & model & & rank & $\Delta_{aic}$ & model & &  &  & \\
\cmidrule(lr){1-3} \cmidrule(lr){5-7}
$1$ & $0$ & Exp-sinusoidal & & $1$ & $0$ & Sinusoidal & & & & \\
$2$ & $0.85$ & Sinusoidal & & $2$ & $3.52$ & Exp-sinusoidal & & & & \\
$3$ & $2.76$ & Spiked & & $3$ & $4.45$ & Sawtooth & & & & \\
$4$ & $7.51$ & Triangle & & $4$ & $7.76$ & Triangle & & & & \\
$5$ & $9.98$ & Sawtooth & & $5$ & $9.96$ & Spiked & & & & \\
$6$ & $35.24$ & Non-seasonal & & $6$ & $15.27$ & Non-seasonal & & & & \\
    \bottomrule
    \end{tabular}
\caption{Ranking of models for each commodity, based on AIC. For each model, the differences with respect to the smallest AIC, named $\Delta_{aic}$,
is provided with the model rank and name.}
\label{tab:models_ranking}
\end{table}

\begin{figure}[H]
\centering
	\includegraphics[height=6.0cm]{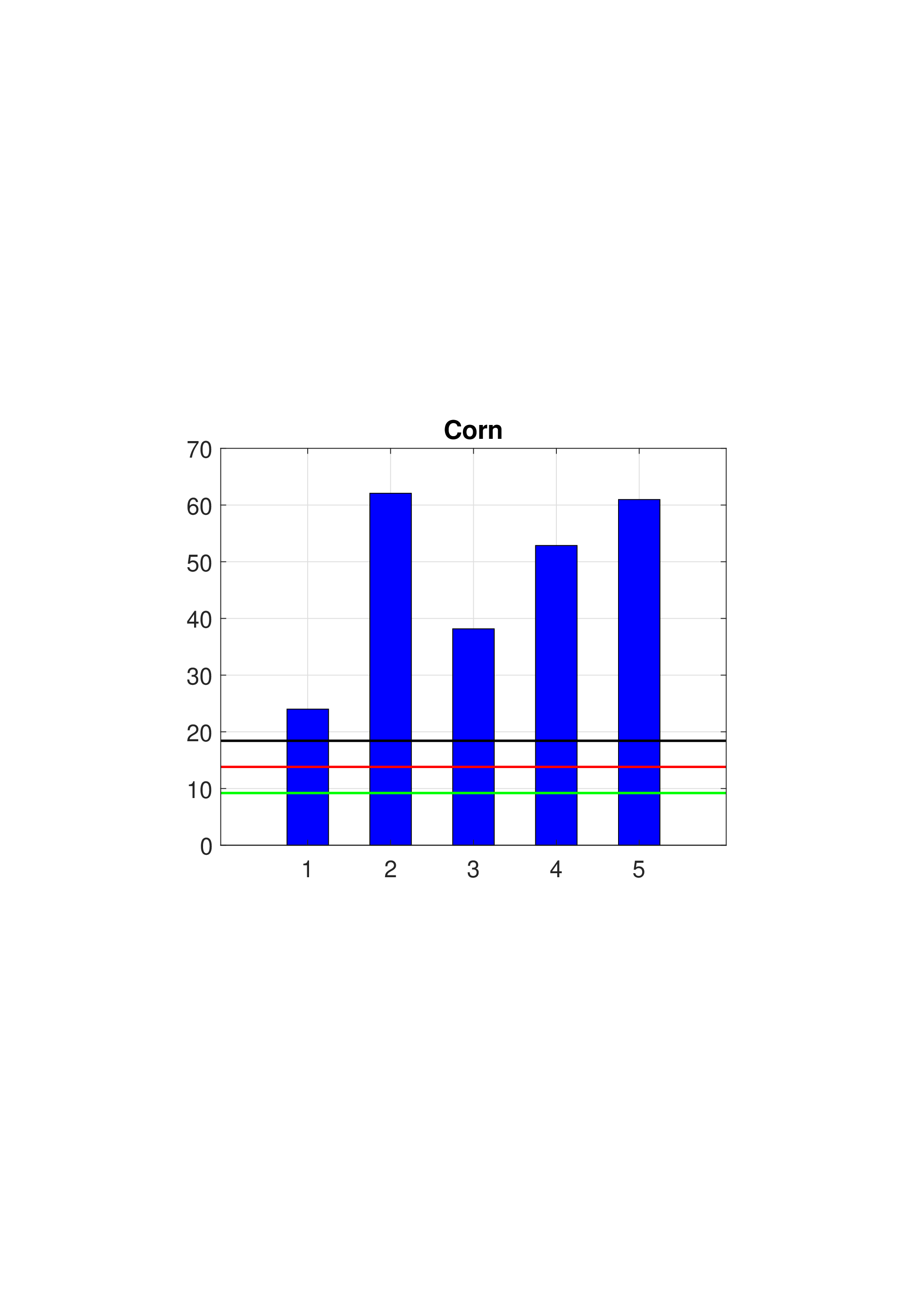} \qquad
	\includegraphics[height=6.0cm]{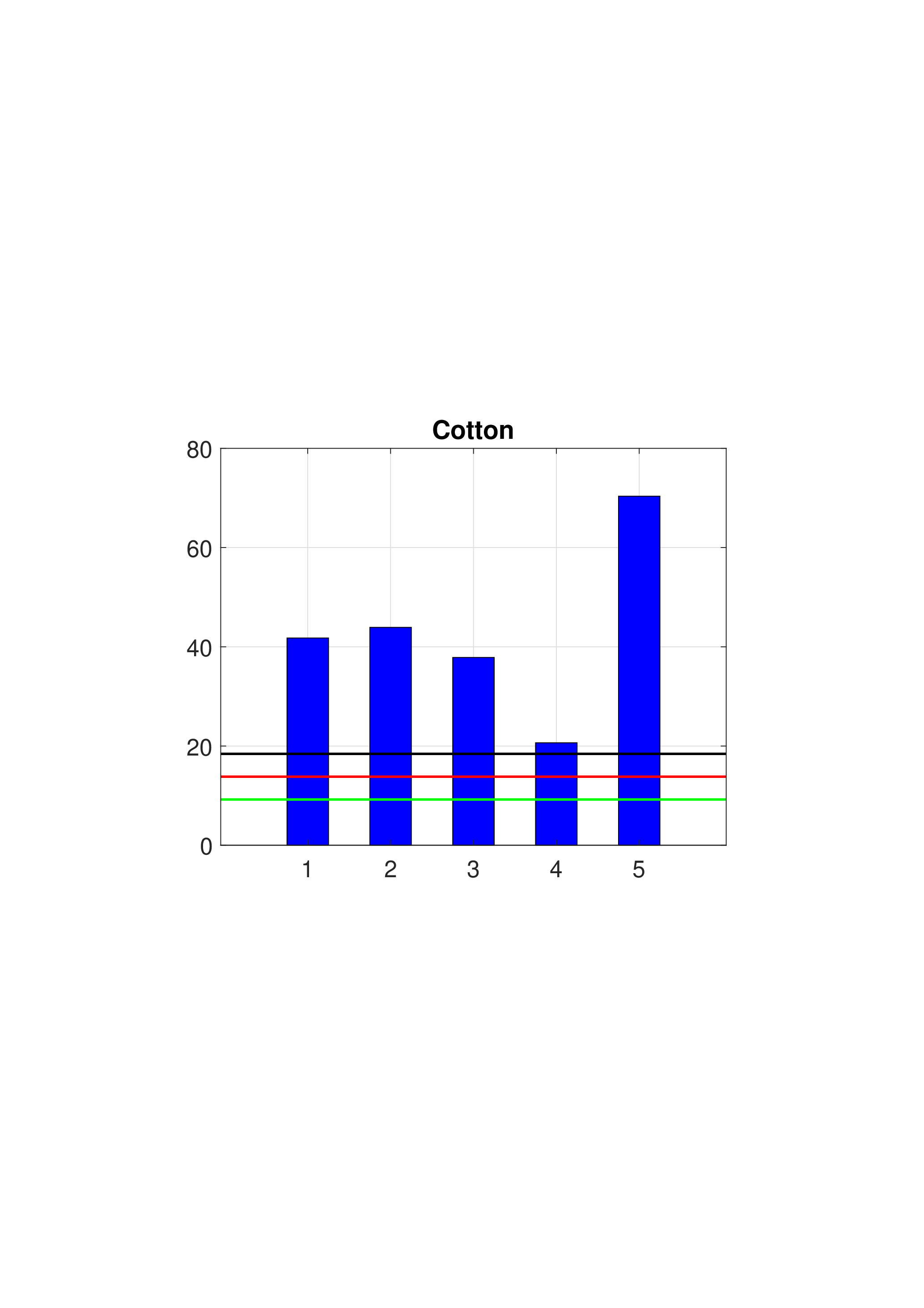}
	\vspace{1.0pc}
	
	\includegraphics[height=6.0cm]{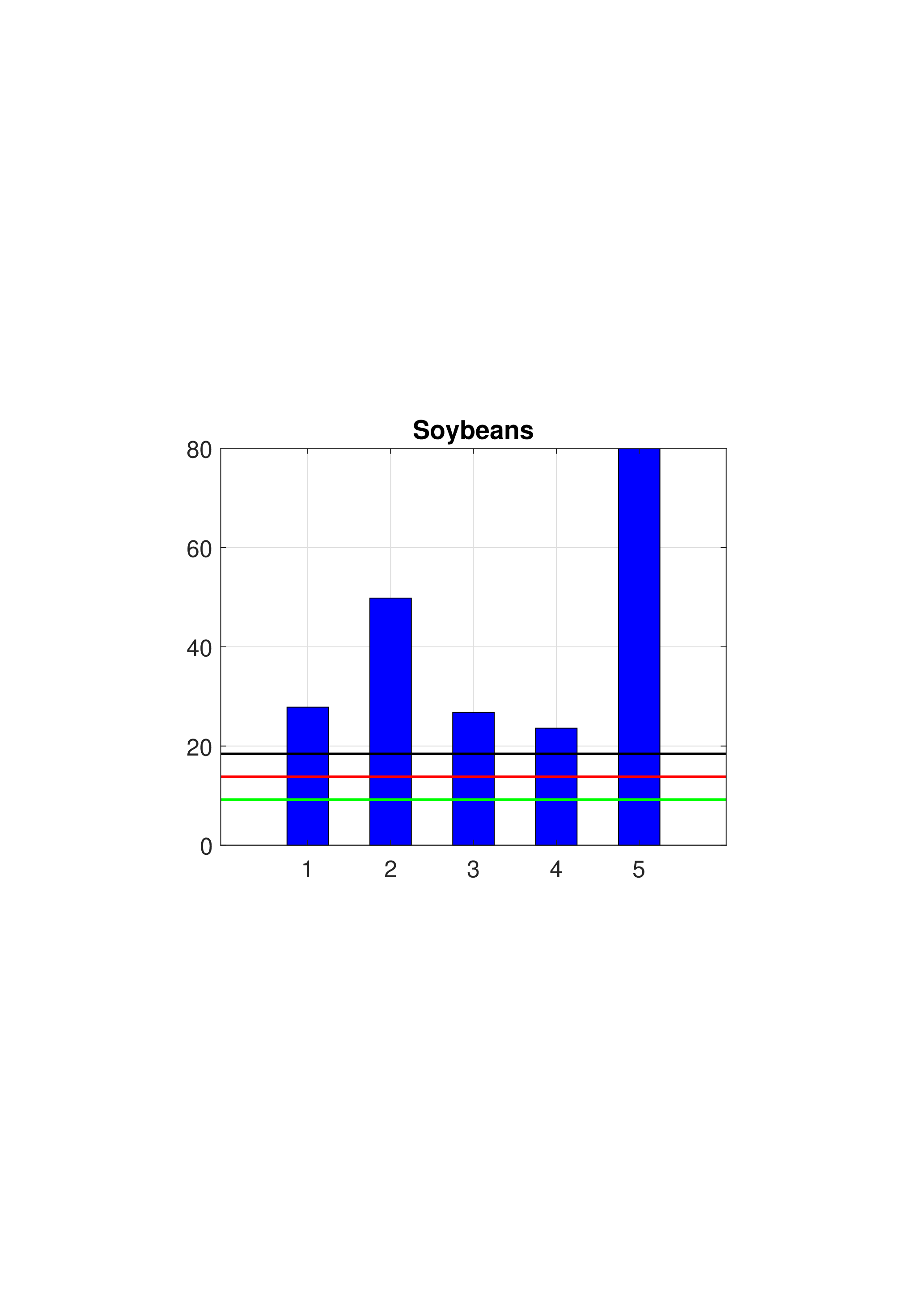} \qquad
	\includegraphics[height=6.0cm]{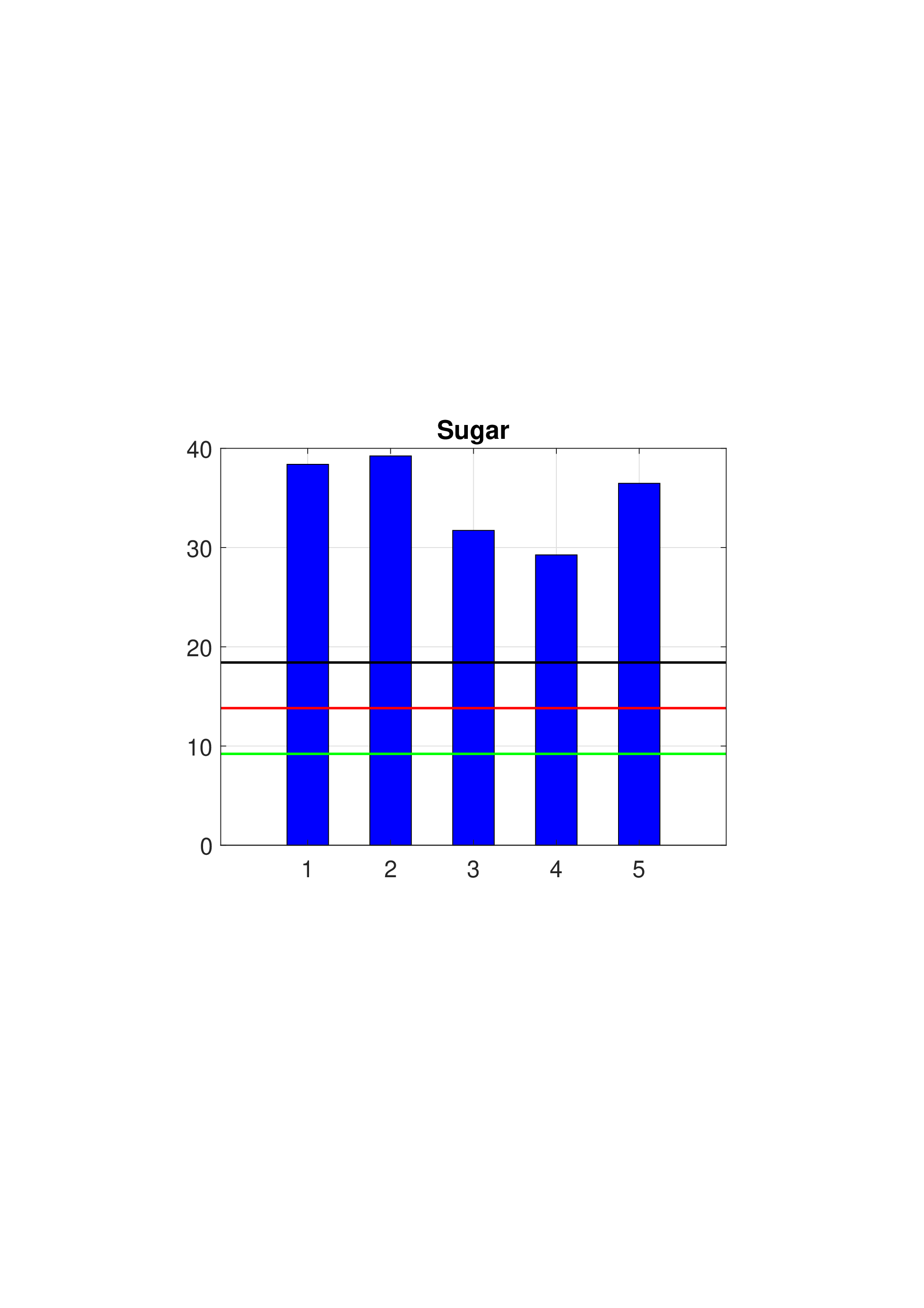}
	\vspace{1.0pc}
	
	\includegraphics[height=6.0cm]{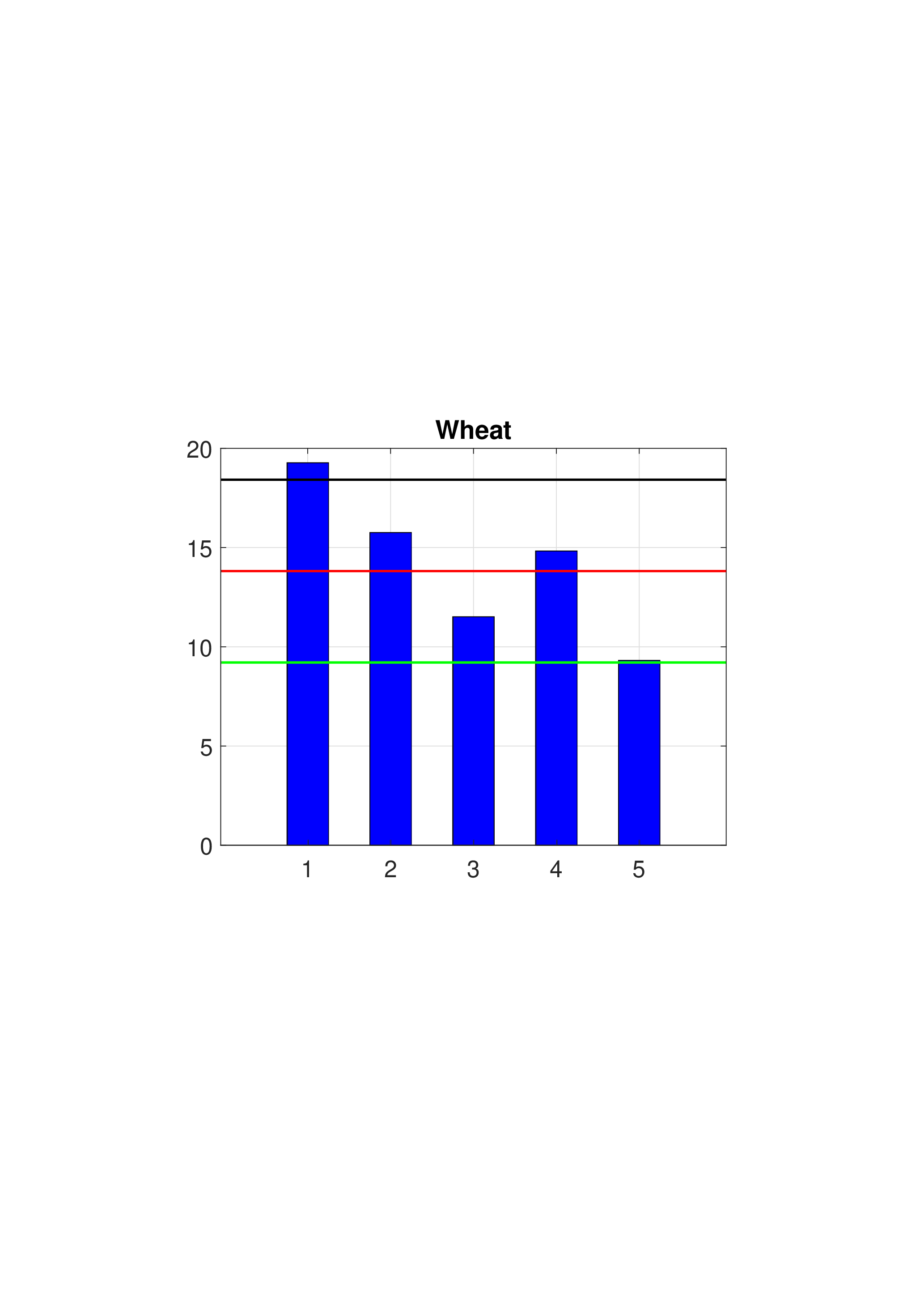}
	\caption{\label{Fig:lrtests_plots} Results of the likelihood ratio tests for each commodity: seasonal models versus the non-seasonal model.
Bars represent the values taken by the $D$ statistic of the test and the horizontal lines correspond to the significance thresholds to reject
the (null) hypothesis of a non-seasonal model, at 99\% (\textit{green}), 99.9\% (\textit{red}) and 99.99\% (\textit{black}) confidence levels.
Tested seasonal models are numbered as: 1. sinusoidal, 2. exp-sinusoidal, 3. triangle, 4. sawtooth and 5. spiked.
\textit{Upper left}: Corn. \textit{Upper right}: Cotton. \textit{Center left}: Soybeans. \textit{Center right}: Sugar. \textit{Lower}: Wheat.}
\end{figure}

\begin{figure}[H]
\centering
	\includegraphics[height=6.0cm]{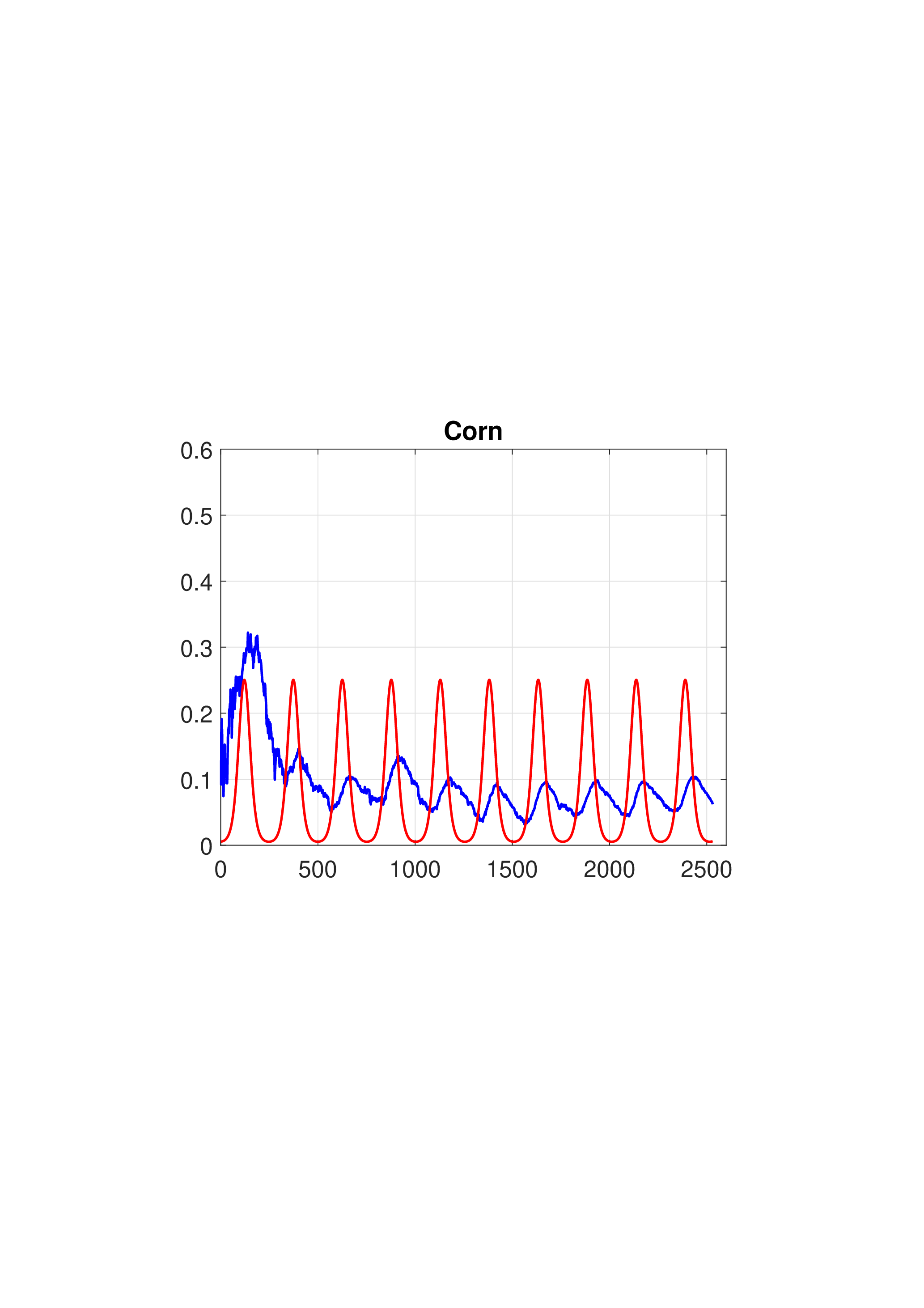} \qquad
	\includegraphics[height=6.0cm]{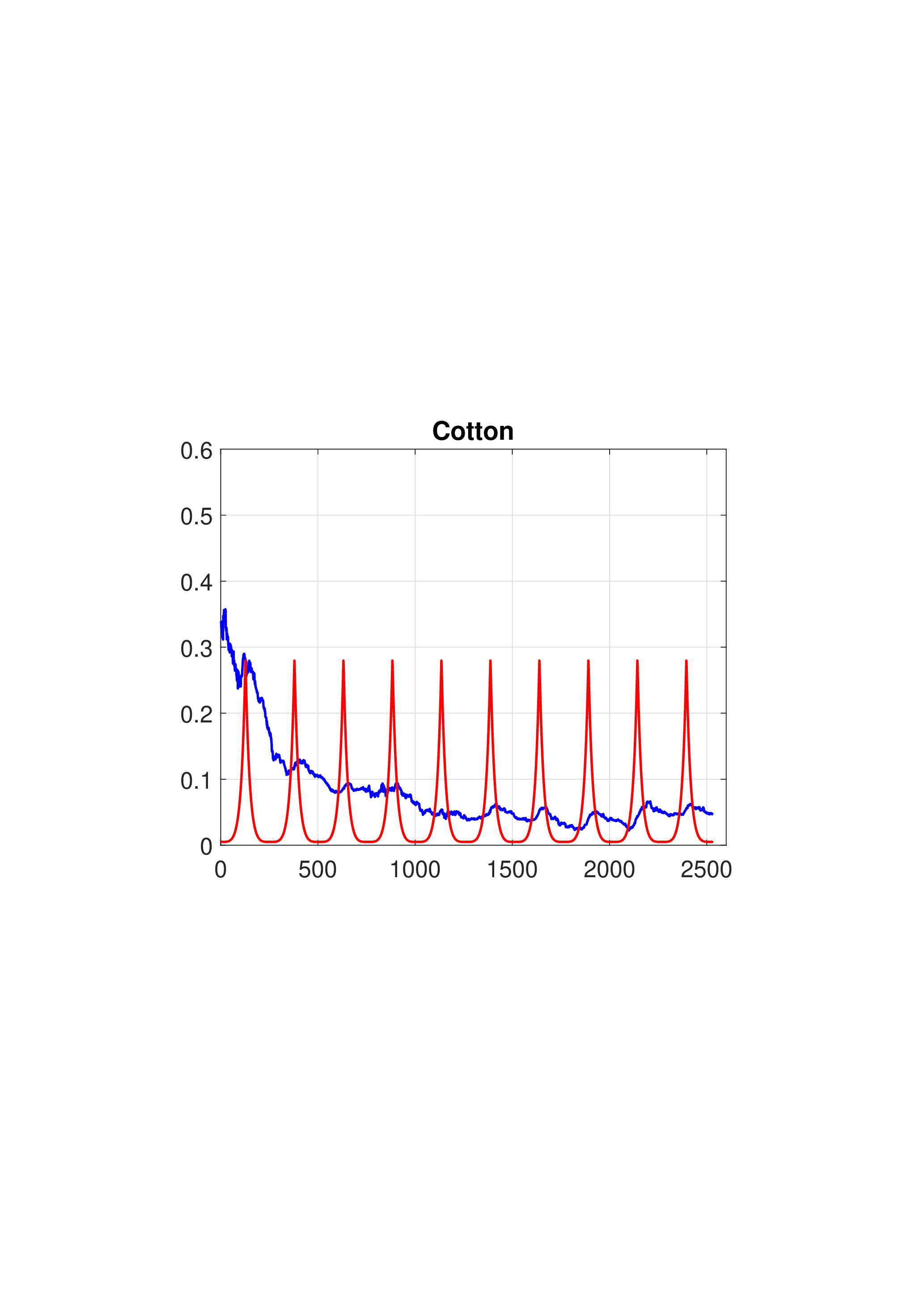}
	\vspace{1.0pc}
	
	\includegraphics[height=6.0cm]{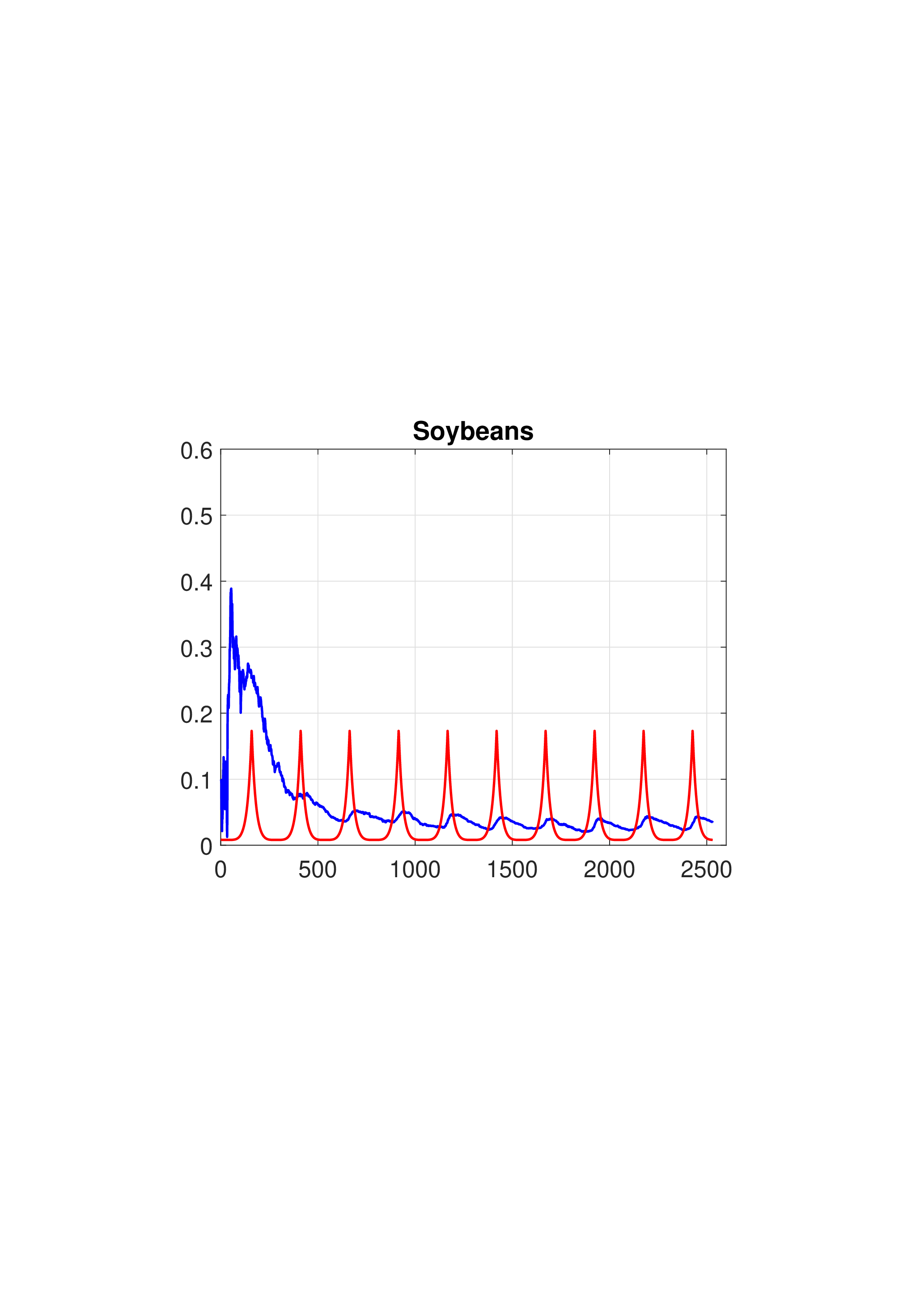} \qquad
	\includegraphics[height=6.0cm]{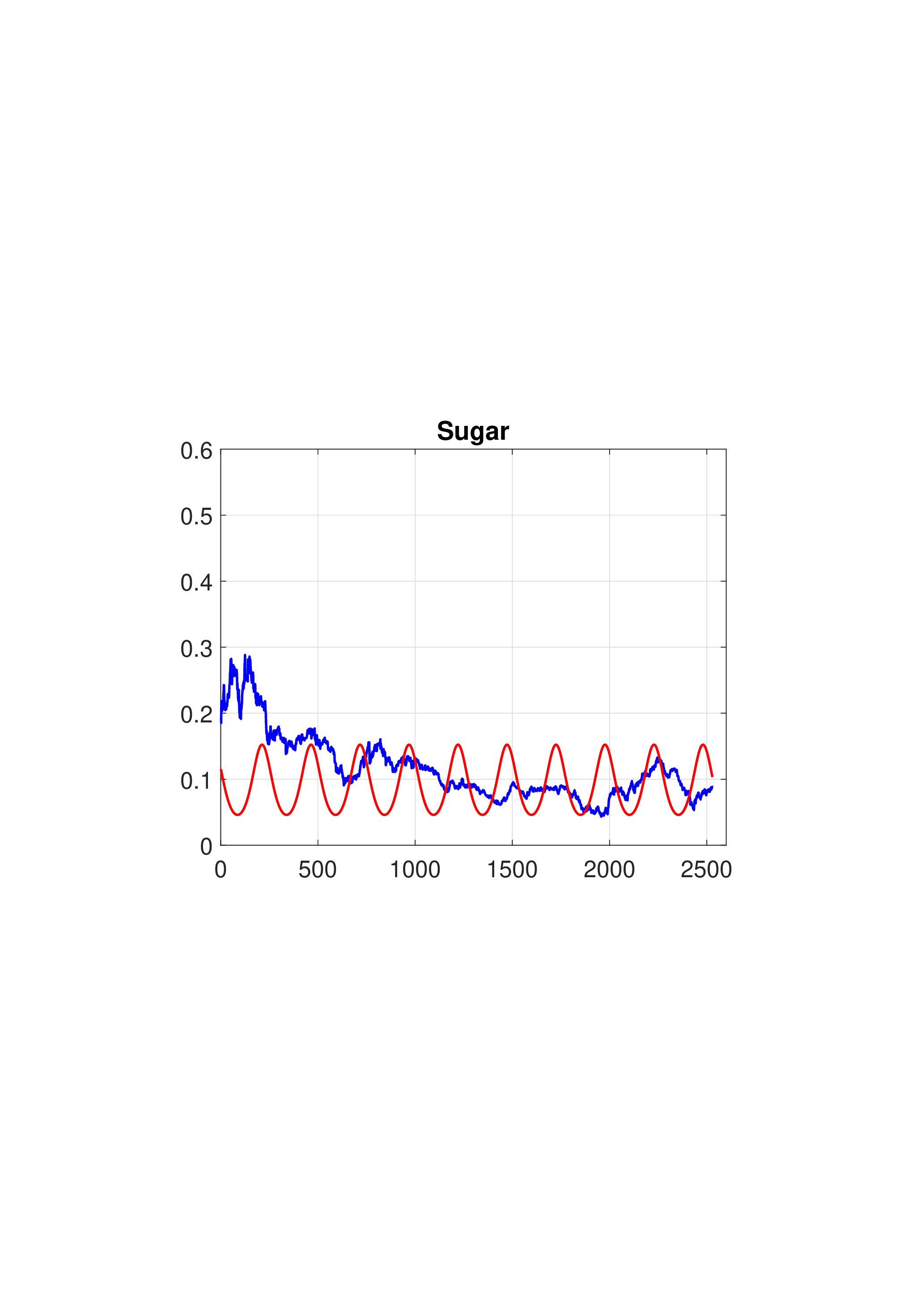}
	\vspace{1.0pc}
	
	\includegraphics[height=6.0cm]{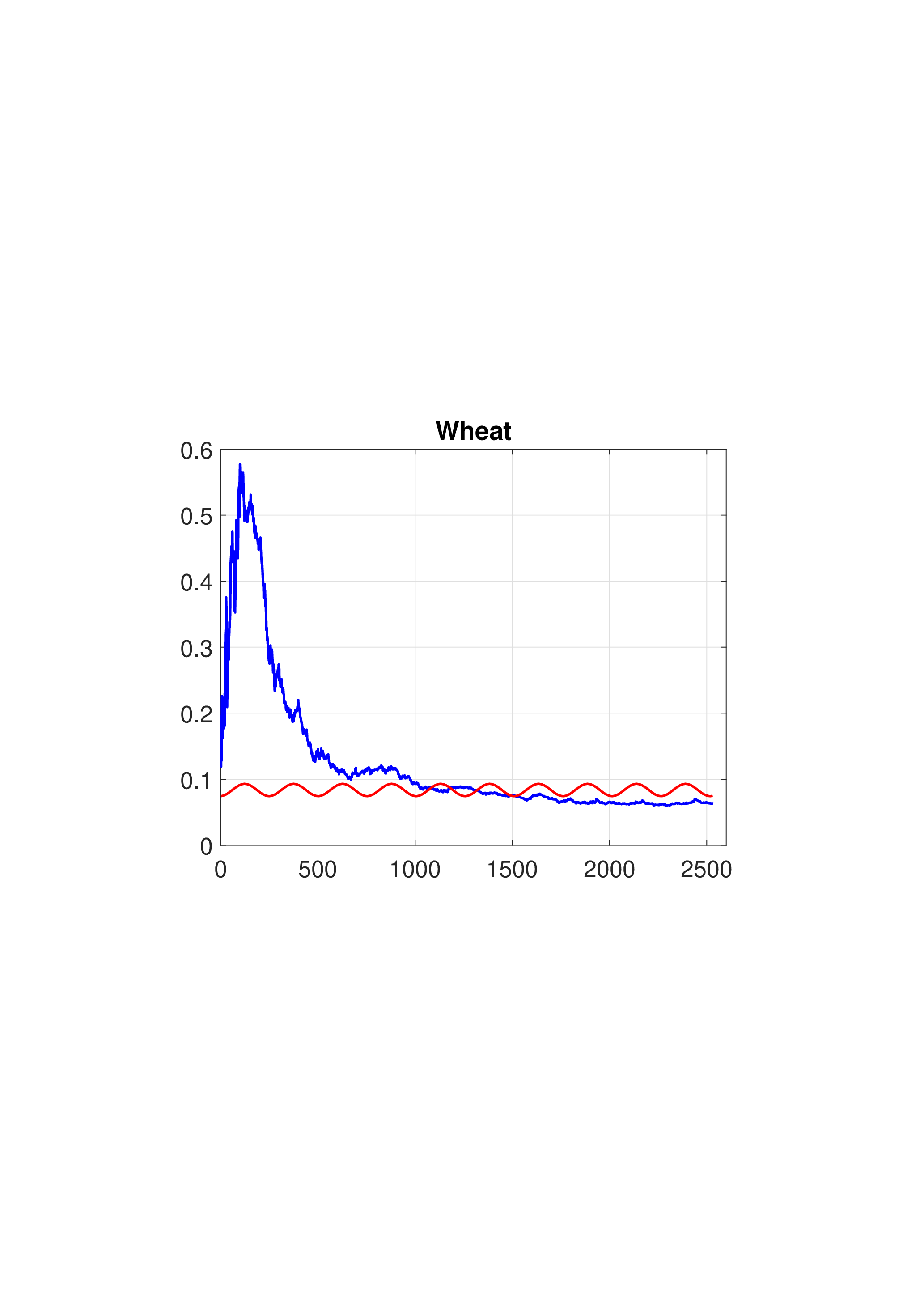}
	\caption{\label{Fig:sv3ts_plots} Time series of $v(t)$ (\textit{blue}) and $\theta(t)$ (\textit{red}) obtained the best model according to Table \ref{tab:models_ranking},
for each commodity.
\textit{Upper left}: Corn. \textit{Upper right}: Cotton.
\textit{Center left}: Soybeans. \textit{Center right}: Sugar.
\textit{Lower}: Wheat.}
\end{figure}

\begin{table}[H]
  \centering
\footnotesize 
     \begin{tabular}{crrrrrr}
     \toprule
	& Sinusoidal & Exp-sinusoidal & Triangle & Sawtooth & Spiked & Non-seasonal \\
    \midrule
$\lambda$ & $0.2122$ & $0.2122$ & $0.2113$ & $0.2094$ & $0.2075$ & $0.2122$ \\
 & $-1.5503$ & $-1.5503$ & $-1.5543$ & $-1.5634$ & $-1.5727$ & $-1.5504$ \\
 & $(0.0134)$ & $(0.0135)$ & $(0.0136)$ & $(0.0138)$ & $(0.0138)$ & $(0.0135)$ \\
$\pi^F$ & $2.1940$ & $2.4622$ & $1.9434$ & $3.1085$ & $2.1273$ & $3.0675$ \\
 & $2.1940$ & $2.4622$ & $1.9434$ & $3.1085$ & $2.1273$ & $3.0675$ \\
 & $(0.1205)$ & $(0.1354)$ & $(0.1625)$ & $(0.0980)$ & $(0.1454)$ & $(0.1377)$ \\
$v_0$ & $0.0851$ & $0.0925$ & $0.0775$ & $0.1012$ & $0.0796$ & $0.1080$ \\
 & $-2.4644$ & $-2.3804$ & $-2.5570$ & $-2.2906$ & $-2.5303$ & $-2.2261$ \\
 & $(0.6804)$ & $(0.5332)$ & $(0.4761)$ & $(0.2837)$ & $(0.5373)$ & $(0.4920)$ \\
$\kappa$ & $1.5624$ & $1.4066$ & $1.5300$ & $0.5307$ & $1.4500$ & $0.9043$ \\
 & $0.4462$ & $0.3411$ & $0.4252$ & $-0.6336$ & $0.3716$ & $-0.1005$ \\
 & $(0.0530)$ & $(0.0576)$ & $(0.0612)$ & $(0.0737)$ & $(0.0517)$ & $(0.0904)$ \\
$\sigma$ & $0.3984$ & $0.3364$ & $0.3528$ & $0.1033$ & $0.3311$ & $0.1579$ \\
 & $-0.9202$ & $-1.0894$ & $-1.0418$ & $-2.2700$ & $-1.1054$ & $-1.8456$ \\
 & $(0.0851)$ & $(0.0943)$ & $(0.2152)$ & $(0.0875)$ & $(0.1380)$ & $(0.0881)$ \\
$\rho$ & $-0.0269$ & $-0.0295$ & $-0.0172$ & $-0.0390$ & $-0.0342$ & $0.0639$ \\
 & $-0.0422$ & $-0.0463$ & $-0.0270$ & $-0.0614$ & $-0.0538$ & $0.1006$ \\
 & $(0.0380)$ & $(0.0451)$ & $(0.0487)$ & $(0.0752)$ & $(0.0459)$ & $(0.0727)$ \\
$a$ & $0.0719$ & $0.0364$ & $0.0015$ & $0.0040$ & $0.0124$ & $0.0742$ \\
 & $-2.6330$ & $-3.3132$ & $-6.5253$ & $-5.5215$ & $-4.3901$ & $-2.6003$ \\
 & $(0.0308)$ & $(0.0325)$ & $(1.0243)$ & $(0.7224)$ & $(0.1831)$ & $(0.0327)$ \\
$b$ & $0.0597$ & $1.9290$ & $0.2966$ & $0.1138$ & $0.3948$ & - \\
 & $1.7073$ & $0.6570$ & $-1.2154$ & $-2.1734$ & $-0.9294$ & - \\
 & $(0.6640)$ & $(0.0244)$ & $(0.0332)$ & $(0.0403)$ & $(0.0392)$ & - \\
$t_0$ & $0.3120$ & $0.3112$ & $0.3027$ & $0.3546$ & $0.2850$ & - \\
 & $-0.6706$ & $-0.6743$ & $-0.7136$ & $-0.4916$ & $-0.8012$ & - \\
 & $(0.1057)$ & $(0.0784)$ & $(0.2101)$ & $(0.0762)$ & $(0.2818)$ & - \\
$h_1$ & $0.0066$ & $0.0066$ & $0.0066$ & $0.0066$ & $0.0066$ & $0.0066$ \\
 & $-5.0215$ & $-5.0185$ & $-5.0237$ & $-5.0262$ & $-5.0242$ & $-5.0219$ \\
 & $(0.0144)$ & $(0.0144)$ & $(0.0144)$ & $(0.0143)$ & $(0.0144)$ & $(0.0144)$ \\
$h_2$ & $0.0040$ & $0.0040$ & $0.0040$ & $0.0040$ & $0.0040$ & $0.0040$ \\
 & $-5.5154$ & $-5.5148$ & $-5.5176$ & $-5.5170$ & $-5.5193$ & $-5.5157$ \\
 & $(0.0149)$ & $(0.0149)$ & $(0.0149)$ & $(0.0149)$ & $(0.0149)$ & $(0.0149)$ \\
$h_3$ & $0.0027$ & $0.0027$ & $0.0027$ & $0.0027$ & $0.0027$ & $0.0027$ \\
 & $-5.9012$ & $-5.9070$ & $-5.9156$ & $-5.9110$ & $-5.9169$ & $-5.9043$ \\
 & $(0.0158)$ & $(0.0158)$ & $(0.0157)$ & $(0.0157)$ & $(0.0156)$ & $(0.0158)$ \\
$h_4$ & $0.0019$ & $0.0019$ & $0.0019$ & $0.0019$ & $0.0019$ & $0.0019$ \\
 & $-6.2774$ & $-6.2778$ & $-6.2805$ & $-6.2846$ & $-6.2624$ & $-6.2780$ \\
 & $(0.0179)$ & $(0.0179)$ & $(0.0179)$ & $(0.0178)$ & $(0.0178)$ & $(0.0178)$ \\
$h_5$ & $0.0015$ & $0.0015$ & $0.0015$ & $0.0015$ & $0.0015$ & $0.0015$ \\
 & $-6.5067$ & $-6.5097$ & $-6.5100$ & $-6.5146$ & $-6.5042$ & $-6.5069$ \\
 & $(0.0203)$ & $(0.0203)$ & $(0.0203)$ & $(0.0203)$ & $(0.0203)$ & $(0.0203)$ \\
$\dots$ & $0.0021$ & $0.0021$ & $0.0021$ & $0.0021$ & $0.0021$ & $0.0021$ \\
 & $-6.1692$ & $-6.1701$ & $-6.1713$ & $-6.1747$ & $-6.1767$ & $-6.1708$ \\
 & $(0.0165)$ & $(0.0164)$ & $(0.0164)$ & $(0.0164)$ & $(0.0165)$ & $(0.0164)$ \\
 & $0.0031$ & $0.0031$ & $0.0030$ & $0.0030$ & $0.0030$ & $0.0030$ \\
 & $-5.7924$ & $-5.7918$ & $-5.7961$ & $-5.7965$ & $-5.8039$ & $-5.7937$ \\
 & $(0.0150)$ & $(0.0150)$ & $(0.0149)$ & $(0.0149)$ & $(0.0149)$ & $(0.0150)$ \\
 & $0.0038$ & $0.0038$ & $0.0038$ & $0.0038$ & $0.0038$ & $0.0038$ \\
 & $-5.5798$ & $-5.5794$ & $-5.5657$ & $-5.5783$ & $-5.5845$ & $-5.5801$ \\
 & $(0.0146)$ & $(0.0146)$ & $(0.0148)$ & $(0.0146)$ & $(0.0146)$ & $(0.0146)$ \\
 & $0.0042$ & $0.0043$ & $0.0043$ & $0.0043$ & $0.0042$ & $0.0043$ \\
 & $-5.4625$ & $-5.4597$ & $-5.4521$ & $-5.4573$ & $-5.4683$ & $-5.4606$ \\
 & $(0.0145)$ & $(0.0145)$ & $(0.0146)$ & $(0.0145)$ & $(0.0144)$ & $(0.0145)$ \\
 & $0.0047$ & $0.0047$ & $0.0047$ & $0.0047$ & $0.0047$ & $0.0047$ \\
 & $-5.3582$ & $-5.3585$ & $-5.3637$ & $-5.3595$ & $-5.3707$ & $-5.3599$ \\
 & $(0.0144)$ & $(0.0143)$ & $(0.0143)$ & $(0.0143)$ & $(0.0142)$ & $(0.0143)$ \\

    \bottomrule
    \end{tabular}
\caption{Estimated parameters for Corn. For each parameter: estimated value in the standard space, transformed in $\mathbb{R}$
and standard error between parenthesis (standard deviation of the estimate in $\mathbb{R}$).}
\label{tab:estimated_parameters_corn}
\end{table}

\begin{table}[H]
  \centering
\footnotesize 
     \begin{tabular}{crrrrrr}
     \toprule
	& Sinusoidal & Exp-sinusoidal & Triangle & Sawtooth & Spiked & Non-seasonal \\
    \midrule
$\lambda$ & $0.2174$ & $0.2209$ & $0.2200$ & $0.2175$ & $0.2176$ & $0.2191$ \\
 & $-1.5258$ & $-1.5098$ & $-1.5142$ & $-1.5256$ & $-1.5251$ & $-1.5184$ \\
 & $(0.0207)$ & $(0.0203)$ & $(0.0205)$ & $(0.0209)$ & $(0.0209)$ & $(0.0208)$ \\
$\pi^F$ & $0.0001$ & $0.0002$ & $0.0002$ & $0.0003$ & $0.0001$ & $-0.0069$ \\
 & $0.0001$ & $0.0002$ & $0.0002$ & $0.0003$ & $0.0001$ & $-0.0069$ \\
 & $(0.1644)$ & $(0.1644)$ & $(0.1778)$ & $(0.2043)$ & $(0.1927)$ & $(0.1877)$ \\
$v_0$ & $0.3535$ & $0.3324$ & $0.3404$ & $0.3492$ & $0.3391$ & $0.3926$ \\
 & $-1.0398$ & $-1.1015$ & $-1.0777$ & $-1.0520$ & $-1.0814$ & $-0.9349$ \\
 & $(0.0730)$ & $(0.0756)$ & $(0.0754)$ & $(0.0828)$ & $(0.0754)$ & $(0.0788)$ \\
$\kappa$ & $0.7649$ & $0.7953$ & $0.7867$ & $0.8736$ & $0.8119$ & $0.8133$ \\
 & $-0.2680$ & $-0.2291$ & $-0.2399$ & $-0.1351$ & $-0.2084$ & $-0.2066$ \\
 & $(0.0532)$ & $(0.0554)$ & $(0.0551)$ & $(0.0593)$ & $(0.0581)$ & $(0.0523)$ \\
$\sigma$ & $0.1069$ & $0.1048$ & $0.0925$ & $0.0817$ & $0.0894$ & $0.1145$ \\
 & $-2.2356$ & $-2.2553$ & $-2.3804$ & $-2.5046$ & $-2.4148$ & $-2.1669$ \\
 & $(0.0928)$ & $(0.0940)$ & $(0.1028)$ & $(0.1176)$ & $(0.1027)$ & $(0.1402)$ \\
$\rho$ & $0.7025$ & $0.7222$ & $0.7615$ & $0.7514$ & $0.8206$ & $0.5079$ \\
 & $1.9821$ & $2.1439$ & $2.5432$ & $2.4291$ & $3.4533$ & $1.0252$ \\
 & $(0.4140)$ & $(0.4828)$ & $(0.6766)$ & $(0.6739)$ & $(1.0302)$ & $(0.1526)$ \\
$a$ & $0.0445$ & $0.0360$ & $0.0050$ & $0.0050$ & $0.0050$ & $0.0445$ \\
 & $-3.1120$ & $-3.3250$ & $-5.2983$ & $-5.2983$ & $-5.2983$ & $-3.1118$ \\
 & $(0.0312)$ & $(0.0312)$ & $(0.2761)$ & $(0.2974)$ & $(0.2455)$ & $(0.0336)$ \\
$b$ & $0.0385$ & $1.0461$ & $0.1643$ & $0.0810$ & $0.2785$ & - \\
 & $2.2162$ & $0.0451$ & $-1.8063$ & $-2.5135$ & $-1.2782$ & - \\
 & $(1.3002)$ & $(0.0623)$ & $(0.0358)$ & $(0.0361)$ & $(0.0359)$ & - \\
$t_0$ & $0.2927$ & $0.2793$ & $0.3051$ & $0.4251$ & $0.3344$ & - \\
 & $-0.7622$ & $-0.8308$ & $-0.7021$ & $-0.2397$ & $-0.5727$ & - \\
 & $(0.1933)$ & $(0.1943)$ & $(0.4005)$ & $(0.0612)$ & $(0.0486)$ & - \\
$h_1$ & $0.0118$ & $0.0117$ & $0.0117$ & $0.0117$ & $0.0117$ & $0.0117$ \\
 & $-4.4418$ & $-4.4516$ & $-4.4515$ & $-4.4473$ & $-4.4489$ & $-4.4486$ \\
 & $(0.0144)$ & $(0.0143)$ & $(0.0143)$ & $(0.0144)$ & $(0.0144)$ & $(0.0144)$ \\
$h_2$ & $0.0080$ & $0.0080$ & $0.0081$ & $0.0081$ & $0.0080$ & $0.0080$ \\
 & $-4.8259$ & $-4.8251$ & $-4.8175$ & $-4.8119$ & $-4.8230$ & $-4.8249$ \\
 & $(0.0147)$ & $(0.0147)$ & $(0.0148)$ & $(0.0148)$ & $(0.0147)$ & $(0.0147)$ \\
$h_3$ & $0.0074$ & $0.0074$ & $0.0075$ & $0.0075$ & $0.0074$ & $0.0074$ \\
 & $-4.9029$ & $-4.8997$ & $-4.8958$ & $-4.8909$ & $-4.9009$ & $-4.9072$ \\
 & $(0.0147)$ & $(0.0148)$ & $(0.0148)$ & $(0.0148)$ & $(0.0147)$ & $(0.0147)$ \\
$h_4$ & $0.0056$ & $0.0055$ & $0.0055$ & $0.0055$ & $0.0055$ & $0.0055$ \\
 & $-5.1916$ & $-5.2064$ & $-5.1959$ & $-5.1967$ & $-5.2005$ & $-5.2000$ \\
 & $(0.0153)$ & $(0.0152)$ & $(0.0153)$ & $(0.0152)$ & $(0.0152)$ & $(0.0153)$ \\
$h_5$ & $0.0043$ & $0.0043$ & $0.0043$ & $0.0043$ & $0.0043$ & $0.0043$ \\
 & $-5.4514$ & $-5.4475$ & $-5.4573$ & $-5.4497$ & $-5.4516$ & $-5.4524$ \\
 & $(0.0159)$ & $(0.0160)$ & $(0.0159)$ & $(0.0159)$ & $(0.0159)$ & $(0.0159)$ \\
$\dots$ & $0.0032$ & $0.0032$ & $0.0032$ & $0.0032$ & $0.0032$ & $0.0032$ \\
 & $-5.7421$ & $-5.7434$ & $-5.7466$ & $-5.7541$ & $-5.7476$ & $-5.7473$ \\
 & $(0.0175)$ & $(0.0174)$ & $(0.0175)$ & $(0.0174)$ & $(0.0174)$ & $(0.0174)$ \\
 & $0.0029$ & $0.0029$ & $0.0029$ & $0.0029$ & $0.0029$ & $0.0029$ \\
 & $-5.8404$ & $-5.8300$ & $-5.8397$ & $-5.8417$ & $-5.8400$ & $-5.8408$ \\
 & $(0.0179)$ & $(0.0179)$ & $(0.0179)$ & $(0.0179)$ & $(0.0179)$ & $(0.0178)$ \\
 & $0.0036$ & $0.0035$ & $0.0035$ & $0.0035$ & $0.0035$ & $0.0035$ \\
 & $-5.6388$ & $-5.6469$ & $-5.6515$ & $-5.6595$ & $-5.6510$ & $-5.6422$ \\
 & $(0.0163)$ & $(0.0162)$ & $(0.0162)$ & $(0.0162)$ & $(0.0162)$ & $(0.0162)$ \\
 & $0.0043$ & $0.0043$ & $0.0044$ & $0.0043$ & $0.0043$ & $0.0043$ \\
 & $-5.4548$ & $-5.4496$ & $-5.4374$ & $-5.4483$ & $-5.4442$ & $-5.4400$ \\
 & $(0.0152)$ & $(0.0153)$ & $(0.0154)$ & $(0.0153)$ & $(0.0153)$ & $(0.0153)$ \\
 & $0.0048$ & $0.0048$ & $0.0048$ & $0.0048$ & $0.0048$ & $0.0048$ \\
 & $-5.3385$ & $-5.3399$ & $-5.3390$ & $-5.3419$ & $-5.3383$ & $-5.3379$ \\
 & $(0.0150)$ & $(0.0150)$ & $(0.0150)$ & $(0.0150)$ & $(0.0150)$ & $(0.0150)$ \\
    \bottomrule
    \end{tabular}
\caption{Estimated parameters for Cotton. For each parameter: estimated value in the standard space, transformed in $\mathbb{R}$
and standard error between parenthesis (standard deviation of the estimate in $\mathbb{R}$).}
\label{tab:estimated_parameters_cotton}
\end{table}

\begin{table}[H]
  \centering
\footnotesize 
     \begin{tabular}{crrrrrr}
     \toprule
	&	Sinusoidal &	Exp-sinusoidal & Triangle &	Sawtooth & Spiked & Non-seasonal \\
    \midrule
$\lambda$ & $0.1318$ & $0.1325$ & $0.1322$ & $0.1325$ & $0.1342$ & $0.1319$ \\
 & $-2.0265$ & $-2.0213$ & $-2.0232$ & $-2.0215$ & $-2.0084$ & $-2.0260$ \\
 & $(0.0204)$ & $(0.0201)$ & $(0.0204)$ & $(0.0207)$ & $(0.0200)$ & $(0.0203)$ \\
$\pi^F$ & $2.3298$ & $2.3394$ & $2.3767$ & $2.3722$ & $2.4152$ & $2.4146$ \\
 & $2.3298$ & $2.3394$ & $2.3767$ & $2.3722$ & $2.4152$ & $2.4146$ \\
 & $(0.0693)$ & $(0.0629)$ & $(0.0571)$ & $(0.0718)$ & $(0.0942)$ & $(0.0905)$ \\
$v_0$ & $0.0996$ & $0.1011$ & $0.1000$ & $0.0949$ & $0.0998$ & $0.0983$ \\
 & $-2.3064$ & $-2.2912$ & $-2.3021$ & $-2.3550$ & $-2.3045$ & $-2.3197$ \\
 & $(0.0916)$ & $(0.0810)$ & $(0.1059)$ & $(0.1018)$ & $(0.0753)$ & $(0.0741)$ \\
$\kappa$ & $1.2492$ & $1.2604$ & $1.2148$ & $1.1479$ & $1.2370$ & $1.2092$ \\
 & $0.2225$ & $0.2315$ & $0.1946$ & $0.1379$ & $0.2127$ & $0.1899$ \\
 & $(0.0439)$ & $(0.0421)$ & $(0.0462)$ & $(0.0475)$ & $(0.0456)$ & $(0.0469)$ \\
$\sigma$ & $0.0722$ & $0.0755$ & $0.0774$ & $0.0482$ & $0.0754$ & $0.0472$ \\
 & $-2.6278$ & $-2.5830$ & $-2.5592$ & $-3.0319$ & $-2.5844$ & $-3.0537$ \\
 & $(0.2237)$ & $(0.2059)$ & $(0.5031)$ & $(0.5391)$ & $(0.2321)$ & $(0.4108)$ \\
$\rho$ & $0.3560$ & $0.3630$ & $0.2616$ & $0.3576$ & $0.3490$ & $0.3044$ \\
 & $0.6259$ & $0.6413$ & $0.4358$ & $0.6293$ & $0.6107$ & $0.5183$ \\
 & $(0.2808)$ & $(0.2623)$ & $(0.2111)$ & $(0.3636)$ & $(0.2495)$ & $(0.3691)$ \\
$a$ & $0.0328$ & $0.0262$ & $0.0113$ & $0.0111$ & $0.0080$ & $0.0324$ \\
 & $-3.4161$ & $-3.6403$ & $-4.4830$ & $-4.5008$ & $-4.8229$ & $-3.4311$ \\
 & $(0.0337)$ & $(0.0335)$ & $(0.1041)$ & $(0.1033)$ & $(0.1357)$ & $(0.0337)$ \\
$b$ & $0.0158$ & $0.9294$ & $0.1031$ & $0.0468$ & $0.1670$ & - \\
 & $-0.0595$ & $-0.0732$ & $-2.2722$ & $-3.0628$ & $-1.7896$ & - \\
 & $(1.2639)$ & $(0.0900)$ & $(0.0496)$ & $(0.0508)$ & $(0.0447)$ & - \\
$t_0$ & $0.4432$ & $0.4412$ & $0.4611$ & $0.6110$ & $0.4612$ & - \\
 & $-0.1803$ & $-0.1868$ & $-0.1228$ & $0.3636$ & $-0.1224$ & - \\
 & $(0.1177)$ & $(0.0795)$ & $(0.2176)$ & $(0.0016)$ & $(0.0894)$ & - \\
$h_1$ & $0.0082$ & $0.0082$ & $0.0082$ & $0.0082$ & $0.0081$ & $0.0078$ \\
 & $-4.8065$ & $-4.8034$ & $-4.8057$ & $-4.8046$ & $-4.8139$ & $-4.8542$ \\
 & $(0.0144)$ & $(0.0145)$ & $(0.0144)$ & $(0.0144)$ & $(0.0143)$ & $(0.0138)$ \\
$h_2$ & $0.0052$ & $0.0052$ & $0.0052$ & $0.0053$ & $0.0052$ & $0.0053$ \\
 & $-5.2612$ & $-5.2615$ & $-5.2613$ & $-5.2480$ & $-5.2519$ & $-5.2466$ \\
 & $(0.0141)$ & $(0.0141)$ & $(0.0141)$ & $(0.0142)$ & $(0.0143)$ & $(0.0143)$ \\
$h_3$ & $0.0042$ & $0.0042$ & $0.0042$ & $0.0043$ & $0.0043$ & $0.0042$ \\
 & $-5.4795$ & $-5.4731$ & $-5.4650$ & $-5.4580$ & $-5.4598$ & $-5.4655$ \\
 & $(0.0140)$ & $(0.0141)$ & $(0.0142)$ & $(0.0143)$ & $(0.0143)$ & $(0.0142)$ \\
$h_4$ & $0.0034$ & $0.0033$ & $0.0033$ & $0.0034$ & $0.0034$ & $0.0033$ \\
 & $-5.6864$ & $-5.7092$ & $-5.7025$ & $-5.6866$ & $-5.6926$ & $-5.7025$ \\
 & $(0.0145)$ & $(0.0142)$ & $(0.0142)$ & $(0.0145)$ & $(0.0146)$ & $(0.0144)$ \\
$h_5$ & $0.0026$ & $0.0025$ & $0.0025$ & $0.0026$ & $0.0025$ & $0.0025$ \\
 & $-5.9562$ & $-5.9794$ & $-5.9765$ & $-5.9526$ & $-6.0049$ & $-5.9877$ \\
 & $(0.0148)$ & $(0.0146)$ & $(0.0146)$ & $(0.0148)$ & $(0.0143)$ & $(0.0145)$ \\
$\dots$ & $0.0017$ & $0.0017$ & $0.0017$ & $0.0018$ & $0.0017$ & $0.0017$ \\
 & $-6.3640$ & $-6.3614$ & $-6.3615$ & $-6.3404$ & $-6.3535$ & $-6.3606$ \\
 & $(0.0155)$ & $(0.0156)$ & $(0.0155)$ & $(0.0157)$ & $(0.0158)$ & $(0.0156)$ \\
 & $0.0011$ & $0.0011$ & $0.0011$ & $0.0010$ & $0.0011$ & $0.0011$ \\
 & $-6.8551$ & $-6.8576$ & $-6.8548$ & $-6.8621$ & $-6.8447$ & $-6.8490$ \\
 & $(0.0193)$ & $(0.0193)$ & $(0.0193)$ & $(0.0193)$ & $(0.0194)$ & $(0.0193)$ \\
 & $0.0012$ & $0.0012$ & $0.0012$ & $0.0012$ & $0.0012$ & $0.0012$ \\
 & $-6.7450$ & $-6.7503$ & $-6.7512$ & $-6.7544$ & $-6.7433$ & $-6.7295$ \\
 & $(0.0178)$ & $(0.0178)$ & $(0.0178)$ & $(0.0179)$ & $(0.0178)$ & $(0.0178)$ \\
 & $0.0015$ & $0.0015$ & $0.0015$ & $0.0015$ & $0.0016$ & $0.0015$ \\
 & $-6.4903$ & $-6.4851$ & $-6.4874$ & $-6.4919$ & $-6.4665$ & $-6.4802$ \\
 & $(0.0158)$ & $(0.0158)$ & $(0.0158)$ & $(0.0158)$ & $(0.0160)$ & $(0.0159)$ \\
 & $0.0021$ & $0.0021$ & $0.0021$ & $0.0021$ & $0.0021$ & $0.0021$ \\
 & $-6.1731$ & $-6.1736$ & $-6.1795$ & $-6.1711$ & $-6.1699$ & $-6.1837$ \\
 & $(0.0152)$ & $(0.0152)$ & $(0.0151)$ & $(0.0153)$ & $(0.0152)$ & $(0.0151)$ \\
%
    \bottomrule
    \end{tabular}
\caption{Estimated parameters for Soybeans. For each parameter: estimated value in the standard space, transformed in $\mathbb{R}$
and standard error between parenthesis (standard deviation of the estimate in $\mathbb{R}$). Results for $h_{11},h_{12}$ and $h_{13}$ are not reported in the table. They are similar to the results obtained for $h_1,\dots,h_{10}$ and available upon request.}
\label{tab:estimated_parameters_soybeans}
\end{table}

\begin{table}[H]
  \centering
\footnotesize 
     \begin{tabular}{crrrrrr}
     \toprule
	&	Sinusoidal &	Exp-sinusoidal & Triangle &	Sawtooth & Spiked & Non-seasonal \\
    \midrule
$\lambda$ & $0.2794$ & $0.2818$ & $0.2782$ & $0.2835$ & $0.2794$ & $0.2803$ \\
 & $-1.2751$ & $-1.2667$ & $-1.2794$ & $-1.2607$ & $-1.2752$ & $-1.2718$ \\
 & $(0.0173)$ & $(0.0172)$ & $(0.0176)$ & $(0.0171)$ & $(0.0173)$ & $(0.0170)$ \\
$\pi^F$ & $0.8002$ & $0.4171$ & $0.3216$ & $0.4495$ & $0.4140$ & $0.2977$ \\
 & $0.8002$ & $0.4171$ & $0.3216$ & $0.4495$ & $0.4140$ & $0.2977$ \\
 & $(0.1410)$ & $(0.1292)$ & $(0.1229)$ & $(0.1321)$ & $(0.1411)$ & $(0.1524)$ \\
$v_0$ & $0.1009$ & $0.1854$ & $0.1884$ & $0.1964$ & $0.2147$ & $0.2889$ \\
 & $-2.2940$ & $-1.6851$ & $-1.6691$ & $-1.6276$ & $-1.5386$ & $-1.2417$ \\
 & $(0.5087)$ & $(0.1744)$ & $(0.1545)$ & $(0.1747)$ & $(0.1653)$ & $(0.1547)$ \\
$\kappa$ & $1.0072$ & $0.8496$ & $0.8528$ & $0.8530$ & $0.9226$ & $1.1732$ \\
 & $0.0072$ & $-0.1630$ & $-0.1592$ & $-0.1590$ & $-0.0806$ & $0.1597$ \\
 & $(0.0816)$ & $(0.0736)$ & $(0.0745)$ & $(0.0703)$ & $(0.0729)$ & $(0.0680)$ \\
$\sigma$ & $0.1600$ & $0.1344$ & $0.1423$ & $0.1797$ & $0.1763$ & $0.2723$ \\
 & $-1.8326$ & $-2.0070$ & $-1.9495$ & $-1.7167$ & $-1.7358$ & $-1.3010$ \\
 & $(0.0751)$ & $(0.0867)$ & $(0.0880)$ & $(0.0768)$ & $(0.0882)$ & $(0.0588)$ \\
$\rho$ & $0.8012$ & $0.7758$ & $0.7578$ & $0.5955$ & $0.5807$ & $0.5540$ \\
 & $3.0970$ & $2.7207$ & $2.4999$ & $1.3561$ & $1.2922$ & $1.1859$ \\
 & $(0.9103)$ & $(0.7390)$ & $(0.5857)$ & $(0.2113)$ & $(0.2040)$ & $(0.1436)$ \\
$a$ & $0.0996$ & $0.0837$ & $0.0615$ & $0.0430$ & $0.0596$ & $0.0958$ \\
 & $-2.3071$ & $-2.4808$ & $-2.7884$ & $-3.1470$ & $-2.8199$ & $-2.3457$ \\
 & $(0.0292)$ & $(0.0304)$ & $(0.0453)$ & $(0.0635)$ & $(0.0451)$ & $(0.0274)$ \\
$b$ & $0.0373$ & $0.5987$ & $0.1213$ & $0.0959$ & $0.1977$ & - \\
 & $-0.4164$ & $-0.5129$ & $-2.1099$ & $-2.3450$ & $-1.6211$ & - \\
 & $(0.7464)$ & $(0.1824)$ & $(0.0910)$ & $(0.0571)$ & $(0.0933)$ & - \\
$t_0$ & $0.4811$ & $0.6727$ & $0.6457$ & $0.8072$ & $0.6669$ & - \\
 & $-0.0595$ & $0.6027$ & $0.4928$ & $1.4438$ & $0.5783$ & - \\
 & $(0.4209)$ & $(0.2814)$ & $(0.6122)$ & $(0.2812)$ & $(0.1866)$ & - \\
$h_1$ & $0.0108$ & $0.0109$ & $0.0110$ & $0.0107$ & $0.0108$ & $0.0108$ \\
 & $-4.5252$ & $-4.5213$ & $-4.5128$ & $-4.5416$ & $-4.5253$ & $-4.5316$ \\
 & $(0.0145)$ & $(0.0145)$ & $(0.0146)$ & $(0.0144)$ & $(0.0145)$ & $(0.0144)$ \\
$h_2$ & $0.0073$ & $0.0074$ & $0.0074$ & $0.0072$ & $0.0073$ & $0.0073$ \\
 & $-4.9157$ & $-4.9100$ & $-4.9107$ & $-4.9318$ & $-4.9175$ & $-4.9144$ \\
 & $(0.0148)$ & $(0.0148)$ & $(0.0148)$ & $(0.0148)$ & $(0.0148)$ & $(0.0148)$ \\
$h_3$ & $0.0050$ & $0.0050$ & $0.0050$ & $0.0048$ & $0.0049$ & $0.0050$ \\
 & $-5.3079$ & $-5.2937$ & $-5.2914$ & $-5.3339$ & $-5.3094$ & $-5.3006$ \\
 & $(0.0156)$ & $(0.0158)$ & $(0.0158)$ & $(0.0157)$ & $(0.0157)$ & $(0.0158)$ \\
$h_4$ & $0.0032$ & $0.0032$ & $0.0032$ & $0.0032$ & $0.0032$ & $0.0032$ \\
 & $-5.7444$ & $-5.7371$ & $-5.7515$ & $-5.7553$ & $-5.7416$ & $-5.7530$ \\
 & $(0.0180)$ & $(0.0180)$ & $(0.0179)$ & $(0.0183)$ & $(0.0181)$ & $(0.0180)$ \\
$h_5$ & $0.0025$ & $0.0025$ & $0.0025$ & $0.0026$ & $0.0025$ & $0.0025$ \\
 & $-5.9922$ & $-5.9836$ & $-5.9814$ & $-5.9482$ & $-5.9726$ & $-5.9768$ \\
 & $(0.0203)$ & $(0.0203)$ & $(0.0203)$ & $(0.0199)$ & $(0.0202)$ & $(0.0202)$ \\
$\dots$ & $0.0030$ & $0.0031$ & $0.0030$ & $0.0031$ & $0.0031$ & $0.0030$ \\
 & $-5.7933$ & $-5.7857$ & $-5.8021$ & $-5.7607$ & $-5.7849$ & $-5.7942$ \\
 & $(0.0172)$ & $(0.0173)$ & $(0.0172)$ & $(0.0171)$ & $(0.0172)$ & $(0.0172)$ \\
 & $0.0041$ & $0.0041$ & $0.0041$ & $0.0041$ & $0.0041$ & $0.0041$ \\
 & $-5.4988$ & $-5.5054$ & $-5.5090$ & $-5.4907$ & $-5.5018$ & $-5.5077$ \\
 & $(0.0154)$ & $(0.0154)$ & $(0.0154)$ & $(0.0154)$ & $(0.0154)$ & $(0.0153)$ \\
    \bottomrule
    \end{tabular}
\caption{Estimated parameters for Sugar. For each parameter: estimated value in the standard space, transformed in $\mathbb{R}$
and standard error between parenthesis (standard deviation of the estimate in $\mathbb{R}$).}
\label{tab:estimated_parameters_sugar}
\end{table}

\begin{table}[H]
  \centering
\footnotesize 
     \begin{tabular}{crrrrrr}
     \toprule
	&	Sinusoidal &	Exp-sinusoidal & Triangle &	Sawtooth & Spiked & Non-seasonal \\
    \midrule
$\lambda$ & $0.2126$ & $0.2126$ & $0.2134$ & $0.2131$ & $0.2118$ & $0.2124$ \\
 & $-1.5481$ & $-1.5481$ & $-1.5446$ & $-1.5461$ & $-1.5523$ & $-1.5492$ \\
 & $(0.0132)$ & $(0.0132)$ & $(0.0131)$ & $(0.0132)$ & $(0.0133)$ & $(0.0132)$ \\
$\pi^F$ & $3.3976$ & $3.4084$ & $3.4329$ & $3.4834$ & $3.4930$ & $3.5206$ \\
 & $3.3976$ & $3.4084$ & $3.4329$ & $3.4834$ & $3.4930$ & $3.5206$ \\
 & $(0.0774)$ & $(0.0745)$ & $(0.0777)$ & $(0.0691)$ & $(0.0704)$ & $(0.0629)$ \\
$v_0$ & $0.1191$ & $0.1226$ & $0.1546$ & $0.1189$ & $0.1554$ & $0.2000$ \\
 & $-2.1277$ & $-2.0985$ & $-1.8670$ & $-2.1296$ & $-1.8617$ & $-1.6094$ \\
 & $(0.3249)$ & $(0.3054)$ & $(0.2577)$ & $(0.3299)$ & $(0.2492)$ & $(0.1880)$ \\
$\kappa$ & $0.4051$ & $0.3886$ & $0.4113$ & $0.3587$ & $0.3660$ & $0.3199$ \\
 & $-0.9037$ & $-0.9452$ & $-0.8884$ & $-1.0253$ & $-1.0052$ & $-1.1397$ \\
 & $(0.1478)$ & $(0.1538)$ & $(0.1499)$ & $(0.1365)$ & $(0.1528)$ & $(0.1239)$ \\
$\sigma$ & $0.0381$ & $0.0338$ & $0.0363$ & $0.0409$ & $0.0394$ & $0.0338$ \\
 & $-3.2667$ & $-3.3878$ & $-3.3163$ & $-3.1971$ & $-3.2342$ & $-3.3861$ \\
 & $(0.1192)$ & $(0.1359)$ & $(0.1320)$ & $(0.1068)$ & $(0.1214)$ & $(0.1096)$ \\
$\rho$ & $0.5622$ & $0.5583$ & $0.5296$ & $0.5096$ & $0.4950$ & $0.5466$ \\
 & $1.2175$ & $1.2024$ & $1.0975$ & $1.0307$ & $0.9843$ & $1.1583$ \\
 & $(0.3728)$ & $(0.3845)$ & $(0.3409)$ & $(0.2829)$ & $(0.2616)$ & $(0.3970)$ \\
$a$ & $0.0837$ & $0.0779$ & $0.0457$ & $0.0471$ & $0.0751$ & $0.0825$ \\
 & $-2.4810$ & $-2.5529$ & $-3.0865$ & $-3.0551$ & $-2.5893$ & $-2.4944$ \\
 & $(0.0362)$ & $(0.0373)$ & $(0.0662)$ & $(0.0683)$ & $(0.0431)$ & $(0.0379)$ \\
$b$ & $0.0634$ & $0.4556$ & $0.1459$ & $0.0812$ & $0.0798$ & - \\
 & $1.0543$ & $-0.7861$ & $-1.9246$ & $-2.5105$ & $-2.5281$ & - \\
 & $(1.2316)$ & $(0.3725)$ & $(0.0822)$ & $(0.0789)$ & $(0.2559)$ & - \\
$t_0$ & $0.3173$ & $0.3203$ & $0.3155$ & $0.3567$ & $0.3080$ & - \\
 & $-0.6465$ & $-0.6334$ & $-0.6547$ & $-0.4833$ & $-0.6890$ & - \\
 & $(0.2457)$ & $(0.3239)$ & $(0.4992)$ & $(0.1535)$ & $(0.3349)$ & - \\
$h_1$ & $0.0064$ & $0.0064$ & $0.0064$ & $0.0064$ & $0.0064$ & $0.0064$ \\
 & $-5.0450$ & $-5.0475$ & $-5.0455$ & $-5.0443$ & $-5.0507$ & $-5.0517$ \\
 & $(0.0143)$ & $(0.0142)$ & $(0.0143)$ & $(0.0143)$ & $(0.0142)$ & $(0.0141)$ \\
$h_2$ & $0.0047$ & $0.0048$ & $0.0047$ & $0.0047$ & $0.0048$ & $0.0047$ \\
 & $-5.3509$ & $-5.3464$ & $-5.3509$ & $-5.3528$ & $-5.3387$ & $-5.3496$ \\
 & $(0.0144)$ & $(0.0145)$ & $(0.0144)$ & $(0.0144)$ & $(0.0145)$ & $(0.0144)$ \\
$h_3$ & $0.0033$ & $0.0033$ & $0.0033$ & $0.0033$ & $0.0033$ & $0.0033$ \\
 & $-5.7060$ & $-5.7076$ & $-5.7138$ & $-5.7099$ & $-5.7034$ & $-5.7109$ \\
 & $(0.0151)$ & $(0.0151)$ & $(0.0150)$ & $(0.0151)$ & $(0.0151)$ & $(0.0150)$ \\
$h_4$ & $0.0019$ & $0.0019$ & $0.0019$ & $0.0019$ & $0.0019$ & $0.0019$ \\
 & $-6.2445$ & $-6.2442$ & $-6.2434$ & $-6.2428$ & $-6.2444$ & $-6.2427$ \\
 & $(0.0172)$ & $(0.0172)$ & $(0.0173)$ & $(0.0173)$ & $(0.0172)$ & $(0.0173)$ \\
$h_5$ & $0.0014$ & $0.0014$ & $0.0014$ & $0.0013$ & $0.0014$ & $0.0013$ \\
 & $-6.6055$ & $-6.6044$ & $-6.5981$ & $-6.6096$ & $-6.6034$ & $-6.6095$ \\
 & $(0.0216)$ & $(0.0216)$ & $(0.0215)$ & $(0.0216)$ & $(0.0216)$ & $(0.0216)$ \\
$\dots$ & $0.0018$ & $0.0018$ & $0.0018$ & $0.0018$ & $0.0018$ & $0.0018$ \\
 & $-6.3412$ & $-6.3449$ & $-6.3365$ & $-6.3361$ & $-6.3417$ & $-6.3374$ \\
 & $(0.0173)$ & $(0.0173)$ & $(0.0173)$ & $(0.0173)$ & $(0.0173)$ & $(0.0173)$ \\
 & $0.0028$ & $0.0028$ & $0.0028$ & $0.0028$ & $0.0028$ & $0.0028$ \\
 & $-5.8779$ & $-5.8787$ & $-5.8835$ & $-5.8858$ & $-5.8863$ & $-5.8789$ \\
 & $(0.0150)$ & $(0.0150)$ & $(0.0150)$ & $(0.0149)$ & $(0.0149)$ & $(0.0150)$ \\
 & $0.0035$ & $0.0034$ & $0.0035$ & $0.0035$ & $0.0035$ & $0.0035$ \\
 & $-5.6657$ & $-5.6699$ & $-5.6690$ & $-5.6635$ & $-5.6665$ & $-5.6678$ \\
 & $(0.0147)$ & $(0.0146)$ & $(0.0146)$ & $(0.0147)$ & $(0.0147)$ & $(0.0146)$ \\
 & $0.0040$ & $0.0041$ & $0.0040$ & $0.0041$ & $0.0040$ & $0.0040$ \\
 & $-5.5104$ & $-5.5077$ & $-5.5163$ & $-5.5075$ & $-5.5122$ & $-5.5113$ \\
 & $(0.0144)$ & $(0.0145)$ & $(0.0144)$ & $(0.0145)$ & $(0.0144)$ & $(0.0144)$ \\
 & $0.0066$ & $0.0066$ & $0.0066$ & $0.0066$ & $0.0066$ & $0.0066$ \\
 & $-5.0195$ & $-5.0191$ & $-5.0214$ & $-5.0197$ & $-5.0198$ & $-5.0241$ \\
 & $(0.0142)$ & $(0.0142)$ & $(0.0142)$ & $(0.0142)$ & $(0.0142)$ & $(0.0141)$ \\
    \bottomrule
    \end{tabular}
\caption{Estimated parameters for Wheat. For each parameter: estimated value in the standard space, transformed in $\mathbb{R}$
and standard error between parenthesis (standard deviation of the estimate in $\mathbb{R}$).}
\label{tab:estimated_parameters_wheat}
\end{table}

\subsection{An Alternative Estimation Procedure}
\label{ss:AlternativeEstimationProcedure}



In economic terms, using futures returns as we have done corresponds to the feasible strategy of holding each contract until its expiry and booking the daily returns.
However, this is strictly speaking not a ``constant-maturity'' series, as the actual time to expiry of a maturity series in our sample varies between two roll dates.
For example, on the roll date 17 July 2017, the new nearby corn contract is $60$ calendar days away from its maturity on 15 September 2017;
from then on, this time to maturity decreases with every day, until the contract expires.

An alternative is to construct an investable constant-maturity series via linear interpolation
between the returns of two contracts with times to expiry on either side of the constant maturity.
This approach has been proposed by \citet{Galai1979} for building an index for call options,
and by \citet{AlexanderKorovilas2013} in the context of VIX futures.
In order to test the robustness of our results, we have carried out an identical estimation procedure to the one described above in \ref{ss:MaximumLikelihoodEstimation},
based on constant-maturity returns calculated from our futures data using the equations (1) and (2) of \citet{AlexanderKorovilas2013}.

The new estimation results obtained with this alternative datasets are presented in Table \ref{tab:altdata_summary_results} in Appendix \ref{a:altdata_results}.
This table gathers, for each commodity and model, the obtained log-likelihood, the AIC, BIC, the value taken by the statistic $D1$ of the first likelihood ratio test,
its $p$-value as well as the AIC difference and Akaike weight.
In addition, it provides the log-likelihood obtained after estimating the nested versions of the considered models obtained when setting $\lambda=0$,
as well as the value taken by the statistic $D2$ of the second likelihood ratio test and its $p$-value.

In Table \ref{tab:altdata_models_ranking} in Appendix \ref{a:altdata_results}, we provide the model rankings for each commodity.
These rankings are based on the AIC differences described and discussed in Section \ref{ss:MaximumLikelihoodEstimation}.
Note that the likelihood scores obtained when estimating the models with the alternative dataset are higher for corn and sugar,
and lower for cotton, soybeans and wheat.

The main conclusions obtained above are confirmed by the results obtained with the alternative dataset.
For each commodity, the five seasonal models perform better than the non-seasonal one.
The associated likelihood ratio tests confirm that the need to model seasonality in volatility is statistically significant.
More specifically, the values of the statistic $D1$ are bigger than those obtained with the initial dataset.
In addition, the $\Delta_{aic}$ between the best seasonal model and the non-seasonal model are clearly higher than those obtained with the initial dataset.
These two remarks indicate that the alternative dataset contains a greater statistical evidence in favour of the seasonality of volatility.

In addition, the estimation with the alternative datasets clearly confirms the conclusion of a strong statistical evidence in favour of the Samuelson effect.

The model rankings obtained with the alternative dataset confirm the main conclusions obtained above.
The exp-sinusoidal specification is always ranked first or second and the non-seasonal one is always the last.
Importantly, the first and last ranked models remain the same for all five commodities.
The second best model also remains the same, except for sugar.

For corn, cotton, soybeans, and wheat, the new values obtained for the parameter $\lambda$ are higher than those obtained with the initial dataset.
In line with the intuition one can have that the Samuelson effect will be stronger in the alternative datasets.
For sugar, the new values of $\lambda$ are slightly lower than with the initial dataset.
We do not report the detailed list of estimated parameter values obtained with the alternative dataset for the sake of brevity, but it is available upon request.

The results obtained with the alternative dataset clearly confirm the results presented above;
in fact, the statistical evidence obtained in favour of a seasonal volatility is even stronger than with the initial dataset.
We therefore conclude that our results are robust with respect to which type of time series - concatenated or constant-maturity - we use.



\section{Conclusion}
\label{s:Conclusion}

We introduce a multi-factor stochastic volatility model for futures contracts that is capable of reproducing seasonality and the Samuelson effect simultaneously.
The model can accommodate general specifications of the seasonality functions, including piece-wise continuous ones.
We suggest five different seasonality functions, some of which are familiar from the literature, and provide details of how to incorporate these into the model.
Our model can be used both for option pricing in the risk-neutral measure, and filtering, smoothing and predicting via the Kalman filter in the physical measure.
In an empirical section, we estimate our model for time-series data from corn, cotton, soybean, sugar and wheat futures markets.
Our results confirm that it is important to account for seasonality in the volatility as well as for the Samuelson effect.

A seasonal pattern that seems to fit agricultural markets well in most cases is the exponential-sinusoidal one;
this pattern also shows robust behaviour during parameter estimation.
Some non-agricultural commodity markets also display seasonality in futures prices and their volatility:
natural gas, gasoline, heating oil and fuel oil futures markets.
Seasonality patterns in these markets, which can also be viewed through the term-structure of implied volatilities in option markets,
can be of a different type, and we conjecture that other seasonality patterns than the exponential-sinusoidal one could be better adapted to these markets.
This, however, is a question for future research.


\pagebreak

\appendix


\section{Proof of Proposition \ref{Prop:CIR_SDE_Solution}}
\label{a:ProofCIR}

In this appendix we prove Proposition \ref{Prop:CIR_SDE_Solution}.

\begin{MyProof}{of Proposition \ref{Prop:CIR_SDE_Solution}.}
\begin{enumerate}
\item
The drift function $b(t, v_t) := \kappa (\theta(t) - v(t))$ in \eqref{CIR_SDE_TimeDependentTheta} is Lipschitz continuous w.r.t the second argument, i.e.
$$
| b(t, x) - b(t, y) | \leq K | x - y |,
$$
where we can choose $K = \kappa$, since $| b(t, x) - b(t, y) | = \kappa | x - y |$.
Then Proposition 5.2.13 (Yamada and Watanabe) of \citet{KaratzasShreve1988} guarantees the existence of a unique strong solution to \eqref{CIR_SDE_TimeDependentTheta}
with continuous sample paths.
\item
The comparison result given in Proposition 5.2.18 of \citet{KaratzasShreve1988} establishes $v_t \geq \tilde{v}_t$ a.s. for all $t \geq 0$
under the hypothesis that the drift function $b(t, v_t)$ is continuous.
Now, if $\theta$ has a discontinuity at time $t_1$, we know from this argument applied to the interval $[0, t_1[$ that $\tilde{v}_{t} \leq v_{t} \forall t \in [0, t_1[$ (a.s.).
It then follows from the continuity of the sample paths that $\tilde{v}_{t_1} \leq v_{t_1}$ (a.s.),
and we can apply the argument again to the interval $]t_1, t_2[$ to obtain $\tilde{v}_{t} \leq v_{t} \forall t \in ]t_1, t_2[$ (a.s.).
Since by assumption the set $\mathcal{T}$ of times where $\theta$ has discontinuities has no limit points,
we can proceed in this manner to cover all of $\mathbb{R}_0^+$.
\item
The Feller condition $\sigma^2 < 2 \kappa \theta_{min}$ for $\theta_{min}$ implies the strict positivity a.s. of $\tilde{v}$.
The strict positivity of $v$ itself therefore follows immediately from (ii).
\end{enumerate}
\end{MyProof}


\section{Proofs of Proposition \ref{Prop:JointCharacteristicFunction}}
\label{a:ProofCF}

In this appendix we prove Proposition \ref{Prop:JointCharacteristicFunction}.

\begin{MyProof}{of Proposition \ref{Prop:JointCharacteristicFunction}.}
The proof is an extension of the proof of Proposition 2.1 of \citet{SchneiderTavin2018} to the case where the variance mean-reversion level $\theta$ is time-dependent.
Going from $\theta$ to $\theta(t)$ leads to changes in two places.
The first is in Lemma A.1 of \citet{SchneiderTavin2018}, which needs to be modified as follows.
\begin{lemma}
\label{Lemma:StochasticIntegral}
Let $\theta: \mathbb{R}_0^+ \to \mathbb{R}^+$ be the seasonal mean-reversion level function,
and let
$$
\hat{\theta}_T(\lambda) := \int_0^T e^{\lambda t} \theta(t) dt
$$
be its transform. Then
\begin{equation}
\label{StochasticIntegral}
\sigma \int_0^T f_1(t) \sqrt{v(t)} d\tilde{B}(t)
=
\left[ f_1(t) v(t) \right]_0^T - f_1(0) \kappa \hat{\theta}_T(\lambda) + (\kappa - \lambda) \int_0^T f_1(t) v(t) dt.
\end{equation}
\end{lemma}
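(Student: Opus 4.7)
The plan is to mimic the proof of Lemma A.1 in \citet{SchneiderTavin2018}, replacing the constant mean-reversion level $\theta$ by the time-dependent $\theta(t)$ throughout and tracking where this change propagates. The key structural fact to exploit is that the weight function $f_1$ takes the exponential form $f_1(t) = f_1(0) e^{\lambda t}$ (inherited from the maturity-dependent exponentials $e^{-\lambda(T_k - t)}$ summed over $k$), so that $f_1'(t) = \lambda f_1(t)$ and Itô's formula applied to the product $f_1(t) v(t)$ reduces to the ordinary deterministic-times-semimartingale product rule.

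Concretely, I would first write
$$d\bigl(f_1(t) v(t)\bigr) = \lambda f_1(t) v(t)\, dt + f_1(t)\, dv(t),$$
then substitute the SDE \eqref{CIR_SDE_TimeDependentTheta} for $dv(t)$ and integrate from $0$ to $T$. Rearranging to isolate the stochastic integral yields
$$\sigma \int_0^T f_1(t) \sqrt{v(t)}\, d\tilde{B}(t) = \bigl[f_1(t) v(t)\bigr]_0^T + (\kappa - \lambda)\int_0^T f_1(t) v(t)\, dt - \kappa \int_0^T f_1(t) \theta(t)\, dt.$$
This is exactly the original Schneider-Tavin identity except for the last integral, where in the constant-$\theta$ case one had $\kappa \theta \int_0^T f_1(t)\, dt$.

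The single genuinely new step is the identification of this last integral with a transform of $\theta$. Using $f_1(t) = f_1(0) e^{\lambda t}$, one has
$$\kappa \int_0^T f_1(t) \theta(t)\, dt = \kappa f_1(0) \int_0^T e^{\lambda t} \theta(t)\, dt = \kappa f_1(0)\, \hat{\theta}_T(\lambda),$$
by the definition \eqref{thetaTransform} of $\hat{\theta}_T$. Substituting gives the stated formula. I do not anticipate any serious obstacle: the stochastic integral on the left is well-defined because $f_1$ is deterministic and bounded on $[0,T]$ and $v$ has the usual CIR-type integrability (guaranteed by Proposition \ref{Prop:CIR_SDE_Solution} together with $\theta$ bounded above by $\theta_{\max}$), and the piecewise-continuity of $\theta$ presents no difficulty for the Lebesgue integral $\int_0^T e^{\lambda t} \theta(t)\, dt$. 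The whole argument is therefore essentially a one-line modification of the constant-$\theta$ calculation, made possible precisely because $\theta$ enters only through the deterministic drift term.
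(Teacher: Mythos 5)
Your proposal is correct and follows essentially the same route as the paper: the paper likewise applies Itô integration by parts to $f_1(t)v(t)$ (using $df_1(t)=\lambda f_1(t)\,dt$), substitutes the variance SDE, and uses $f_1(t)=f_1(0)e^{\lambda t}$ to identify $\kappa\int_0^T f_1(t)\theta(t)\,dt$ with $f_1(0)\kappa\hat{\theta}_T(\lambda)$. The only difference is presentational — you differentiate the product first and then integrate, while the paper integrates the SDE against $f_1$ and then integrates by parts — which is the same calculation.
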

\begin{proof}
Multiplying equation \eqref{VarianceSDE_Q} by $f_1(t)$ and then integrating from $0$ to $T$ gives
\begin{equation}
\label{dv-integral1}
\int_0^T f_1(t) dv(t) = \int_0^T f_1(t) \kappa (\theta(t) - v(t)) dt + \sigma \int_0^T f_1(t) \sqrt{v(t)} d\tilde{B}(t).
\end{equation}
Using It\^{o}-integration by parts (see \cite{Oksendal2003}), we also have
\begin{align}
\int_0^T f_1(t) dv(t)
&= \left[ f_1(t) v(t) \right]_0^T - \int_0^T v(t) df_1(t)
\nonumber
\\
&= \left[ f_1(t) v(t) \right]_0^T - \lambda \int_0^T f_1(t) v(t) dt.
\label{dv-integral2}
\end{align}
Equating the right hand sides of equations \eqref{dv-integral1} and \eqref{dv-integral2} gives
\begin{align*}
\sigma \int_0^T f_1(t) \sqrt{v(t)} d\tilde{B}(t)
&= \left[ f_1(t) v(t) \right]_0^T - \lambda \int_0^T f_1(t) v(t) dt - \int_0^T f_1(t) \kappa (\theta(t) - v(t)) dt
\\
&= \left[ f_1(t) v(t) \right]_0^T - \kappa \int_0^T f_1(t) \theta(t) dt + (\kappa - \lambda) \int_0^T f_1(t) v(t) dt
\\
&= \left[ f_1(t) v(t) \right]_0^T - f_1(0) \kappa \int_0^T e^{\lambda t} \theta(t) dt + (\kappa - \lambda) \int_0^T f_1(t) v(t) dt
\\
&= \left[ f_1(t) v(t) \right]_0^T - f_1(0) \kappa \hat{\theta}_T(\lambda) + (\kappa - \lambda) \int_0^T f_1(t) v(t) dt,
\end{align*}
which proves the lemma.
\end{proof}

The second change in the proof is due to the appearance of $\theta$ in the generator of the process $v$.
As in \citet{SchneiderTavin2018}, let the function $h$ be given by
\begin{equation*}
h(t,v) = {\mathbb E} \left[ \exp \left( i \frac{\rho}{\sigma} f_1(T) v(T) + \int_t^T q(s) v(s) ds \right) \right].
\end{equation*}
Now $h$ satisfies the PDE
\begin{equation}
\label{hPDE}
\frac{\partial h}{\partial t} (t,v)
+ \kappa (\theta(t) - v(t)) \frac{\partial h}{\partial v} (t,v)
+ \frac{1}{2} \sigma^2 v(t) \frac{\partial^2 h}{\partial v^2} (t,v)
+ q(t) v(t) h(t,v)
= 0,
\end{equation}
with terminal condition
\begin{equation*}
 h(T,v) = \exp \left( i \frac{\rho}{\sigma} f_1(T) v(T) \right).
\end{equation*}

Again, we know from \citet{DuffiePanSingleton2000} that $h$ has affine form
\begin{equation}
\label{hGuess}
h(t,v) = \exp \left( A(t,T) v(t) + B(t,T) \right),
\end{equation}
with $A(T,T) = i \frac{\rho}{\sigma} f_1(T), B(T,T) = 0.$
Putting \eqref{hGuess} in \eqref{hPDE} gives
\begin{equation*}
B_t + A_t v + \kappa (\theta(t) - v) A + \frac{1}{2} \sigma^2 v A^2 + q v = 0,
\end{equation*}
and collecting the terms with and without $v$ leads to the two ODEs
\begin{align}
\label{A_RiccatiEquation}
A_t - \kappa A + \frac{1}{2} \sigma^2 A^2 + q &= 0,
\\
\label{A_Primitive}
B_t + \kappa \theta(t) A &= 0.
\end{align}
This completes the proof of the proposition.
\end{MyProof}

Note that $\theta$ only appears in the second ODE \eqref{A_Primitive},
and that therefore the closed-form expression previously given for $A$ in \citet{SchneiderTavin2018} can still be used.
Only the function $B$ changes due to the time-dependence of $\theta$.


\section{Transforms of the Seasonality Functions}
\label{a:Transforms}

In this appendix we show how the integral transforms $\hat{\theta}_T(\lambda)$
can be calculated for the sinusoidal \eqref{pattern:sinusoidal}, sawtooth \eqref{pattern:sawtooth} and triangle \eqref{pattern:triangle} patterns.
To the best of our knowledge, there are no closed-form expressions for the transforms of the exponential-sinusoidal \eqref{pattern:exponential-sinusoidal}
and spiked \eqref{pattern:spiked} patterns.

\begin{proposition}
\label{Prop:TransformsSeasonalityFunctions}
The transform of the sinusoidal pattern \eqref{pattern:sinusoidal} is given by
\begin{align}
\hat{\theta}_T(\lambda)
&= \frac{b e^{\lambda T}}{\lambda^2+4\pi^2}\left(2 \pi \sin {\left(2 \pi (T-t_0)\right)} + \lambda \cos{\left(2 \pi (T-t_0)\right)} \right) \nonumber \\
& + \frac{b}{\lambda^2+4\pi^2}\left(2 \pi \sin{\left(2\pi t_0 \right) - \lambda \cos {\left(2\pi t_0 \right)}}  \right) + \frac{a}{\lambda}\left(e^{\lambda T}-1\right).
\label{transform:sinusoidal}
\end{align}

The transform of the sawtooth pattern pattern \eqref{pattern:sawtooth} is given by
\begin{align}
\hat{\theta}_T(\lambda)
&= \frac{1}{\lambda}\left(a+b\left(\frac{1}{\lambda}-t_0 \right) \right)-\frac{e^{-\lambda T}}{\lambda}\left(a+b\left(T+\frac{1}{\lambda}-t_0 \right) \right) \nonumber \\
& -\frac{be^{\lambda t_0}}{\lambda} \left(\left\lfloor T-t_0 \right\rfloor e^{\lambda(T-t_0)}
- \left(\sum^{\left\lfloor T-t_0 \right\rfloor}_{k=1}{e^{\lambda k}}\right)\mathbb{I}_{\left\{ T \geq t_0\right\}}
+ \mathbb{I}_{\left\{ T<t_0\right\}}+e^{-\lambda t_0}-1 \right),
\label{transform:sawtooth}
\end{align}
where $\left\lfloor . \right\rfloor$ denotes the floor function, $\mathbb{I}$ is the indicator function and,
by convention, $\sum^{p}_{k=1}{e^{\lambda k}}=0$ if $p<1$.

The transform of the triangle pattern pattern \eqref{pattern:triangle} is given by
\begin{align}
& \hat{\theta}_T(\lambda)
= \frac{a}{\lambda}\left(e^{\lambda T}-1 \right) + \frac{b e^{\lambda t_0}}{\lambda}\left[ \left(z_2 + \left(\frac{2}{\lambda}e^{-\frac{\lambda}{2}}+e^{-\lambda t_0}(z_2-t_0) \right)\mathbb{I}_{\left\{ t_0 > \frac{1}{2} \right\}} - e^{-\lambda t_0}(z_2-t_0)\mathbb{I}_{\left\{ t_0 \leq \frac{1}{2} \right\}} \right)\right. \nonumber \\
& + \left(\left(\frac{2}{\lambda} e^{\frac{\lambda}{2}} + z_2 e^{\lambda}-z_1 \right)\sum^{n-1}_{k=0}{e^{\lambda k}}+ e^{\lambda n}\left(\left(\frac{2}{\lambda}e^{\frac{\lambda}{2}}-z_3 e^{\lambda \alpha} \right)\mathbb{I}_{\left\{ \alpha > \frac{1}{2} \right\}} + z_3 e^{\lambda \alpha}\mathbb{I}_{\left\{ \alpha \leq \frac{1}{2} \right\}} -z_1 \right) \right) \mathbb{I}_{\left\{ T \geq t_0\right\}} \nonumber \\
& + \left(e^{\lambda(T-t_0)}\left(z_2+T-t_0 \right)-z_2 \right)\mathbb{I}_{\left\{T-t_0 \in [-\frac{1}{2},0[ \right\}} \nonumber \\
& - \left. \left(\frac{2}{\lambda}e^{-\frac{\lambda}{2}} + e^{\lambda(T-t_0)}\left(z_2+T-t_0 \right) + z_2 \right)\mathbb{I}_{\left\{T-t_0 \in [-1,-\frac{1}{2}[ \right\}} \right],
\label{transform:triangle}
\end{align}
with $n = \left\lfloor T-t_0 \right\rfloor$, $\alpha = T- t_0 - \left\lfloor T-t_0 \right\rfloor$,
$z_1 = \frac{1}{2} + \frac{1}{\lambda}$, $z_2 = \frac{1}{2} - \frac{1}{\lambda}$ and $z_3 = z_1 - \alpha$,
and with the convention $\sum^{p}_{k=0}{e^{\lambda k}}=0$ if $p<0$.
\end{proposition}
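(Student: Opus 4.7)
The plan is to compute each of the three transforms by direct integration, exploiting the specific structure of each seasonality pattern. For the sinusoidal case \eqref{pattern:sinusoidal}, I would split $\hat{\theta}_T(\lambda)$ into two pieces: $a\int_0^T e^{\lambda t}\,dt = \tfrac{a}{\lambda}(e^{\lambda T}-1)$ for the constant term, and $b\int_0^T \cos(2\pi(t-t_0))e^{\lambda t}\,dt$ for the oscillatory term. The latter is standard: either apply integration by parts twice to obtain the reduction $\int e^{\lambda t}\cos(\omega t + \varphi)\,dt = \tfrac{e^{\lambda t}}{\lambda^2+\omega^2}\bigl(\lambda \cos(\omega t + \varphi) + \omega \sin(\omega t + \varphi)\bigr)$ with $\omega = 2\pi$ and $\varphi = -2\pi t_0$, or equivalently write $\cos(2\pi(t-t_0)) = \tfrac{1}{2}(e^{2\pi i(t-t_0)} + e^{-2\pi i(t-t_0)})$ and integrate two exponentials. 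Evaluating at the endpoints then yields \eqref{transform:sinusoidal} after collecting real and imaginary parts.

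For the sawtooth pattern \eqref{pattern:sawtooth}, I would partition $[0,T]$ at the discontinuities of the floor function, which sit at the points $t = t_0 + k$ for integers $k$ with $0 < t_0 + k < T$. On each subinterval the fractional part $t - t_0 - \lfloor t-t_0\rfloor$ is affine in $t$, so $\int (a + b(t - t_0 - k))e^{\lambda t}\,dt$ is computed by a single integration by parts. The key step is to sum the contributions from the $\lfloor T - t_0\rfloor + 1$ subintervals (with appropriate modification when $T < t_0$). Pulling out $e^{\lambda t_0}$ and substituting $u = t-t_0$ organizes the algebra so that the interior contributions telescope and the remaining sum of $e^{\lambda k}$ terms emerges naturally, while the boundary terms at $t=0$ and $t=T$ produce the two leading rational pieces in \eqref{transform:sawtooth}. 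The indicators $\mathbb{I}_{\{T\geq t_0\}}$ and $\mathbb{I}_{\{T < t_0\}}$ simply record whether any internal break point is actually crossed.

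The triangle pattern \eqref{pattern:triangle} is the most involved, and I expect this to be the main obstacle. Here the integrand has kinks at \emph{two} grids: at $t = t_0 + k$ where the fractional part resets, and at $t = t_0 + k + \tfrac{1}{2}$ where the absolute value switches sign. On each half-period the function is again affine, so integration by parts on each subinterval is routine; the difficulty is purely combinatorial. My strategy would be first to handle the generic interior case, where $[t_0, t_0 + \lfloor T-t_0\rfloor]$ consists of $n = \lfloor T-t_0\rfloor$ complete unit periods, each contributing an identical quantity that can be pulled out of the geometric sum $\sum_{k=0}^{n-1} e^{\lambda k}$. I would then add corrections for the two partial intervals at the boundaries. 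The interval $[0,t_0]$ lies within one unit period but may or may not contain the kink at $t_0 - \tfrac{1}{2}$, which produces the indicators $\mathbb{I}_{\{t_0 > 1/2\}}$ and $\mathbb{I}_{\{t_0 \leq 1/2\}}$; the interval $[t_0 + n, T]$ lies in the next unit period and splits into three cases according to whether the trailing length $\alpha = T - t_0 - n$ is less than, equal to, or greater than $\tfrac{1}{2}$, and whether $T-t_0$ is negative. Introducing the auxiliary constants $z_1 = \tfrac{1}{2} + \tfrac{1}{\lambda}$, $z_2 = \tfrac{1}{2} - \tfrac{1}{\lambda}$, $z_3 = z_1 - \alpha$ at the outset lets every evaluated antiderivative be written in a uniform way, after which matching signs and exponential prefactors against \eqref{transform:triangle} is a careful but mechanical check. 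The correctness of the formula can additionally be verified in limiting cases ($T \to t_0^+$, $T \to t_0 + 1$, $t_0 = \tfrac{1}{2}$) where the integral collapses to a single explicit expression.
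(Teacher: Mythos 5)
Your plan is correct and coincides with the paper's approach: the sinusoidal transform is obtained by the standard direct integration you describe, and the sawtooth and triangle transforms by exactly the kind of piecewise integration over unit (resp.\ half-unit) subintervals with geometric summation and boundary corrections that you outline. Note that the paper itself declares the sinusoidal case straightforward and omits the sawtooth and triangle computations as ``tedious,'' so your proposal matches the intended (unpublished) derivation; the only remaining work is the mechanical bookkeeping you already identify.
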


\begin{proof}
The proof of \eqref{transform:sinusoidal} is straightforward.
The proofs of \eqref{transform:sawtooth} and \eqref{transform:triangle} are tedious and omitted here for brevity, but available from the authors upon request.
\end{proof}


\section{Estimation Results with Alternative Dataset}
\label{a:altdata_results}

\begin{table}[H]
  \centering
\footnotesize 
     \begin{tabular}{lrrrrrr}
     \toprule
	& Sinusoidal & Exp-sinusoidal & Triangle & Sawtooth & Spiked & Non-seasonal \\
    \midrule
\multicolumn{4}{l}{\textit{Corn:}} \\
LL & $103307.96$ & $103319.56$ & $103302.36$ & $103298.78$ & $103319.26$ & $103279.41$ \\
AIC & $-206577.92$ & $-206601.11$ & $-206566.71$ & $-206559.57$ & $-206600.53$ & $-206524.83$ \\
BIC & $-206467.04$ & $-206490.23$ & $-206455.83$ & $-206448.68$ & $-206489.64$ & $-206425.62$ \\
D1 (LR test 1) & $57.09$ & $80.28$ & $45.88$ & $38.74$ & $79.7$ & $-$ \\
p-value (test 1) & $0.0000$ & $0.0000$ & $0.0000$ & $0.0000$ & $0.0000$ & $-$ \\
$\Delta_{aic}$ & $23.19$ & $0.0000$ & $34.4$ & $41.54$ & $0.58$ & $76.28$ \\
$\omega_i$ & $0.0000$ & $0.5422$ & $0.0000$ & $0.0000$ & $0.4578$ & $0.0000$ \\
LL w/o $\lambda$ & $100449.77$ & $100468.94$ & $100451.29$ & $100465.52$ & $100463.55$ & $100344.6$ \\
D2 (LR test 2) & $5716.38$ & $5701.24$ & $5702.13$ & $5666.53$ & $5711.43$ & $5869.62$ \\
p-value (test 2) & $0.0000$ & $0.0000$ & $0.0000$ & $0.0000$ & $0.0000$ & $0.0000$ \\

\multicolumn{4}{l}{\textit{Cotton:}} \\
LL & $90447.33$ & $90453.91$ & $90444.4$ & $90436.31$ & $90461.58$ & $90420.07$ \\
AIC & $-180856.66$ & $-180869.83$ & $-180850.8$ & $-180834.61$ & $-180885.15$ & $-180806.15$ \\
BIC & $-180745.78$ & $-180758.95$ & $-180739.93$ & $-180723.74$ & $-180774.28$ & $-180706.94$ \\
D1 (LR test 1) & $54.51$ & $67.68$ & $48.66$ & $32.46$ & $83$ & $-$ \\
p-value (test 1) & $0.0000$ & $0.0000$ & $0.0000$ & $0.0000$ & $0.0000$ & $-$ \\
$\Delta_{aic}$ & $28.49$ & $15.32$ & $34.35$ & $50.54$ & $0.0000$ & $79$ \\
$\omega_i$ & $0.0000$ & $0.0000$ & $0.0000$ & $0.0000$ & $1$ & $0.0000$ \\
LL w/o $\lambda$ & $89507.22$ & $89503.73$ & $89502.67$ & $89499.2$ & $89519.33$ & $89486.45$ \\
D2 (LR test 2) & $1880.22$ & $1900.38$ & $1883.46$ & $1874.22$ & $1884.5$ & $1867.25$ \\
p-value (test 2) & $0.0000$ & $0.0000$ & $0.0000$ & $0.0000$ & $0.0000$ & $0.0000$ \\

\multicolumn{4}{l}{\textit{Soybeans:}} \\
LL & $139908.14$ & $139917.14$ & $139914.31$ & $139904.93$ & $139937.82$ & $139891.53$ \\
AIC & $-279772.29$ & $-279790.29$ & $-279784.61$ & $-279765.85$ & $-279831.64$ & $-279743.07$ \\
BIC & $-279643.89$ & $-279661.9$ & $-279656.22$ & $-279637.46$ & $-279703.25$ & $-279626.35$ \\
D1 (LR test 1) & $33.22$ & $51.22$ & $45.55$ & $26.78$ & $92.57$ & $-$ \\
p-value (test 1) & $0.0000$ & $0.0000$ & $0.0000$ & $0.0000$ & $0.0000$ & $-$ \\
$\Delta_{aic}$ & $59.35$ & $41.35$ & $47.03$ & $65.79$ & $0.0000$ & $88.57$ \\
$\omega_i$ & $0.0000$ & $0.0000$ & $0.0000$ & $0.0000$ & $1$ & $0.0000$ \\
LL w/o $\lambda$ & $138765$ & $138769.34$ & $138770.59$ & $138771.53$ & $138794.35$ & $138746.48$ \\
D2 (LR test 2) & $2286.28$ & $2295.6$ & $2287.43$ & $2266.79$ & $2286.94$ & $2290.12$ \\
p-value (test 2) & $0.0000$ & $0.0000$ & $0.0000$ & $0.0000$ & $0.0000$ & $0.0000$ \\

\multicolumn{4}{l}{\textit{Sugar:}} \\
LL & $68720.5$ & $68723.13$ & $68715.48$ & $68721.87$ & $68719.02$ & $68698.83$ \\
AIC & $-137409$ & $-137414.26$ & $-137398.96$ & $-137411.74$ & $-137406.05$ & $-137369.67$ \\
BIC & $-137315.64$ & $-137320.89$ & $-137305.59$ & $-137318.37$ & $-137312.68$ & $-137287.97$ \\
D1 (LR test 1) & $43.34$ & $48.59$ & $33.29$ & $46.07$ & $40.38$ & $-$ \\
p-value (test 1) & $0.0000$ & $0.0000$ & $0.0000$ & $0.0000$ & $0.0000$ & $-$ \\
$\Delta_{aic}$ & $5.25$ & $0.0000$ & $15.3$ & $2.52$ & $8.21$ & $44.59$ \\
$\omega_i$ & $0.0000$ & $0.9597$ & $0.0000$ & $0.0402$ & $0.0000$ & $0.0000$ \\
LL w/o $\lambda$ & $67334.5$ & $67334.66$ & $67333.47$ & $67330.44$ & $67338.26$ & $67323.57$ \\
D2 (LR test 2) & $2772.01$ & $2776.94$ & $2764.02$ & $2782.86$ & $2761.52$ & $2750.54$ \\
p-value (test 2) & $0.0000$ & $0.0000$ & $0.0000$ & $0.0000$ & $0.0000$ & $0.0000$ \\

\multicolumn{4}{l}{\textit{Wheat:}} \\
LL & $100968.54$ & $100964.04$ & $100961.88$ & $100961.87$ & $100961.6$ & $100952.36$ \\
AIC & $-201899.07$ & $-201890.07$ & $-201885.77$ & $-201885.74$ & $-201885.19$ & $-201870.72$ \\
BIC & $-201788.19$ & $-201779.19$ & $-201774.88$ & $-201774.86$ & $-201774.31$ & $-201771.51$ \\
D1 (LR test 1) & $32.35$ & $23.35$ & $19.05$ & $19.02$ & $18.47$ & $-$ \\
p-value (test 1) & $0.0000$ & $0.0000$ & $0.0001$ & $0.0001$ & $0.0001$ & $-$ \\
$\Delta_{aic}$ & $0.0000$ & $9$ & $13.31$ & $13.33$ & $13.88$ & $28.35$ \\
$\omega_i$ & $1$ & $0.0000$ & $0.0000$ & $0.0000$ & $0.0000$ & $0.0000$ \\
LL w/o $\lambda$ & $98247.1$ & $98248.85$ & $98252.2$ & $98244.83$ & $98252.33$ & $98230.53$ \\
D2 (LR test 2) & $5442.87$ & $5430.37$ & $5419.36$ & $5434.08$ & $5418.53$ & $5443.65$ \\
p-value (test 2) & $0.0000$ & $0.0000$ & $0.0000$ & $0.0000$ & $0.0000$ & $0.0000$ \\
    \bottomrule
    \end{tabular}
\caption{Summary of the results obtained with the considered models estimated with the alternative dataset described in Section \ref{ss:AlternativeEstimationProcedure}:
log-likelihood, AIC, BIC, Statistics D1 and D2 of the likelihood ratio tests and their $p$-values, AIC differences $\Delta_{aic}$ and Akaike weights $\omega_i$.}
\label{tab:altdata_summary_results}
\end{table}

\begin{table}[H]
  \centering
     \begin{tabular}{ccccccccccc}
     \toprule
\multicolumn{3}{c}{\textit{Corn:}} & & \multicolumn{3}{c}{\textit{Cotton:}} & & \multicolumn{3}{c}{\textit{Soybeans:}} \\
\cmidrule(lr){1-3} \cmidrule(lr){5-7} \cmidrule(lr){9-11}
rank & $\Delta_{aic}$ & model & & rank & $\Delta_{aic}$ & model & & rank & $\Delta_{aic}$ & model \\
\cmidrule(lr){1-3} \cmidrule(lr){5-7} \cmidrule(lr){9-11}
$1$ & $0$ & Exp-sinusoidal & & $1$ & $0$ & Spiked & & $1$ & $0$ & Spiked  \\
$2$ & $0.58$ & Spiked & & $2$ & $15.32$ & Exp-sinusoidal & & $2$ & $41.35$ & Exp-sinusoidal  \\
$3$ & $23.19$ & Sinusoidal & & $3$ & $28.49$ & Sinusoidal & & $3$ & $47.03$ & Triangle  \\
$4$ & $34.4$ & Triangle & & $4$ & $34.35$ & Triangle & & $4$ & $59.35$ & Sinusoidal  \\
$5$ & $41.54$ & Sawtooth & & $5$ & $50.54$ & Sawtooth & & $5$ & $65.79$ & Sawtooth  \\
$6$ & $76.28$ & Non-seasonal & & $6$ & $79$ & Non-seasonal & & $6$ & $88.57$ & Non-seasonal  \\ \\
\multicolumn{3}{c}{\textit{Sugar:}} & & \multicolumn{3}{c}{\textit{Wheat:}} & & & & \\
\cmidrule(lr){1-3} \cmidrule(lr){5-7}
rank & $\Delta_{aic}$ & model & & rank & $\Delta_{aic}$ & model & &  &  & \\
\cmidrule(lr){1-3} \cmidrule(lr){5-7}
$1$ & $0$ & Exp-sinusoidal & & $1$ & $0$ & Sinusoidal & & & & \\
$2$ & $2.52$ & Sawtooth & & $2$ & $9$ & Exp-sinusoidal & & & & \\
$3$ & $5.25$ & Sinusoidal & & $3$ & $13.31$ & Triangle & & & & \\
$4$ & $8.21$ & Spiked & & $4$ & $13.33$ & Sawtooth & & & & \\
$5$ & $15.3$ & Triangle & & $5$ & $13.88$ & Spiked & & & & \\
$6$ & $44.59$ & Non-seasonal & & $6$ & $28.35$ & Non-seasonal & & & & \\
    \bottomrule
    \end{tabular}
\caption{Model rankings for each commodity.
Models are estimated with the alternative dataset described in Section \ref{ss:AlternativeEstimationProcedure}.
The rankings are based on AIC.
For each model, the difference with respect to the smallest AIC, $\Delta_{aic}$, is provided along with the model rank and name.}
\label{tab:altdata_models_ranking}
\end{table}


\bibliographystyle{plainnat}

\bibliography{articles,books,websites}

\end{document}